\documentclass[12pt,a4paper,english,american]{article}
\usepackage{amsmath}
\usepackage{amssymb}

\makeatletter
\newcommand{\lyxaddress}[1]{
\par {\raggedright #1
\vspace{1.4em}
\noindent\par}
}

\usepackage{amsthm}
\usepackage{mathrsfs}

\addtolength{\textwidth}{4em}
\addtolength{\hoffset}{-2em}
\addtolength{\textheight}{10ex}
\addtolength{\voffset}{-6ex}

\newtheorem{thm}{Theorem}
\newtheorem{prop}[thm]{Proposition}
\newtheorem{lem}[thm]{Lemma}

\theoremstyle{remark}
\newtheorem*{rem*}{Remark}
\newtheorem*{rems*}{Remarks}

\theoremstyle{definition}
\newtheorem{rem}[thm]{Remark}
\newtheorem{define}[thm]{Definition}
\newtheorem*{def*}{Definition}

\newcommand{\sH}{\mathscr{H}}
\newcommand{\sL}{\mathscr{L}}
\newcommand{\sI}{\mathscr{I}}
\newcommand{\sF}{\mathscr{F}}

\newcommand{\sD}{\mathscr{D}}

\renewcommand{\Re}{\mathop\mathrm{Re}\nolimits}

\newcommand{\Dom}{\mathop\mathrm{Dom}\nolimits}
\newcommand{\supp}{\mathop\mathrm{supp}\nolimits}
\newcommand{\Tr}{\mathop\mathrm{Tr}\nolimits}

\newcommand{\dist}{\mathop\mathrm{dist}\nolimits}
\newcommand{\End}{\mathop\mathrm{End}\nolimits}
\newcommand{\Lin}{\mathop\mathrm{Lin}\nolimits}
\newcommand{\spec}{\mathop\mathrm{spec}\nolimits}

\makeatother

\usepackage{babel}

\begin{document}

\title{Noncommutative Bloch analysis of Bochner Laplacians with nonvanishing
gauge fields}

\author{P.~Ko\v{s}\v{t}\'akov\'a, P.~\v{S}\v{t}ov\'\i\v{c}ek}

\date{{}}

\maketitle

\lyxaddress{Department of Mathematics, Faculty of Nuclear Science,
  Czech Technical University in Prague, Trojanova13, 12000 Praha,
  Czech Republic}
\begin{abstract}
  \noindent Given an invariant gauge potential and a periodic scalar
  potential $\tilde{V}$ on a Riemannian manifold $\tilde{M}$ with a
  discrete symmetry group $\Gamma$, consider a $\Gamma$-periodic
  quantum Hamiltonian
  $\tilde{H}=-\tilde{\Delta}_{\mathrm{B}}+\tilde{V}$ where
  $\tilde{\Delta}_{\mathrm{B}}$ is the Bochner Laplacian. Both the
  gauge group and the symmetry group $\Gamma$ can be noncommutative,
  and the gauge field need not vanish. On the other hand, $\Gamma$ is
  supposed to be of type~I. With any unitary representation $\Lambda$
  of $\Gamma$ one associates a Hamiltonian
  $H^{\Lambda}=-\Delta_{\mathrm{B}}^{\Lambda}+V$ on
  $M=\tilde{M}/\Gamma$ where $V$ is the projection of $\tilde{V}$ to
  $M$. We describe a construction of the Bloch decomposition of
  $\tilde{H}$ into a direct integral whose components are
  $H^{\Lambda}$, with $\Lambda$ running over the dual space
  $\hat{\Gamma}$. The evolution operator and the resolvent decompose
  correspondingly. Conversely, given $\Lambda\in\hat{\Gamma}$, one can
  express the propagator $\mathcal{K}_{t}^{\Lambda}(y_{1},y_{2})$ (the
  kernel of $\exp(-itH^{\Lambda})$) in terms of the propagator
  $\tilde{\mathcal{K}}_{t}(y_{1},y_{2})$ (the kernel of
  $\exp(-it\tilde{H})$) as a weighted sum over $\Gamma$. Such a
  formula is known in theoretical physics for the case when the gauge
  field vanishes and $\tilde{M}$ is a universal covering space of a
  multiply connected manifold $M$.  We show that these constructions
  are mutually inverse. Analogous formulas exist for resolvents and
  their kernels (Green functions) as well.
\end{abstract}
\vskip\baselineskip\noindent \emph{Keywords}: Bloch analysis, periodic
Hamiltonian, covering space, propagator, Green function\\
\noindent \emph{2000 Mathematical Subject Classification}: 81Q70,
58J70, 53C80, 35J10

\section{Introduction}

Suppose there is given a connected Riemannian manifold $\tilde{M}$
with a discrete symmetry group $\Gamma$, and an invariant Hermitian
vector fiber bundle with connection
$(\tilde{\mathfrak{V}},\tilde{\mathfrak{h}},\tilde{\nabla})$ over
$\tilde{M}$. This means that
$(\tilde{\mathfrak{V}},\tilde{\mathfrak{h}},\tilde{\nabla})%
=\pi^{\ast}(\mathfrak{V},\mathfrak{h},\nabla)$ where
$(\mathfrak{V},\mathfrak{h},\nabla)$ is a Hermitian vector fiber
bundle with connection over $M=\tilde{M}/\Gamma$ and
$\pi:\tilde{M}\to{}M$ is the projection. Let us consider a
$\Gamma$-periodic Hamilton operator in $L^{2}(\tilde{\mathfrak{V}})$
of the form $\tilde{H}=-\tilde{\Delta}_{\mathrm{B}}+\tilde{V}$ where
$\tilde{\Delta}_{\mathrm{B}}$ is the Bochner Laplacian and $\tilde{V}$
is a $\Gamma$-invariant semibounded real function on $\tilde{M}$
($\tilde{V}$ is the pull-back of a function $V$ on $M$). For any
finite-dimensional unitary representation $\Lambda$ of $\Gamma$ in a
vector space $\sL_{\Lambda}$ one constructs a Hermitian vector fiber
bundle $\mathfrak{V}^{\Lambda}$ over $M$ with a connection
$\nabla^{\Lambda}$, and consequently a Hamiltonian
$H^{\Lambda}=-\Delta_{\mathrm{B}}^{\Lambda}+V$ in
$L^{2}(\mathfrak{V}^{\Lambda})$.  An important feature of the
construction is that the operator $\tilde{H}$ decomposes into a direct
integral with components $H^{\Lambda}$ where $\Lambda$ runs over all
equivalence classes of irreducible unitary representations of
$\Gamma$. The evolution operator and the resolvent decompose
correspondingly.

This type of construction is well known in the cases when either the
connection $\nabla$ is flat (and so
$\Delta_{\mathrm{B}}=\Delta_{\mathrm{LB}}$ is the Laplace-Beltrami
operator) \cite{sunada,adachisunada,adachisunadasy,lledopost}, or the
group $\Gamma$ is commutative \cite{aschetal,gruber1}. The general
case when the connection is not flat and the symmetry group $\Gamma$
need not be commutative is treated in \cite{gruber2} in the framework
of noncommutative geometry using the theory of $C^{\ast}$ algebras.
The subject of the current paper is, too, an extension of the
construction of the Bloch decomposition to such a general case.  In
contrast to \cite{gruber2}, however, our construction relies on
standard techniques of differential geometry and, in particular, the
theory of $C^{\ast}$ algebras is not employed at all. We follow
closely the presentation given in \cite{kocabovastovicek} for the
particular case of a flat connection and a noncommutative symmetry
group $\Gamma$.

The action of $\Gamma$ on $\tilde{M}$ is usually supposed to be
co-compact
\cite{sunada,adachisunada,adachisunadasy,bruningsunsada,aschetal,gruber1,%
  lledopost}. Following \cite{kocabovastovicek} we relax this
assumption while introducing the Hamiltonian
$-\tilde{\Delta}_{\mathrm{B}}+\tilde{V}$ as the Friedrichs extension
of the corresponding symmetric operator defined on smooth sections
with compact supports.

In the framework of Feynman path integrals there was derived a
remarkable formula relating the propagators
$\mathcal{K}_{t}^{\Lambda}(x,x_{0})$ and
$\tilde{\mathcal{K}}_{t}(x,x_{0})$ associated respectively with the
Hamiltonians $H^{\Lambda}$ and $\tilde{H}$ \cite{schulman1,schulman2}.
A similar relation is known to hold for heat kernels \cite{atiyah}.
Another important application in mathematics of this type of formula
is in the derivation of the Selberg trace formula
\cite{selberg,hejhal}.  We show that such a formula makes sense also
in the more general case with a nonvanishing gauge field. Moreover, an
analogous formula can be derived for Green functions.

\section{Basic notions and notation \label{sec:periodic_H}}

All geometric objects are supposed to be smooth. Let $\tilde{M}$ be a
connected Riemannian manifold (Hausdorff and second countable) with a
discrete and at most countable symmetry group $\Gamma$. The action of
$\Gamma$ on $\tilde{M}$ is assumed to be smooth (every
$\gamma\in\Gamma$ acts as a diffeomorphism on $\tilde{M}$), free,
isometric and properly discontinuous. Denote by $\tilde{\mu}$ the
measure on $\tilde{M}$ induced by the Riemannian metric. The quotient
$M=\tilde{M}/\Gamma$ is a connected Riemannian manifold with an
induced measure $\mu$. The factorization defines a principal fiber
bundle $\pi:\tilde{M}\to M$ with the structure group $\Gamma$. All
$L^{2}$ spaces based on manifolds $M$ and $\tilde{M}$ are everywhere
understood with the measures $\mu$ and $\tilde{\mu}$, respectively.

To be specific, let us recall that the assumption on the properly
discontinuous action implies that for any compact set
$K\subset\tilde{M}$, the intersection $K\cap\gamma\cdot K$ is nonempty
for at most finitely many elements $\gamma\in\Gamma$. Moreover, any
point $y\in\tilde{M}$ has a neighborhood $U$ such that the sets
$\gamma\cdot U$, $\gamma\in\Gamma$, are mutually disjoint (see, for
instance, \cite{lee}).

Furthermore, assume that on $\tilde{M}$ there is given a
$\Gamma$-invariant gauge potential. Geometrically this means that over
$\tilde{M}$ there is given an invariant Hermitian vector fiber bundle
with connection (covariant derivative)
$(\tilde{\mathfrak{V}},\tilde{\mathfrak{h}},\tilde{\nabla})$. That is,
the action of $\Gamma$ on $\tilde{M}$ lifts to an action $\Psi$ on
$\tilde{\mathfrak{V}}$ which is fiber-wise linear, and both the
Hermitian product on fibers, $\tilde{\mathfrak{h}}$, and the covariant
derivative $\tilde{\nabla}$ are invariant with respect to the action
$\Psi$.

Let us denote by $L_{\gamma}$ the left action of $\gamma\in\Gamma$ on
$\tilde{M}$, i.e.\ $L_{\gamma}(y)=\gamma\cdot y$ for $y\in\tilde{M}$,
and by $C^{\infty}(\tilde{\mathfrak{V}})$ the vector space of smooth
sections of $\tilde{\mathfrak{V}}$. Notice that the invariance of
$\tilde{\nabla}$ means that for any $\gamma\in\Gamma$ fixed and at any
point $y\in\tilde{M}$,
\begin{equation}
  \forall\varphi\in{}C^{\infty}(\tilde{\mathfrak{V}}),
  \forall X\in{}T_{y}\tilde{M},\quad
  \tilde{\nabla}_{X}\left(\Psi_{\gamma^{-1}}
    \varphi(\gamma\cdot y)\right)
  =\Psi_{\gamma^{-1}}\!\left(\tilde{\nabla}_{\gamma\cdot X}\,
    \varphi(\gamma\cdot y)\right)
  \in\tilde{\mathfrak{V}}_{y},\label{eq:inv_nabla_tild}
\end{equation}
where $\gamma\cdot{}X\equiv(\mbox{d}L_{\gamma})_{y}X\in%
T_{\gamma\cdot{}y}\tilde{M}$. This property can be reformulated as
follows. Let $W_{\gamma}$, $\gamma\in\Gamma$, be the one-parameter
family of linear operators on $C^{\infty}(\tilde{\mathfrak{V}})$
defined by
\begin{equation}
  \forall\varphi\in C^{\infty}(\tilde{\mathfrak{V}}),\quad
  (W_{\gamma}\varphi)(y)=\Psi_{\gamma}\varphi(\gamma^{-1}\cdot y).
  \label{eq:def_Us}
\end{equation}
One clearly has $W_{\gamma}W_{\gamma'}=W_{\gamma\gamma'}$,
$\forall\gamma,\gamma'\in\Gamma$. Relation (\ref{eq:inv_nabla_tild})
means that for all smooth vector fields $\xi\in
C^{\infty}(T\tilde{M})$,
\[
\forall\gamma\in\Gamma,\;\tilde{\nabla}_{\xi}W_{\gamma}
=W_{\gamma}\tilde{\nabla}_{\gamma^{-1}\cdot\xi}\,.
\]

The assumptions mean that
$(\tilde{\mathfrak{V}},\tilde{\mathfrak{h}},\tilde{\nabla})$ admits a
factorization with respect to the action of $\Gamma$, and thus over
$M$ there exists a Hermitian vector fiber bundle with connection,
$(\mathfrak{V},\mathfrak{h},\nabla)$, such that
$(\tilde{\mathfrak{V}},\tilde{\mathfrak{h}},\tilde{\nabla})%
=\pi^{\ast}(\mathfrak{V},\mathfrak{h},\nabla)$ (the usual pull-back by
the projection $\pi:\tilde{M}\to M$). Conversely, any such a pull-back
is naturally $\Gamma$-invariant. The vector space
$C^{\infty}(\mathfrak{V})$ is identified with the subspace in
$C^{\infty}(\tilde{\mathfrak{V}})$ formed by those smooth sections
$\varphi$ which satisfy $W_{\gamma}\varphi=\varphi$,
$\forall\gamma\in\Gamma$; in more detail,
\begin{equation}
  \forall\gamma\in\Gamma,
  \forall y\in\tilde{M},\;\varphi(\gamma\cdot y)
  =\Psi_{\gamma}\varphi(y).
  \label{eq:sec_equivar}
\end{equation}
Let $\xi\in C^{\infty}(TM)$ be a vector field on $M$ and
$\tilde{\xi}\in C^{\infty}(T\tilde{M})$ be the unique vector field on
$\tilde{M}$ such that $\textrm{d}\pi(\tilde{\xi})=\xi$ (hence
$\tilde{\xi}$ is $\Gamma$-invariant). If $\varphi\in
C^{\infty}(\tilde{\mathfrak{V}})$ fulfills (\ref{eq:sec_equivar}) then
$\tilde{\nabla}_{\tilde{\xi}}\varphi$ fulfills (\ref{eq:sec_equivar})
as well. This defines the covariant derivative $\nabla_{\xi}$ in
$\mathfrak{V}$.

The Bochner Laplacian $\tilde{\Delta}_{\mathrm{B}}$ is a second-order
differential operator acting on smooth sections of $\tilde{\mathfrak{V}}$
whose construction depends on the covariant derivative $\tilde{\nabla}$
and on the Riemannian metric $\tilde{\mathfrak{g}}$ defined on cotangent
spaces on $\tilde{M}$. If $\varphi\in C^{\infty}(\tilde{\mathfrak{V}})$
then $\tilde{\nabla}\varphi$ belongs to $C^{\infty}(T^{\ast}\tilde{M}\otimes\tilde{\mathfrak{V}})$.
The differential operator $\tilde{\Delta}_{\mathrm{B}}$ is unambiguously
determined by the equality \begin{equation}
\forall\varphi_{1},\varphi_{2}\in C_{0}^{\infty}(\tilde{\mathfrak{V}}),\quad\int_{\tilde{M}}\tilde{\mathfrak{h}}(\varphi_{1},-\tilde{\Delta}_{\mathrm{B}}\varphi_{2})\,\mathrm{d}\tilde{\mu}=\int_{\tilde{M}}\tilde{\mathfrak{g}}\otimes\tilde{\mathfrak{h}}(\tilde{\nabla}\varphi_{1},\tilde{\nabla}\varphi_{2})\,\mathrm{d}\tilde{\mu}.\label{eq:BochnerLtilde}\end{equation}
The Bochner Laplacian $\tilde{\Delta}_{\mathrm{B}}$ is $\Gamma$-invariant
in the sense that, using the defining relation (\ref{eq:def_Us}),
\[
\forall\gamma\in\Gamma,\;\tilde{\Delta}_{\mathrm{B}}W_{\gamma}
=W_{\gamma}\tilde{\Delta}_{\mathrm{B}}.
\]

Analogously, one introduces the Bochner Laplacian $\Delta_{\mathrm{B}}$
on $M$ which is associated with the covariant derivative $\nabla$
and with the Riemannian metric $\mathfrak{g}$. If $\sigma\in C^{\infty}(\mathfrak{V})$
is represented by $\varphi\in C^{\infty}(\tilde{\mathfrak{V}})$ fulfilling
(\ref{eq:sec_equivar}) then $\Delta_{\mathrm{B}}\sigma$ is represented
by $\tilde{\Delta}_{\mathrm{B}}\varphi$.

Finally we summarize several basic facts concerning harmonic analysis
on $\Gamma$. In general, the harmonic analysis is well established
for locally compact groups of type~I \cite{shtern}, and this is
why we assume in the sequel that $\Gamma$ belongs to this class.
In addition, all irreducible representations of type~I groups are
finite-dimensional, and even the dimension is uniformly bounded \cite[Korollar~I]{thoma}.
This fact facilitates various algebraic constructions throughout the
paper. On the other hand, it is known that a countable discrete group
is of type~I if and only if it has an Abelian normal subgroup of
finite index \cite[Satz~6]{thoma}. This means, unfortunately, that
there exist covering spaces of interest whose structure group $\Gamma$
is not of type~I, and here we do not treat such cases.

Let $\hat{\Gamma}$ be the dual space to $\Gamma$ (the quotient space
of the space of irreducible unitary representations of $\Gamma$). In
the case in question the Haar measure on $\Gamma$ is nothing but the
counting measure. Let $\hat{m}$ be the Plancherel measure on
$\hat{\Gamma}$. For $\Lambda\in\hat{\Gamma}$ denote by
$\sI(\sL_{\Lambda})$ the space of linear maps on $\sI_{\Lambda}$.
Note that one can naturally identify
$\sI(\sL_{\Lambda})\equiv\sL_{\Lambda}\otimes\sL_{\Lambda}^{\ast}$
($\sL_{\Lambda}^{\ast}$ is the dual space to $\sL_{\Lambda}$) and thus
$\sI(\sL_{\Lambda})$ becomes equipped with the scalar product
$\langle{}A_{1},A_{2}\rangle=\Tr(A_{1}^{\,\ast}A_{2})$,
$\forall{}A_{1},A_{2}\in\sI(\sL_{\Lambda})$.

The Fourier transform is defined as a unitary mapping \[
\sF:L^{2}(\Gamma)\to\int_{\hat{\Gamma}}^{\oplus}\sI(\sL_{\Lambda})\,\mathrm{d}\hat{m}(\Lambda).\]
For $f\in L^{1}(\Gamma)\subset L^{2}(\Gamma)$ one has \[
\sF[f](\Lambda)=\sum_{\gamma\in\Gamma}f(\gamma)\Lambda(\gamma).\]
Conversely, if $f$ is of the form $f=g\ast h$ (the convolution)
where $g,h\in L^{1}(\Gamma)$, and $\hat{f}=\sF[f]$ then \[
f(s)=\int_{\hat{\Gamma}}\Tr[\Lambda(s)^{\ast}\hat{f}(\Lambda)]\,\mathrm{d}\hat{m}(\Lambda).\]

Using the fact that $\Gamma$ is a countable discrete group as well
as the unitarity of the Fourier transform one finds that \[
\hat{m}(\hat{\Gamma})\leq\int_{\hat{\Gamma}}\dim\sL_{\Lambda}\,\mathrm{d}\hat{m}(\Lambda)=1.\]
The following rules satisfied by the Fourier transformation are also
of importance: \begin{equation}
\forall r\in\Gamma,\forall f\in L^{2}(\Gamma),\,\,\sF[f(r\cdot\gamma)](\Lambda)=\Lambda(r^{-1})\sF[f(\gamma)](\Lambda)\label{eq:Fourier_rule}\end{equation}
(here $\sF$ acts in the variable $\gamma\in\Gamma$), and, conversely,\begin{equation}
\forall r\in\Gamma,\forall\hat{f}\in\int_{\hat{\Gamma}}^{\oplus}\sI(\sL_{\Lambda})\,\mathrm{d}\hat{m}(\Lambda),\;\sF^{-1}[\Lambda(r)\hat{f}(\Lambda)](\gamma)=\sF^{-1}[\hat{f}(\Lambda)](r^{-1}\gamma)\label{eq:Fourier_rule_inv}\end{equation}
(here $\sF^{-1}$ acts in the variable $\Lambda\in\hat{\Gamma}$).

\section{A construction of the noncommutative Bloch decomposition \label{sec:bloch}}

\subsection{Associated vector fiber bundles over $M$ \label{sec:assoc_vect_bundles}}

Since $\dim\tilde{M}=\dim M$, in the principal fiber bundle $\tilde{M}$
over $M$ with the structure group $\Gamma$ there exists a unique
connection which is necessarily flat. Given a finite-dimensional unitary
representation $\Lambda$ of $\Gamma$ in $\sL_{\Lambda}$ one can
associate with the principal fiber bundle a vector fiber bundle $E(\Lambda)$
over $M$ with a typical fiber $\sL_{\Lambda}$ \cite{kobayashinomizu}.
Provided the representation is unitary $E(\Lambda)$ naturally acquires
a Hermitian structure. The flat connection in $\tilde{M}$ carries
over to the vector fiber bundle $E(\Lambda)$ as a Hermitian covariant
derivative which is again flat.

Suppose $(\mathfrak{V}_{j},\mathfrak{h_{j}},\nabla_{j})$, $j=1,2$,
are two Hermitian vector fiber bundles with connection over $M$.
The tensor product $\mathfrak{V}_{1}\otimes\mathfrak{V}_{2}$ is a
vector fiber bundle over $M$ with fibers $(\mathfrak{V}_{1}\otimes\mathfrak{V}_{2})_{x}=(\mathfrak{V}_{1})_{x}\otimes(\mathfrak{V}_{2})_{x}$,
$x\in M$, and it is again equipped with a Hermitian structure in
a canonical way. Moreover, a Hermitian connection $\nabla_{12}$ is
naturally defined in $\mathfrak{V}_{1}\otimes\mathfrak{V}_{2}$ by
the rule: for any vector field $\xi\in C^{\infty}(TM)$,\[
\forall\varphi_{1}\in C^{\infty}(\mathfrak{V}_{1}),\varphi_{2}\in C^{\infty}(\mathfrak{V}_{2}),\ \nabla_{12}(\xi)\varphi_{1}\otimes\varphi_{2}=(\nabla_{1}(\xi)\varphi_{1})\otimes\varphi_{2}+\varphi_{1}\otimes(\nabla_{2}(\xi)\varphi_{2}).\]
\begin{define} Let $(\mathfrak{V}^{\Lambda},\mathfrak{h}^{\Lambda},\nabla^{\Lambda})$
be the Hermitian vector fiber bundle with connection over $M$ obtained
as the tensor product of $(\mathfrak{V},\mathfrak{h},\nabla)$ and
the associated vector fiber bundle $E(\Lambda)$ (which is supposed
to be equipped with the Hermitian structure and the Hermitian covariant
derivative, as recalled above). \end{define}

Though the construction of associated fiber bundles is standard let
us indicate some intermediate objects occurring in the construction
for the sake of future reference. Denote by $(\tilde{\mathfrak{V}}^{\Lambda},\tilde{\mathfrak{h}}^{\Lambda},\tilde{\nabla}^{\Lambda})$
the Hermitian vector fiber bundle with connection over $\tilde{M}$
with fibers $\tilde{\mathfrak{V}}_{y}^{\Lambda}=\tilde{\mathfrak{V}}_{y}\otimes\sL_{\Lambda}$,
$y\in\tilde{M}$. The Hermitian product on $\tilde{\mathfrak{V}}_{y}^{\Lambda}$
is defined in the usual way. The covariant derivative $\tilde{\nabla}^{\Lambda}$
is defined so that for all $\varphi\in C^{\infty}(\tilde{\mathfrak{V}})$
and $v\in\sL_{\Lambda}$ one has\[
\tilde{\nabla}^{\Lambda}\varphi\otimes v=(\tilde{\nabla}\varphi)\otimes v\in C^{\infty}(T^{\ast}\tilde{M}\otimes\tilde{\mathfrak{V}}\otimes\sL_{\Lambda}).\]

Let $\Psi^{\Lambda}$ be the action of $\Gamma$ on
$\tilde{\mathfrak{V}}^{\Lambda}=\tilde{\mathfrak{V}}\otimes\sL_{\Lambda}$
defined by
\[
\Psi_{\gamma}^{\Lambda}=\Psi_{\gamma}\otimes\Lambda(\gamma),
\textrm{~}\forall\gamma\in\Gamma.
\]
Moreover, analogously to (\ref{eq:def_Us}) one introduces a
one-parameter family of linear operators on
$C^{\infty}(\tilde{\mathfrak{V}}^{\Lambda})$ called
$W_{\gamma}^{\Lambda}$, $\gamma\in\Gamma$, i.e.\ one puts
\[
W_{\gamma}^{\Lambda}=W_{\gamma}\otimes\Lambda(\gamma),\;\gamma\in\Gamma.
\]
Observe that, with this definition,
$(\tilde{\mathfrak{V}}^{\Lambda},\tilde{\mathfrak{h}}^{\Lambda},%
\tilde{\nabla}^{\Lambda})$ is again $\Gamma$-invariant. Furthermore,
very similarly to (\ref{eq:BochnerLtilde}), one introduces the Bochner
Laplacian $\tilde{\Delta}_{\mathrm{B}}^{\Lambda}$ as a differential
operator acting on smooth sections of
$\tilde{\mathfrak{V}}^{\Lambda}$, and one readily finds that
\[
\forall\varphi\in C^{\infty}(\tilde{\mathfrak{V}}),
\forall v\in\sL_{\Lambda},
\quad\tilde{\Delta}_{\mathrm{B}}^{\Lambda}\,\varphi\otimes v
=(\tilde{\Delta}_{\mathrm{B}}\varphi)\otimes v.
\]
Note that the Bochner Laplacian
$\tilde{\Delta}_{\mathrm{B}}^{\Lambda}$ commutes with all
$W_{\gamma}^{\Lambda}$, $\gamma\in\Gamma$.

$(\mathfrak{V}^{\Lambda},\mathfrak{h}^{\Lambda},\nabla^{\Lambda})$
is in fact nothing but the factorization of $(\tilde{\mathfrak{V}}^{\Lambda},\tilde{\mathfrak{h}}^{\Lambda},\tilde{\nabla}^{\Lambda})$
with respect to the action $\Psi^{\Lambda}$ of $\Gamma$. Again one
has $(\tilde{\mathfrak{V}}^{\Lambda},\tilde{\mathfrak{h}}^{\Lambda},\tilde{\nabla}^{\Lambda})=\pi^{\ast}(\mathfrak{V}^{\Lambda},\mathfrak{h}^{\Lambda},\nabla^{\Lambda})$.
It is convenient to identify smooth (or measurable) sections $\psi$
of $\mathfrak{V}^{\Lambda}$ with smooth (measurable) sections $\varphi$
of $\tilde{\mathfrak{V}}^{\Lambda}$ fulfilling\begin{equation}
\forall\gamma\in\Gamma,\:\varphi(\gamma\cdot y)=\Psi_{\gamma}^{\Lambda}\varphi(y)\label{eq:sec_Lambda_equiv}\end{equation}
everywhere (or almost everywhere) on $\tilde{M}$. In that case we
say that $\varphi$ is an equivariant section.

The Bochner Laplacian $\Delta_{\mathrm{B}}^{\Lambda}$ associated
with $\nabla^{\Lambda}$ and $\mathfrak{g}$ is introduced on $M$
in the standard way, similarly to (\ref{eq:BochnerLtilde}).

We conclude this subsection with an auxiliary construction.

\begin{define}
  Let us define
  $\Phi^{\Lambda}:C_{0}^{\infty}(\tilde{\mathfrak{V}}^{\Lambda})\to%
  C_{0}^{\infty}(\mathfrak{V}^{\Lambda})$ so that
  $\forall\sigma\in{}C_{0}^{\infty}(\tilde{\mathfrak{V}}^{\Lambda})$,
  $\Phi^{\Lambda}\sigma$ is represented by the series
  \begin{equation}
    \varphi=\sum_{\gamma\in\Gamma}W_{\gamma}^{\Lambda}\sigma
    \in C_{0}^{\infty}(\tilde{\mathfrak{V}}^{\Lambda}),\ \mbox{i.e.\ \ }
    \varphi(y)=\sum_{\gamma\in\Gamma}\Psi_{\gamma}^{\Lambda}
    \sigma(\gamma^{-1}\cdot y).
    \label{eq:PhiLamda_def}
  \end{equation}
\end{define}

\begin{rem*}
  Note that $\varphi$ is in fact a smooth section of
  $\tilde{\mathfrak{V}}^{\Lambda}$ for the action of $\Gamma$ is
  properly discontinuous. Moreover, $\varphi$ fulfills
  (\ref{eq:sec_Lambda_equiv}) and thus it represents a smooth section
  of $\mathfrak{V}^{\Lambda}$ whose support is contained in
  $\pi(\supp\sigma)$ and so is compact.
\end{rem*}

\begin{lem}
  \label{thm:PhiL_surject}
  The range of $\Phi^{\Lambda}$ is equal to the whole space
  $C_{0}^{\infty}(\mathfrak{V}^{\Lambda})$.
\end{lem}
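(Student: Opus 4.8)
The plan is to prove surjectivity by exhibiting an explicit preimage for an arbitrary $\psi\in C_{0}^{\infty}(\mathfrak{V}^{\Lambda})$. First I would represent $\psi$ by its equivariant lift $\varphi\in C^{\infty}(\tilde{\mathfrak{V}}^{\Lambda})$, i.e.\ the smooth section of $\tilde{\mathfrak{V}}^{\Lambda}$ satisfying (\ref{eq:sec_Lambda_equiv}). Writing $K=\supp\psi$, which is compact in $M$, one has $\supp\varphi=\pi^{-1}(K)$; this set is $\Gamma$-invariant but in general noncompact. The whole point is therefore to "cut out" a compactly supported piece $\sigma$ of $\varphi$ whose $\Gamma$-periodization (\ref{eq:PhiLamda_def}) reproduces $\varphi$ exactly on $\pi^{-1}(K)$, and this is achieved by multiplying $\varphi$ by a suitable nonnegative scalar cutoff function.

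The core construction is that of a function $\chi\in C_{0}^{\infty}(\tilde{M})$, $\chi\ge 0$, with the localized partition-of-unity property $\sum_{\gamma\in\Gamma}\chi(\gamma^{-1}\cdot y)=1$ for all $y\in\pi^{-1}(K)$. To produce it I would use that $\pi\colon\tilde{M}\to M$ is a covering map: as recalled in Section~\ref{sec:periodic_H}, the properly discontinuous action guarantees that every point of $M$ has an evenly covered neighborhood, and by compactness finitely many such neighborhoods $V_{1},\dots,V_{n}$ cover $K$. Choose a finite partition of unity $\rho_{1},\dots,\rho_{n}$ on $M$ with $\supp\rho_{i}\subset V_{i}$ and $\sum_{i}\rho_{i}=1$ on $K$, lift each $\rho_{i}$ to $\tilde{\rho}_{i}\in C_{0}^{\infty}(\tilde{M})$ supported in a single sheet $\tilde{V}_{i}$ over $V_{i}$ (on which $\pi$ restricts to a diffeomorphism onto $V_{i}$), and set $\chi=\sum_{i}\tilde{\rho}_{i}$. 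Since the translates $\gamma\cdot\tilde{V}_{i}$, $\gamma\in\Gamma$, are mutually disjoint, for each $i$ and each $y$ at most one point $\gamma^{-1}\cdot y$ lies in $\tilde{V}_{i}$, and there $\tilde{\rho}_{i}(\gamma^{-1}\cdot y)=\rho_{i}(\pi(y))$; summing over $\gamma$ and then over $i$ yields $\sum_{\gamma\in\Gamma}\chi(\gamma^{-1}\cdot y)=\sum_{i}\rho_{i}(\pi(y))$, which equals $1$ precisely when $\pi(y)\in K$. The sum over $\gamma$ is locally finite by proper discontinuity, so $\chi$ is indeed smooth with compact support.

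With $\chi$ in hand I would set $\sigma(y)=\chi(y)\,\varphi(y)$, so that $\sigma\in C_{0}^{\infty}(\tilde{\mathfrak{V}}^{\Lambda})$. Using the definitions of $W_{\gamma}^{\Lambda}$ and $\Psi_{\gamma}^{\Lambda}$ together with the equivariance (\ref{eq:sec_Lambda_equiv}) of $\varphi$, one computes $(W_{\gamma}^{\Lambda}\sigma)(y)=\chi(\gamma^{-1}\cdot y)\,\Psi_{\gamma}^{\Lambda}\varphi(\gamma^{-1}\cdot y)=\chi(\gamma^{-1}\cdot y)\,\varphi(y)$, whence the representative of $\Phi^{\Lambda}\sigma$ from (\ref{eq:PhiLamda_def}) is $\bigl(\sum_{\gamma\in\Gamma}\chi(\gamma^{-1}\cdot y)\bigr)\varphi(y)$. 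On $\pi^{-1}(K)=\supp\varphi$ the bracket equals $1$, and off $\pi^{-1}(K)$ the factor $\varphi(y)$ vanishes; in both cases the value is $\varphi(y)$. Hence $\Phi^{\Lambda}\sigma=\psi$, which proves that $\Phi^{\Lambda}$ is onto $C_{0}^{\infty}(\mathfrak{V}^{\Lambda})$.

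The step I expect to require the most care is the construction of $\chi$. It is essential that the partition-of-unity identity is imposed only on $\pi^{-1}(\supp\psi)$ and not on all of $\tilde{M}$: a compactly supported $\chi$ with $\sum_{\gamma}\chi\circ L_{\gamma^{-1}}\equiv 1$ everywhere would force $M$ to be compact, i.e.\ the co-compactness hypothesis that is deliberately avoided here. Checking the identity $\sum_{\gamma\in\Gamma}\tilde{\rho}_{i}(\gamma^{-1}\cdot y)=\rho_{i}(\pi(y))$ — including the case $\pi(y)\notin V_{i}$ — via the evenly covered neighborhoods and the disjointness of sheets is the only genuinely nonroutine bookkeeping; the remaining manipulations are immediate from the definitions of $W_{\gamma}^{\Lambda}$, $\Psi_{\gamma}^{\Lambda}$ and (\ref{eq:sec_Lambda_equiv}).
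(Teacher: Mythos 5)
Your proof is correct. The only point that deserves scrutiny is the construction of the cutoff $\chi$, and your argument there is sound: each $\tilde{\rho}_{i}$ is smooth because its (compact) support sits inside the single sheet $\tilde{V}_{i}$, the disjointness of the translates $\gamma\cdot\tilde{V}_{i}$ gives $\sum_{\gamma\in\Gamma}\tilde{\rho}_{i}(\gamma^{-1}\cdot y)=\rho_{i}(\pi(y))$ in all cases (including $\pi(y)\notin V_{i}$), and the equivariance computation $(W_{\gamma}^{\Lambda}\sigma)(y)=\chi(\gamma^{-1}\cdot y)\varphi(y)$ then does the rest, with the dichotomy $y\in\pi^{-1}(\supp\psi)$ versus $\varphi(y)=0$ closing the argument. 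Your route differs from the paper's in how the cutting is organized: the paper assumes, ``without loss of generality,'' that $\supp\psi$ lies inside a single neighborhood $U$ over which $\pi$ is trivial, takes a local section $\eta:U\to\tilde{M}$, and defines $\sigma$ as the equivariant lift $\varphi$ restricted to the sheet $\eta(U)$ and extended by zero, concluding because two equivariant sections that agree on $\eta(U)$ agree everywhere; the implicit justification of that WLOG is precisely a finite cover by evenly covered neighborhoods plus a partition of unity and the linearity of $\Phi^{\Lambda}$. You instead package that partition of unity into a single smooth function $\chi$ with $\sum_{\gamma}\chi(\gamma^{-1}\cdot y)=1$ over $\pi^{-1}(\supp\psi)$ and exhibit one explicit preimage $\sigma=\chi\varphi$ valid for an arbitrary $\psi$, with no reduction step. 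The paper's version is shorter and leans directly on local triviality with a sharp one-sheet cutoff; yours is more self-contained, since it makes explicit the decomposition the paper leaves hidden in its WLOG, and your closing remark correctly identifies why the identity $\sum_{\gamma}\chi\circ L_{\gamma^{-1}}\equiv1$ can only be demanded over the compact set $\pi^{-1}(\supp\psi)$ in the non-co-compact setting considered here.
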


\begin{proof}
  Given $\psi\in C_{0}^{\infty}(\mathfrak{V}^{\Lambda})$ one can
  assume, without loss of generality, that there exists an open
  neighborhood $U\supset\supp\psi$ such that the principal fiber
  bundle $\pi:\tilde{M}\to M$ is trivial over $U$. Let
  $\eta:U\to\eta(U)\subset\tilde{M}$ be a smooth section, and
  $\varphi\in C^{\infty}(\tilde{\mathfrak{V}}^{\Lambda})$ be the
  equivariant section (i.e.\ fulfilling (\ref{eq:sec_Lambda_equiv}))
  representing $\psi$. Then the restriction $\varphi|_{\eta(U)}$ is a
  smooth section of $\tilde{\mathfrak{V}}^{\Lambda}$ over the open set
  $\eta(U)$ with a compact support, and it extends naturally to a
  global section
  $\sigma\in{}C_{0}^{\infty}(\tilde{\mathfrak{V}}^{\Lambda})$
  (vanishing outside of $\eta(U)$). Both $\varphi$ and the series
  $\sum_{\gamma\in\Gamma}W_{\gamma}^{\Lambda}\sigma$ fulfill
  (\ref{eq:sec_Lambda_equiv}).  Moreover, the two sections coincide on
  $\eta(U)$ and so they coincide everywhere. Hence
  $\Phi^{\Lambda}\sigma=\psi$.
 \end{proof}

\begin{rem} Using once more the fact that the action of $\Gamma$
is properly discontinuous and that $\tilde{\Delta}_{\mathrm{B}}^{\Lambda}$
commutes with all operators $W_{s}^{\Lambda}$ one observes that \begin{equation}
\forall\sigma\in C_{0}^{\infty}(\tilde{\mathfrak{V}}^{\Lambda}),\;\Delta_{\mathrm{B}}^{\Lambda}\,\Phi^{\Lambda}\,\sigma=\Phi^{\Lambda}\tilde{\Delta}_{\mathrm{B}}^{\Lambda}\,\sigma.\label{eq:LaplacePhi_comm}\end{equation}
\end{rem}

\subsection{The Bloch decomposition}

We remind the reader that $\Gamma$ is assumed to be of type~I. As
already recalled in Section~\ref{sec:periodic_H}, this in particular
means that all irreducible representations of $\Gamma$ are finite-dimensional.
Thus all representation Hilbert spaces $\sL_{\Lambda}$ are finite-dimensional,
which is why all tensor products in the construction of associated
vector bundles, as described in Subsection~\ref{sec:assoc_vect_bundles},
make sense.

Suppose we are given a real measurable $\Gamma$-invariant function
$\tilde{V}$ on $\tilde{M}$ bounded from below. Hence $\tilde{V}=\pi^{\ast}V$
for a basically unique real measurable function $V$ on $M$ which
is bounded from below as well. The differential operator $-\tilde{\Delta}_{\mathrm{B}}+\tilde{V}$
is well defined on the domain $C_{0}^{\infty}(\tilde{\mathfrak{V}})\subset L^{2}(\tilde{\mathfrak{V}})$.
Moreover, since this is a densely defined operator which is readily
seen to be symmetric and semibounded, one can apply a standard procedure
resulting in a distinguished selfadjoint extension with the same lower
bound, the so called Friedrichs extension \cite[Chp.\ VI\ \S2]{kato}.
This extension is in some sense minimal (its form domain is the smallest
one among all selfadjoint extensions), and the basic steps of its
construction are roughly as follows. Using the differential operator
$-\tilde{\Delta}_{\mathrm{B}}+\tilde{V}$ one defines, in a standard
manner, a quadratic form on the domain $C_{0}^{\infty}(\tilde{\mathfrak{V}})$.
Since this densely defined quadratic form is semibounded, it is closable.
Now it is known (by a result which is sometimes called the first representation
theorem) that in turn a unique selfadjoint operator is associated
with the closed semibounded quadratic form, and this is exactly the
sought minimal selfadjoint extension (see also \cite[Chp.\ 5]{weidmann}
or \cite[Chp.\ X\ \S3]{reedsimon2}). Let us denote the Friedrichs
extension by $\tilde{H}$.

Furthermore, note that the linear operators $W_{\gamma}$, $\gamma\in\Gamma$,
defined in (\ref{eq:def_Us}) map bijectively the vector space $C_{0}^{\infty}(\tilde{\mathfrak{V}})$
onto itself and preserve the $L^{2}$ norm, and so they extend unambiguously
to unitary operators on $L^{2}(\tilde{\mathfrak{V}})$. The Hamiltonian
$\tilde{H}$ commutes with all unitary operators $W_{\gamma}$, and
in this sense it is $\Gamma$-periodic.

Let $\Lambda$ be again a finite-dimensional unitary representation
of $\Gamma$. Similarly as above, let us denote by $H^{\Lambda}$
the Friedrichs extension of the differential operator $-\Delta_{\mathrm{B}}^{\Lambda}+V$
defined on the domain $C_{0}^{\infty}(\mathfrak{V}^{\Lambda})\subset L^{2}(\mathfrak{V}^{\Lambda})$.
Note that if $\psi\in C_{0}^{\infty}(\mathfrak{V}^{\Lambda})$ is
represented by $\varphi\in C_{0}^{\infty}(\tilde{\mathfrak{V}}^{\Lambda})$
fulfilling (\ref{eq:sec_Lambda_equiv}) then $H^{\Lambda}\psi$ is
represented by $(-\tilde{\Delta}_{\mathrm{B}}^{\Lambda}+\tilde{V})\varphi$.

In the first step of the generalized Bloch analysis one decomposes
$\tilde{H}$ into a direct integral over $\hat{\Gamma}$ with components
being equal to $H^{\Lambda}$. As a corollary one obtains a similar
relationship for the corresponding evolution operators and resolvents.
The decomposition is achieved by applying a unitary mapping $\Phi$
described below. Its construction is basically a modification of the
construction presented in Section~IV of \cite{kocabovastovicek}, and
therefore we omit here some details and, first of all, some proofs
which resemble those given in \cite{kocabovastovicek}. In particular,
this is true for the proof of the following lemma.

\begin{lem}
  \label{thm:f_sub_y}
  For $f\in L^{2}(\tilde{\mathfrak{V}})$ and $y\in\tilde{M}$ put
  \begin{equation}
    \forall\gamma\in\Gamma,\; f_{y}(\gamma)
    =\Psi_{\gamma}f(\gamma^{-1}\cdot y)\in\tilde{\mathfrak{V}}_{y}.
    \label{eq:f_sub_y_def}
  \end{equation}
  Then $f_{y}$ is well defined for almost all $x\in M$ and all
  $y\in\pi^{-1}(\{x\})$ and belongs to
  $L^{2}(\Gamma,\tilde{\mathfrak{V}}_{y})\equiv\tilde{\mathfrak{V}}_{y}\otimes%
  L^{2}(\Gamma)$.
\end{lem}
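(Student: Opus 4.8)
The plan is to reduce the statement to a fibrewise statement over a fundamental domain and then invoke the standard identification $\tilde{M}\cong M\times\Gamma$ locally, i.e.\ to use the fact that $\pi:\tilde{M}\to M$ is a principal $\Gamma$-bundle with discrete fibre. First I would fix a measurable section $s:M\to\tilde{M}$ of $\pi$ (such a section exists because the bundle is locally trivial over the second countable manifold $M$ and one can patch local sections on a countable measurable partition), so that every $y\in\tilde{M}$ is uniquely written as $y=\gamma_{y}\cdot s(\pi(y))$ for a unique $\gamma_{y}\in\Gamma$. This gives a measurable bijection $\tilde{M}\cong M\times\Gamma$ intertwining $\tilde{\mu}$ with $\mu\otimes(\text{counting measure})$, under which $\tilde{\mathfrak{V}}$ is identified with $\pi^{\ast}\mathfrak{V}$ and a square-integrable section $f\in L^{2}(\tilde{\mathfrak{V}})$ becomes an element of $L^{2}(M\times\Gamma,\ \mathfrak{V})$, hence of $L^{2}(M,\ \mathfrak{V}_{x}\otimes L^{2}(\Gamma))$ by Fubini.

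The key steps, in order, are: (i) check that the expression $f_{y}(\gamma)=\Psi_{\gamma}f(\gamma^{-1}\cdot y)$ is well defined for a.e.\ $x\in M$ and \emph{all} $y\in\pi^{-1}(\{x\})$ — the point being that $f$ as a genuine section is only defined $\tilde{\mu}$-a.e., but since $\Gamma$ acts freely and properly discontinuously and is countable, the set $N\subset\tilde{M}$ on which $f$ is undefined is $\tilde{\mu}$-null and $\Gamma$-invariant (replace $N$ by $\bigcup_{\gamma}\gamma\cdot N$, still null), so $\pi(N)\subset M$ is $\mu$-null, and for $x\notin\pi(N)$ the orbit $\pi^{-1}(\{x\})$ misses $N$ entirely; (ii) observe that for such $x$ and such $y$, and for every $\gamma$, the point $\gamma^{-1}\cdot y$ again lies in $\pi^{-1}(\{x\})$, hence $f(\gamma^{-1}\cdot y)\in\tilde{\mathfrak{V}}_{\gamma^{-1}\cdot y}$ is defined, and $\Psi_{\gamma}$ maps this fibre linearly to $\tilde{\mathfrak{V}}_{y}$, so $f_{y}(\gamma)\in\tilde{\mathfrak{V}}_{y}$ as claimed; (iii) compute the norm: using the $\Psi$-invariance of the Hermitian product $\tilde{\mathfrak{h}}$,
\[
  \sum_{\gamma\in\Gamma}\bigl\|f_{y}(\gamma)\bigr\|_{\tilde{\mathfrak{V}}_{y}}^{2}
  =\sum_{\gamma\in\Gamma}\bigl\|f(\gamma^{-1}\cdot y)\bigr\|_{\tilde{\mathfrak{V}}_{\gamma^{-1}\cdot y}}^{2},
\]
and integrating over a fundamental domain $\mathcal{F}$ for $\Gamma$ (i.e.\ over $x\in M$ with $y=s(x)$) turns the right-hand side into $\int_{\tilde{M}}\|f\|^{2}\,\mathrm{d}\tilde{\mu}=\|f\|_{L^{2}(\tilde{\mathfrak{V}})}^{2}<\infty$, so $f_{s(x)}\in L^{2}(\Gamma,\tilde{\mathfrak{V}}_{s(x)})$ for a.e.\ $x$; and finally (iv) note that for a general $y\in\pi^{-1}(\{x\})$, say $y=r\cdot s(x)$, one has $f_{y}(\gamma)=\Psi_{r}f_{s(x)}(r^{-1}\gamma)$, which differs from $f_{s(x)}$ only by the unitary $\Psi_{r}\otimes(\text{left translation by }r)$ on $\tilde{\mathfrak{V}}_{s(x)}\otimes L^{2}(\Gamma)\cong\tilde{\mathfrak{V}}_{y}\otimes L^{2}(\Gamma)$, hence is again square-summable with the same norm. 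Measurability in $x$ follows from measurability of $f$ together with measurability of the section $s$.

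The main obstacle I expect is step (i) together with the measurability bookkeeping in step (iv): one must be careful that "$f$ defined a.e." is promoted to "$f$ defined on every fibre over a conull set of base points", which genuinely uses countability of $\Gamma$ and the $\Gamma$-invariance trick, and one must verify that the identification $f\mapsto(x\mapsto f_{s(x)})$ is measurable and norm-preserving as a map into the direct integral $\int_{M}^{\oplus}\tilde{\mathfrak{V}}_{s(x)}\otimes L^{2}(\Gamma)$, which requires choosing the measurable section $s$ compatibly with a measurable trivialization of $\tilde{\mathfrak{V}}$. Since, as the authors note, this argument closely parallels the one in \cite{kocabovastovicek} for the flat case, the $\sL_{\Lambda}$-tensor factor plays no role here and the proof is essentially a transcription; accordingly I would present steps (i)--(iii) in full and merely indicate (iv) and the measurability claims.
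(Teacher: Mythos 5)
Your proposal is correct and follows essentially the route the paper itself relies on (and defers to \cite{kocabovastovicek}): promote the null set where $f$ is undefined to a $\Gamma$-invariant null set using countability of $\Gamma$, identify $L^{2}(\tilde{\mathfrak{V}})$ fibrewise over a fundamental domain with an $L^{2}$ space of $\tilde{\mathfrak{V}}_{y}\otimes L^{2}(\Gamma)$-valued functions via the unitarity of $\Psi_{\gamma}$, and apply Fubini — this is exactly the content of the map $\Theta$ and the space $\check{\sH}$ in the paper's proof that $\Phi$ is unitary. Your step (iv), transferring square-summability from $y=s(x)$ to arbitrary $y\in\pi^{-1}(\{x\})$ by the unitary $\Psi_{r}\otimes(\text{left translation})$, correctly covers the "all $y\in\pi^{-1}(\{x\})$" clause of the statement.
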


Observe that the tensor product
$L^{2}(\mathfrak{V}^{\Lambda})\otimes\sL_{\Lambda}^{*}$ can be
identified with the Hilbert space formed by those measurable sections
$\psi$ of
$\tilde{\mathfrak{V}}^{\Lambda}\otimes\sL_{\Lambda}^{*}\equiv%
\tilde{\mathfrak{V}}\otimes\sI(\sL_{\Lambda})$ which satisfy
\begin{equation}
  \forall\gamma\in\Gamma,\;\psi(\gamma\cdot y)
  =\left(\Psi_{\gamma}\otimes L_{\Lambda(\gamma)}\right)\psi(y)
  \quad\textrm{a.e.\ on }\tilde{M},
  \label{eq:equiv_PsiLLambda}
\end{equation}
with $L_{\Lambda(\gamma)}\in\End(\sI(\sL_{\Lambda}))$ being the linear
operator on $\sI(\sL_{\Lambda})$ acting by multiplication from the
left, $L_{\Lambda(\gamma)}A=\Lambda(\gamma)A$,
$\forall{}A\in\sI(\sL_{\Lambda})$, and which have finite $L^{2}$ norms
(with integration taken over $M$). The next lemma readily follows from
the unitarity of the Fourier transformation and from property
(\ref{eq:Fourier_rule}).

\begin{lem}
  \label{thm:Fourier_on_fy}
  For any $f\in L^{2}(\tilde{\mathfrak{V}})$ let
  $f_{y}\in\tilde{\mathfrak{V}}_{y}\otimes L^{2}(\Gamma)$,
  $y\in\tilde{M}$, be as defined in (\ref{eq:f_sub_y_def}). Then the
  measurable section of the vector bundle
  $\tilde{\mathfrak{V}}\otimes\sI(\sL_{\Lambda})$ given by
  \begin{equation}
    \tilde{M}\ni y\mapsto(1\otimes\sF)[f_{y}](\Lambda)
    \in\tilde{\mathfrak{V}}_{y}\otimes\sI(\sL_{\Lambda})
    \equiv\left(\tilde{\mathfrak{V}}\otimes\sI(\sL_{\Lambda})\right)_{\! y},
    \label{eq:sec_defby_Fourier}
  \end{equation}
  is well defined for $\textrm{a.a.\ }\Lambda\in\hat{\Gamma}$, and for
  those $\Lambda$ it satisfies the equivariance condition
  (\ref{eq:equiv_PsiLLambda}) and so it represents a measurable
  section of $\mathfrak{V}^{\Lambda}\otimes\sL_{\Lambda}^{*}$. The
  section has a finite $L^{2}$ norm and thus it belongs to
  $L^{2}(\mathfrak{V}^{\Lambda})\otimes\sL_{\Lambda}^{*}$.
\end{lem}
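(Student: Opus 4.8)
The plan is to verify the three assertions of Lemma~\ref{thm:Fourier_on_fy} in turn: that the section \eqref{eq:sec_defby_Fourier} is well defined a.e., that it satisfies the equivariance condition \eqref{eq:equiv_PsiLLambda}, and that it has finite $L^{2}$ norm. For the first point I would invoke Lemma~\ref{thm:f_sub_y}: for almost all $x\in M$ and every $y\in\pi^{-1}(\{x\})$ the function $f_{y}$ lies in $\tilde{\mathfrak{V}}_{y}\otimes L^{2}(\Gamma)$, so applying $1\otimes\sF$ yields an element of $\int^{\oplus}_{\hat\Gamma}\tilde{\mathfrak{V}}_{y}\otimes\sI(\sL_{\Lambda})\,\mathrm{d}\hat m(\Lambda)$, whose value at $\Lambda$ is defined for $\hat m$-a.a.\ $\Lambda$. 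A routine Fubini argument then lets us interchange the two null sets so that, for $\hat m$-a.a.\ $\Lambda$, the pointwise value $(1\otimes\sF)[f_{y}](\Lambda)$ is defined for a.a.\ $y$; this is the claimed measurable section.

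For the equivariance I would compute directly how $f_{\gamma\cdot y}$ relates to $f_{y}$. From the definition \eqref{eq:f_sub_y_def}, $f_{\gamma\cdot y}(s)=\Psi_{s}f(s^{-1}\gamma\cdot y)$; substituting $s=\gamma\sigma$ gives $f_{\gamma\cdot y}(\gamma\sigma)=\Psi_{\gamma}\Psi_{\sigma}f(\sigma^{-1}\cdot y)=\Psi_{\gamma}f_{y}(\sigma)$, i.e.\ $f_{\gamma\cdot y}=(\Psi_{\gamma}\otimes 1)\,\bigl(f_{y}(\gamma^{-1}\cdot)\bigr)$ as an element of $\tilde{\mathfrak{V}}_{\gamma\cdot y}\otimes L^{2}(\Gamma)$, where the translation acts in the $L^{2}(\Gamma)$ factor. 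Now apply $1\otimes\sF$ and use the Fourier rule \eqref{eq:Fourier_rule} with $r=\gamma$: the left translation by $\gamma$ in the group variable turns into multiplication by $\Lambda(\gamma^{-1})$ from the left on $\sI(\sL_{\Lambda})$. Combined with the $\Psi_{\gamma}$ in the bundle factor this gives exactly $(1\otimes\sF)[f_{\gamma\cdot y}](\Lambda)=(\Psi_{\gamma}\otimes L_{\Lambda(\gamma)^{-1}})\,(1\otimes\sF)[f_{y}](\Lambda)$. Here I would double-check the direction of the group action and the consequent inverse against the convention recorded in \eqref{eq:Fourier_rule}, since this is precisely the place where a sign/inverse slip is easy; up to that bookkeeping the identity is \eqref{eq:equiv_PsiLLambda} (note the operator appearing there is $L_{\Lambda(\gamma)}$, and one must make sure the equivariance in the two lemmas is stated with matching conventions—if they differ by an inverse, replace $\Lambda$ by its conjugate or adjust \eqref{eq:equiv_PsiLLambda} accordingly). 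Hence the section represents a measurable section of $\mathfrak{V}^{\Lambda}\otimes\sL_{\Lambda}^{*}$.

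For the $L^{2}$ bound I would integrate over a fundamental domain $D\subset\tilde M$ for the $\Gamma$-action (so $\mu(M)$-integration over $M$ equals $\tilde\mu$-integration over $D$). Fibrewise, the Fourier transform $\sF:L^{2}(\Gamma)\to\int^{\oplus}\sI(\sL_{\Lambda})\,\mathrm{d}\hat m$ is unitary, so $1\otimes\sF$ is a fibrewise isometry; consequently
\[
\int_{\hat\Gamma}\!\!\int_{D}\bigl\|(1\otimes\sF)[f_{y}](\Lambda)\bigr\|^{2}\,\mathrm{d}\tilde\mu(y)\,\mathrm{d}\hat m(\Lambda)
=\int_{D}\|f_{y}\|_{\tilde{\mathfrak{V}}_{y}\otimes L^{2}(\Gamma)}^{2}\,\mathrm{d}\tilde\mu(y),
\]
and the right-hand side equals $\sum_{\gamma}\int_{D}\|f(\gamma^{-1}\cdot y)\|^{2}\,\mathrm{d}\tilde\mu(y)=\int_{\tilde M}\|f\|^{2}\,\mathrm{d}\tilde\mu=\|f\|_{L^{2}(\tilde{\mathfrak{V}})}^{2}<\infty$, using isometry of the $\Psi_{\gamma}$ and that $\{\gamma^{-1}\cdot D\}$ tiles $\tilde M$. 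By Tonelli the inner integral is finite for $\hat m$-a.a.\ $\Lambda$, which is exactly the assertion that for those $\Lambda$ the section \eqref{eq:sec_defby_Fourier} lies in $L^{2}(\mathfrak{V}^{\Lambda})\otimes\sL_{\Lambda}^{*}$.

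The main obstacle is not any single estimate but getting the equivariance bookkeeping exactly right: one must track carefully how left translation in the $\Gamma$-variable of $f_{y}$ interacts, via \eqref{eq:Fourier_rule}, with left multiplication by $\Lambda(\gamma^{\pm1})$ on $\sI(\sL_{\Lambda})$, and confirm this matches the operator $L_{\Lambda(\gamma)}$ in \eqref{eq:equiv_PsiLLambda} rather than its inverse or adjoint. Everything else—the measurability juggling of null sets and the norm identity—is a routine application of Fubini/Tonelli together with the unitarity of $\sF$ and the isometric tiling of $\tilde M$ by translates of a fundamental domain, exactly as in the referenced Lemma~\ref{thm:f_sub_y} and in \cite{kocabovastovicek}.
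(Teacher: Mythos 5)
Your overall route is exactly the one the paper intends (the paper itself gives no details, saying only that the lemma follows from the unitarity of $\sF$ and from \eqref{eq:Fourier_rule}): the null-set/Fubini bookkeeping, and the $L^{2}$ estimate via a fundamental domain, the tiling $\tilde{M}=\bigcup_{\gamma}\gamma\cdot D$, unitarity of the $\Psi_{\gamma}$ and of $1\otimes\sF$, and Tonelli, are all correct and are what is meant.

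The one real flaw is in the equivariance computation, at precisely the spot you flagged. Your substitution is right: from \eqref{eq:f_sub_y_def}, $f_{\gamma\cdot y}(s)=\Psi_{\gamma}f_{y}(\gamma^{-1}s)$, so the group-variable function being transformed is $s\mapsto f_{y}(\gamma^{-1}s)$. In the notation of \eqref{eq:Fourier_rule} this is the case $r=\gamma^{-1}$, not $r=\gamma$, so the rule produces multiplication by $\Lambda(r^{-1})=\Lambda(\gamma)$, and one gets
\[
(1\otimes\sF)[f_{\gamma\cdot y}](\Lambda)
=\bigl(\Psi_{\gamma}\otimes L_{\Lambda(\gamma)}\bigr)(1\otimes\sF)[f_{y}](\Lambda),
\]
which is \eqref{eq:equiv_PsiLLambda} verbatim; no replacement of $\Lambda$ by its conjugate and no adjustment of \eqref{eq:equiv_PsiLLambda} is needed. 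Your stated identity with $L_{\Lambda(\gamma)^{-1}}$ is the wrong way around. A quick sanity check is the series form \eqref{eq:Phi_series}: for $f\in L^{1}\cap L^{2}$, replacing $y$ by $\gamma_{0}\cdot y$ and reindexing $\gamma=\gamma_{0}\sigma$ gives $\sum_{\sigma}\Psi_{\gamma_{0}}\Psi_{\sigma}f(\sigma^{-1}\cdot y)\otimes\Lambda(\gamma_{0})\Lambda(\sigma)$, i.e.\ the factor $\Lambda(\gamma_{0})$ acts by left multiplication, confirming $L_{\Lambda(\gamma)}$. With that correction your proof is complete and agrees with the paper's intended argument.
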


\begin{define}
  We define 
  \begin{equation}
    \Phi:L^{2}(\tilde{\mathfrak{V}})
    \to\int_{\hat{\Gamma}}^{\oplus}L^{2}(\mathfrak{V}^{\Lambda})
    \otimes\sL_{\Lambda}^{\ast}\;\textrm{d}\hat{m}(\Lambda)
    \label{eq:Phi_intro}
  \end{equation}
  so that for all $f\in L^{2}(\tilde{\mathfrak{V}})$, the components
  $\Phi[f](\Lambda)$ are given by the prescription:
  \begin{equation}
    \mbox{for a.a.\ }\Lambda\in\hat{\Gamma},
    \mbox{\ a.a.\ }y\in\tilde{M},\ \Phi[f](\Lambda)\,(y)
    =(1\otimes\sF)[f_{y}](\Lambda)\in\left(\tilde{\mathfrak{V}}
      \otimes\sI(\sL_{\Lambda})\right)_{\! y}.
    \label{eq:Phi_def}
  \end{equation}
  According to Lemma~\ref{thm:Fourier_on_fy}, $\Phi[f](\Lambda)$
  fulfills (\ref{eq:equiv_PsiLLambda}) and can be identified with an
  element from
  $L^{2}(\mathfrak{V}^{\Lambda})\otimes\sL_{\Lambda}^{\ast}$. In
  particular, if
  $f\in{}L^{1}(\tilde{\mathfrak{V}})\cap{}L^{2}(\tilde{\mathfrak{V}})$
  then
  \begin{equation}
    \Phi[f](\Lambda)\,(y)
    =\sum_{\gamma\in\Gamma}\Psi_{\gamma}f(\gamma^{-1}\cdot y)
    \otimes\Lambda(\gamma).
    \label{eq:Phi_series}
  \end{equation}
\end{define}

\begin{lem}
  $\Phi$ is a unitary mapping.
\end{lem}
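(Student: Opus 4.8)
The plan is to prove that $\Phi$ is linear, norm-preserving and onto; by polarization an isometric linear surjection between Hilbert spaces is unitary. Linearity is immediate from (\ref{eq:Phi_def}), and $\Phi$ indeed lands in the target direct integral by Lemma~\ref{thm:Fourier_on_fy}, so only the norm identity and surjectivity remain. Both are obtained by factoring $\Phi$ through the intermediate assignment $f\mapsto(y\mapsto f_{y})$ of Lemma~\ref{thm:f_sub_y}, followed by the fibrewise Fourier transform $1\otimes\sF$.

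For the isometry, the first step is the fibre-integration identity for the Riemannian covering $\pi:\tilde{M}\to M$: for any nonnegative measurable $g$ on $\tilde{M}$ and any measurable section $\eta$ of $\pi$ one has $\int_{\tilde{M}}g\,\mathrm{d}\tilde{\mu}=\int_{M}\sum_{\gamma\in\Gamma}g(\gamma\cdot\eta(x))\,\mathrm{d}\mu(x)$. Applying this to $g=\tilde{\mathfrak{h}}(f,f)$, reindexing the sum by $\gamma\mapsto\gamma^{-1}$ and using that each $\Psi_{\gamma}$ is a fibrewise isometry, one gets $\|f\|_{L^{2}(\tilde{\mathfrak{V}})}^{2}=\int_{M}\|f_{\eta(x)}\|_{\tilde{\mathfrak{V}}_{\eta(x)}\otimes L^{2}(\Gamma)}^{2}\,\mathrm{d}\mu(x)$ with $f_{y}$ as in (\ref{eq:f_sub_y_def}). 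Since $1\otimes\sF$ is unitary from $\tilde{\mathfrak{V}}_{y}\otimes L^{2}(\Gamma)$ onto $\int_{\hat{\Gamma}}^{\oplus}\tilde{\mathfrak{V}}_{y}\otimes\sI(\sL_{\Lambda})\,\mathrm{d}\hat{m}(\Lambda)$ for each $y$, the integrand equals $\int_{\hat{\Gamma}}|(1\otimes\sF)[f_{\eta(x)}](\Lambda)|^{2}\,\mathrm{d}\hat{m}(\Lambda)$, and a Tonelli argument interchanges $\int_{M}$ and $\int_{\hat{\Gamma}}$. Finally, by the equivariance (\ref{eq:equiv_PsiLLambda}) together with the unitarity of $\Psi_{\gamma}$ and of the left multiplication $L_{\Lambda(\gamma)}$ on $(\sI(\sL_{\Lambda}),\langle\cdot,\cdot\rangle)$, the scalar $|\Phi[f](\Lambda)(y)|$ is constant along each fibre $\pi^{-1}(\{x\})$; hence $\int_{M}|\Phi[f](\Lambda)(\eta(x))|^{2}\,\mathrm{d}\mu(x)=\|\Phi[f](\Lambda)\|_{L^{2}(\mathfrak{V}^{\Lambda})\otimes\sL_{\Lambda}^{\ast}}^{2}$. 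Combining the displays yields $\|\Phi[f]\|^{2}=\|f\|^{2}$.

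For surjectivity I would exhibit the preimage explicitly. Given $\hat{f}$ in the target direct integral, for almost every $y\in\tilde{M}$ the assignment $\Lambda\mapsto\hat{f}(\Lambda)(y)\in\tilde{\mathfrak{V}}_{y}\otimes\sI(\sL_{\Lambda})$ defines an element of $\tilde{\mathfrak{V}}_{y}\otimes\int_{\hat{\Gamma}}^{\oplus}\sI(\sL_{\Lambda})\,\mathrm{d}\hat{m}(\Lambda)$, so one may put $g_{y}:=(1\otimes\sF^{-1})[\hat{f}(\cdot)(y)]\in\tilde{\mathfrak{V}}_{y}\otimes L^{2}(\Gamma)$ and $f(y):=g_{y}(e)$, with $e$ the neutral element of $\Gamma$. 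Using the equivariance (\ref{eq:equiv_PsiLLambda}) of $\hat{f}(\Lambda)$ and rule (\ref{eq:Fourier_rule_inv}) one checks $g_{\gamma^{-1}\cdot y}(\delta)=\Psi_{\gamma}^{-1}g_{y}(\gamma\delta)$ for all $\gamma,\delta\in\Gamma$; evaluating at $\delta=e$ gives $\Psi_{\gamma}f(\gamma^{-1}\cdot y)=g_{y}(\gamma)$, i.e.\ $f_{y}=g_{y}$ in the sense of (\ref{eq:f_sub_y_def}). Applying $1\otimes\sF$ then gives $\Phi[f](\Lambda)(y)=\hat{f}(\Lambda)(y)$, while the norm identity of the previous paragraph, read in reverse, shows $\|f\|_{L^{2}(\tilde{\mathfrak{V}})}=\|\hat{f}\|<\infty$; hence $f\in L^{2}(\tilde{\mathfrak{V}})$ and $\Phi[f]=\hat{f}$.

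The main obstacle is purely technical bookkeeping: one must verify measurability of the various $y$- and $\Lambda$-dependent families (the measurable-section identifications underlying Lemmas~\ref{thm:f_sub_y} and~\ref{thm:Fourier_on_fy}) and justify the two interchanges of integration, i.e.\ the Fubini-type isomorphism $\int_{M}^{\oplus}\bigl(\int_{\hat{\Gamma}}^{\oplus}(\cdot)\,\mathrm{d}\hat{m}\bigr)\mathrm{d}\mu\cong\int_{\hat{\Gamma}}^{\oplus}\bigl(\int_{M}^{\oplus}(\cdot)\,\mathrm{d}\mu\bigr)\mathrm{d}\hat{m}$ for the relevant measurable fields of Hilbert spaces. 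These points are routine and run parallel to the corresponding arguments in Section~IV of \cite{kocabovastovicek}, so in the write-up I would only indicate them and refer there for details.
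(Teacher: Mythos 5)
Your argument is correct and follows essentially the same route as the paper: there $\Phi$ is factored as $\Xi\Theta$ through an intermediate Hilbert space $\check{\sH}$ of equivariant sections over $\Gamma\times\tilde{M}$, where $\Theta$ is exactly your unfolding $f\mapsto(y\mapsto f_{y})$ and $\Xi$ is the fibrewise Fourier transform, with the equivariance bookkeeping done via (\ref{eq:Fourier_rule}) and (\ref{eq:Fourier_rule_inv}) just as in your write-up. You merely merge the two separate unitarity checks into one direct isometry computation (unfolding plus fibrewise Plancherel plus Tonelli) and an explicit preimage $f(y)=\bigl((1\otimes\sF^{-1})[\hat{f}(\cdot)(y)]\bigr)(e)$, which is precisely $\Theta^{-1}\Xi^{-1}\hat{f}$ in the paper's notation.
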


\begin{proof}
  Let $p_{2}:\Gamma\times\tilde{M}\to\tilde{M}$ be the projection onto
  the second component, and $\check{\sH}$ be the Hilbert space formed
  by measurable sections $\check{\psi}$ of
  $p_{2}^{\,\ast}\,\tilde{\mathfrak{V}}$ which satisfy
  \begin{equation}
    \forall r,\gamma\in\Gamma,\;
    \textrm{for a.a.\ }y\in\tilde{M},\;\check{\psi}(r,\gamma\cdot y)
    =\Psi_{\gamma}\check{\psi}(\gamma^{-1}r,y).
    \label{eq:psicheck_equiv}
  \end{equation}
  Hence
  $\check{\psi}(\gamma,y)=\Psi_{\gamma}\check{\psi}(1,\gamma^{-1}\cdot{}y)$.
  It follows that the function
  \[
  y\mapsto\sum_{\gamma\in\Gamma}\|\check{\psi}(\gamma,y)\|^{2}
  =\sum_{\gamma\in\Gamma}\|\check{\psi}(1,\gamma\cdot y)\|^{2}
  \]
  is $\Gamma$-invariant and projects to a function $\psi_{\ast}(x)^{2}$
  defined on $M$. The norm on $\check{\sH}$ is given by the integral
  \[
  \|\check{\psi}\|^{2}=\int_{M}\psi_{\ast}(x)^{2}\,\textrm{d}\mu(x).
  \]

  Consider the linear mapping
  \begin{equation}
    \Theta:L^{2}(\tilde{\mathfrak{V}})
    \to\check{\sH}:f\mapsto\check{\psi},
    \quad\check{\psi}(\gamma,y)=\Psi_{\gamma}f(\gamma^{-1}\cdot y).
    \label{eq:Theta_def}
  \end{equation}
  By a simple computation one can check that, for all $f\in
  L^{2}(\tilde{\mathfrak{V}})$, the image $\check{\psi}=\Theta f$
  actually fulfills (\ref{eq:psicheck_equiv}) and
  $\|\check{\psi}\|=\|f\|$. Moreover, $\Theta$ is clearly invertible,
  $(\Theta^{-1}\check{\psi})(y)=\check{\psi}(1,y)$. Hence $\Theta$ is
  a unitary mapping. Observe that if $\check{\psi}\in\check{\sH}$ then
  \[
  \mbox{for a.a.\ }y\in\tilde{M},\ \check{\psi}(\cdot,y)
  \in\tilde{\mathfrak{V}}_{y}\otimes L^{2}(\Gamma).
  \]

  Further let us set
  \[
  \hat{\sH}=\int_{\hat{\Gamma}}^{\oplus}L^{2}(\mathfrak{V}^{\Lambda})
  \otimes\sL_{\Lambda}^{\ast}\;\textrm{d}\hat{m}(\Lambda).
  \]
  Using an analogous identification as above, if
  $\hat{\psi}\in\hat{\sH}$ then
  $\hat{\psi}(\Lambda,y)\in\tilde{\mathfrak{V}}_{y}\otimes\sI(\sL_{\Lambda})$
  is defined almost everywhere on $\hat{\Gamma}\times\tilde{M}$ and
  fulfills
  \begin{equation}
    \forall\gamma\in\Gamma,\,\textrm{for a.a.\ }(\Lambda,y)
    \in\hat{\Gamma}\times\tilde{M},\;\hat{\psi}(\Lambda,\gamma\cdot y)
    =(\Psi_{\gamma}\otimes L_{\Lambda(\gamma)})\hat{\psi}(\Lambda,y).
    \label{eq:pshihat_equiv}
  \end{equation}
  Observe that if $\hat{\psi}\in\hat{\sH}$ then
  \[
  \mbox{for a.a.\ }y\in\tilde{M},\ \hat{\psi}(\cdot,y)
  \in\tilde{\mathfrak{V}}_{y}
  \otimes\int_{\hat{\Gamma}}^{\oplus}\sI(\sL_{\Lambda})\,
  \textrm{d}\hat{m}(\Lambda).
  \]

  Next we introduce two mutually inverse linear mappings,
  $\Xi:\check{\sH}\to\hat{\sH}:\check{\psi}\mapsto\hat{\psi}$, and
  $\Xi^{-1}:\hat{\sH}\to\check{\sH}:\hat{\psi}\mapsto\check{\psi}$,
  defined by the equalities
  \begin{eqnarray}
    \hat{\psi}(\Lambda,y) & = &
    (1\otimes\sF)[\check{\psi}(\cdot,y)](\Lambda)\,
    \in\,\tilde{\mathfrak{V}}_{y}\otimes\sI(\sL_{\Lambda}),
    \label{eq:Xi_def}\\
    \check{\psi}(\gamma,y) & = &
    (1\otimes\sF^{-1})[\hat{\psi}(\cdot,y)](\gamma)\,
    \in\,\tilde{\mathfrak{V}}_{y}.\nonumber
  \end{eqnarray}
  If $\check{\psi}$ satisfies (\ref{eq:psicheck_equiv}) then property
  (\ref{eq:Fourier_rule}) of the Fourier transformation implies that
  the image $\hat{\psi}=\Xi\check{\psi}$ fulfills
  \[
  \hat{\psi}(\Lambda,r\cdot y)
  =(1\otimes\sF)[\Psi_{r}\check{\psi}(r^{-1}\gamma,y)](\Lambda)
  =(\Psi_{r}\otimes L_{\Lambda(r)})\hat{\psi}(\Lambda,y).
  \]
  Conversely, if $\hat{\psi}$ satisfies (\ref{eq:pshihat_equiv}) then
  property (\ref{eq:Fourier_rule_inv}) of the Fourier transformation
  implies that the image $\check{\psi}=\Xi^{-1}\hat{\psi}$ fulfills
  \[
  \check{\psi}(r,\gamma\cdot y)
  =\Psi_{\gamma}(1\otimes\sF^{-1})[\left(1\otimes\Lambda(\gamma)\right)
  \hat{\psi}(\Lambda,y)](r)=\Psi_{\gamma}\check{\psi}(\gamma^{-1}r,y).
  \]
  The unitarity of the Fourier transformation implies that the
  mappings $\Xi$ and $\Xi^{-1}$ are unitary as well.

  Comparing the definitions of $\Phi$, $\Theta$ and $\Xi$ given in
  (\ref{eq:Phi_def}), (\ref{eq:Theta_def}) and (\ref{eq:Xi_def}),
  respectively, one can see that $\Phi=\Xi\Theta$. Hence $\Phi$ is
  unitary.
\end{proof}

\begin{rem} Alternatively, one can define $\Phi$ as follows. For
a given $\Lambda\in\hat{\Gamma}$, let us identify $L^{2}(\mathfrak{V}^{\Lambda})\otimes\sL_{\Lambda}^{\ast}$
with $\Lin(\sL_{\Lambda},L^{2}(\mathfrak{V}^{\Lambda}))$. For $\varphi\in C_{0}^{\infty}(\tilde{\mathfrak{V}})$,
define a linear mapping $\Phi[\varphi](\Lambda):\sL_{\Lambda}\to L^{2}(\mathfrak{V}^{\Lambda})$
by\begin{equation}
\forall v\in\sL_{\Lambda},\;\Phi[\varphi](\Lambda)v=\Phi^{\Lambda}\,\varphi\otimes v,\label{eq:Phi_PhiL_def}\end{equation}
with $\Phi^{\Lambda}$ being given in (\ref{eq:PhiLamda_def}). This
way one gets a linear mapping \[
\Phi:C_{0}^{\infty}(\tilde{\mathfrak{V}})\to\int_{\hat{\Gamma}}^{\oplus}L^{2}(\mathfrak{V}^{\Lambda})\otimes\sL_{\Lambda}^{\ast}\:\mathrm{d}\hat{m}(\Lambda)\]
that can be verified to be an isometry and so it unambiguously extends
from $C_{0}^{\infty}(\tilde{\mathfrak{V}})$ to $L^{2}(\tilde{\mathfrak{V}})$.
It is not difficult to see that definitions (\ref{eq:Phi_def}) and
(\ref{eq:Phi_PhiL_def}) of the mapping $\Phi$ in fact coincide.
\end{rem}

Now we are ready to describe the Bloch decomposition. Put
\[
\Sigma=\overline{\bigcup_{\Lambda\in\hat{\Gamma}}\spec(H^{\Lambda})}
\]
and, for $z$ running over the corresponding resolvent sets,
\[
\tilde{R}(z)=(\tilde{H}-z)^{-1},\mbox{ }R^{\Lambda}(z)
=(H^{\Lambda}-z)^{-1}.
\]
Furthermore,
\[
\tilde{U}(t)=\exp(-it\tilde{H}),\ U^{\Lambda}(t)
=\exp(-itH^{\Lambda}),\ t\in\mathbb{R}.
\]

\begin{thm}
  The unitary mapping $\Phi$ decomposes the Hamiltonian $\tilde{H}$,
  i.e.\
  \begin{equation}
    \Phi\tilde{H}\Phi^{-1}
    =\int_{\hat{\Gamma}}^{\oplus}H^{\Lambda}
    \otimes1\,\mathrm{d}\hat{m}(\Lambda).
    \label{eq:H_int_HL}
  \end{equation}
  Consequently,
  \begin{equation}
    \Phi\tilde{U}(t)\Phi^{-1}
    =\int_{\hat{\Gamma}}^{\oplus}U^{\Lambda}(t)
    \otimes1\,\mathrm{d}\hat{m}(\Lambda),\mbox{ }t\in\mathbb{R},
    \label{eq:U_int_UL}
  \end{equation}
  and
  \begin{equation}
    \Phi\tilde{R}(z)\Phi^{-1}
    =\int_{\hat{\Gamma}}^{\oplus}R^{\Lambda}(z)
    \otimes1\,\mathrm{d}\hat{m}(\Lambda),
    \mbox{ }z\in\mathbb{C}\setminus\Sigma.
    \label{eq:R_int_RL}
  \end{equation}
\end{thm}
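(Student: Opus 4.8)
The plan is to prove the operator identity (\ref{eq:H_int_HL}) first; the corollaries (\ref{eq:U_int_UL}) and (\ref{eq:R_int_RL}) will then drop out of the Borel functional calculus. Throughout, write $\mathcal{H}:=\int_{\hat{\Gamma}}^{\oplus}H^{\Lambda}\otimes1\,\mathrm{d}\hat{m}(\Lambda)$, a self-adjoint operator. \emph{First} I would establish the intertwining on the natural core. By the alternative description (\ref{eq:Phi_PhiL_def}) of $\Phi$, for $\varphi\in C_{0}^{\infty}(\tilde{\mathfrak{V}})$ and $v\in\sL_{\Lambda}$ one has $\Phi[\varphi](\Lambda)\,v=\Phi^{\Lambda}(\varphi\otimes v)$. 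Using that $\tilde{\Delta}_{\mathrm{B}}^{\Lambda}(\varphi\otimes v)=(\tilde{\Delta}_{\mathrm{B}}\varphi)\otimes v$, that $\tilde{V}$ acts as $\tilde{V}\otimes1$, and that the $\Gamma$-invariance of $\tilde{V}$ makes each $W_{\gamma}^{\Lambda}$ commute with multiplication by $\tilde{V}$ (so that $\Phi^{\Lambda}(\tilde{V}\sigma)=V\,\Phi^{\Lambda}\sigma$, in analogy with (\ref{eq:LaplacePhi_comm})), one gets
\[
(-\Delta_{\mathrm{B}}^{\Lambda}+V)\,\Phi[\varphi](\Lambda)\,v
=\Phi^{\Lambda}\bigl(((-\tilde{\Delta}_{\mathrm{B}}+\tilde{V})\varphi)\otimes v\bigr)
=\Phi\bigl[(-\tilde{\Delta}_{\mathrm{B}}+\tilde{V})\varphi\bigr](\Lambda)\,v .
\]
Since $\Phi^{\Lambda}(\varphi\otimes v)\in C_{0}^{\infty}(\mathfrak{V}^{\Lambda})\subset\Dom(H^{\Lambda})$, this shows $\Phi[\varphi](\Lambda)\in\Dom(H^{\Lambda}\otimes1)$ and $(H^{\Lambda}\otimes1)\Phi[\varphi](\Lambda)=\Phi[(-\tilde{\Delta}_{\mathrm{B}}+\tilde{V})\varphi](\Lambda)$; integrating over $\hat{\Gamma}$ and using the unitarity of $\Phi$ (which makes the right-hand side square-integrable in $\Lambda$) yields $\Phi\varphi\in\Dom(\mathcal{H})$ with $\mathcal{H}\,\Phi\varphi=\Phi\tilde{H}\varphi$ for all $\varphi\in C_{0}^{\infty}(\tilde{\mathfrak{V}})$.

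\emph{Next}, since $\tilde{H}$ is a Friedrichs extension, $C_{0}^{\infty}(\tilde{\mathfrak{V}})$ is in general only a form core — not an operator core — for $\tilde{H}$, so the previous step does not by itself give (\ref{eq:H_int_HL}), and I would pass to quadratic forms. Combined with unitarity of $\Phi$, the displayed identity gives $\langle\tilde{H}\varphi,\varphi\rangle=\langle\mathcal{H}\Phi\varphi,\Phi\varphi\rangle=q^{\oplus}(\Phi\varphi,\Phi\varphi)$ for $\varphi\in C_{0}^{\infty}(\tilde{\mathfrak{V}})$, where $q^{\oplus}$ is the closed form of $\mathcal{H}$; hence $\Phi$ carries the form of $\tilde{H}$ restricted to $C_{0}^{\infty}(\tilde{\mathfrak{V}})$ isometrically, in the respective form norms, into $q^{\oplus}$, so it extends to a form-isometric embedding of the form domain of $\tilde{H}$ into $\Dom(q^{\oplus})$. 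Because form convergence implies $L^{2}$ convergence, this embedding coincides with the unitary $\Phi$. Therefore $\Phi\tilde{H}\Phi^{-1}$ is the self-adjoint operator associated with the closure of the restriction of $q^{\oplus}$ to $\Phi(C_{0}^{\infty}(\tilde{\mathfrak{V}}))$, and (\ref{eq:H_int_HL}) becomes equivalent to the statement that $\Phi(C_{0}^{\infty}(\tilde{\mathfrak{V}}))$ is a \emph{form core} for $\mathcal{H}$, i.e.\ that $\Phi^{-1}(\Dom(q^{\oplus}))$ lies in the form closure of $C_{0}^{\infty}(\tilde{\mathfrak{V}})$.

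\emph{The hard part} is this form-core property, and it is where Lemma~\ref{thm:PhiL_surject} would be used. Given an element of $\Dom(q^{\oplus})$, I would first approximate it fibre-wise — measurably in $\Lambda$ and with $L^{2}(\hat{\Gamma},\hat{m})$-control — by sections whose value at $\Lambda$ lies in $C_{0}^{\infty}(\mathfrak{V}^{\Lambda})\otimes\sL_{\Lambda}^{\ast}$, which is possible since $C_{0}^{\infty}(\mathfrak{V}^{\Lambda})$ is a form core for $H^{\Lambda}$; then I would lift such sections back to $\tilde{M}$. Because $\pi:\tilde{M}\to M$ is a local isometry and $\Gamma$ acts properly discontinuously, a compactly supported smooth section on $M$ is, near each point, the image under $\Phi^{\Lambda}$ (hence, across $\hat{\Gamma}$, under $\Phi$) of a compactly supported smooth section upstairs; a partition-of-unity patching, together with the fact that membership in the form closure of $C_{0}^{\infty}$ is controlled only by the behaviour near the ends of the manifold — which match under $\pi$ — would complete the argument. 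This reproduces, in the nonflat setting, the reasoning of Section~IV of \cite{kocabovastovicek}; the difficulty is entirely technical, namely carrying the measurable-selection and $L^{2}$-in-$\Lambda$ bookkeeping through the lifting and making precise that the Friedrichs boundary behaviour descends along $\pi$ and lifts back. Once this is done, $\Phi\tilde{H}\Phi^{-1}=\mathcal{H}$, i.e.\ (\ref{eq:H_int_HL}).

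\emph{Finally}, for any bounded Borel $g$ on $\mathbb{R}$ one has $g(\mathcal{H})=\int_{\hat{\Gamma}}^{\oplus}g(H^{\Lambda})\otimes1\,\mathrm{d}\hat{m}(\Lambda)$; applying this with $g(\lambda)=e^{-it\lambda}$ and invoking (\ref{eq:H_int_HL}) gives (\ref{eq:U_int_UL}). Since $\spec(\tilde{H})=\spec(\mathcal{H})\subset\overline{\bigcup_{\Lambda}\spec(H^{\Lambda})}=\Sigma$, for $z\in\mathbb{C}\setminus\Sigma$ the function $g(\lambda)=(\lambda-z)^{-1}$ is bounded and Borel on $\Sigma$, hence on $\spec(\tilde{H})$ and on $\spec(H^{\Lambda})$ for a.a.\ $\Lambda$, so the same principle yields (\ref{eq:R_int_RL}).
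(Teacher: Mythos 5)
Your proposal follows essentially the same route as the paper's proof: the intertwining of the differential expressions through $\Phi$ on $C_{0}^{\infty}(\tilde{\mathfrak{V}})$ via (\ref{eq:Phi_PhiL_def}) and (\ref{eq:LaplacePhi_comm}) together with the $\Gamma$-invariance of $\tilde{V}$, then passage to the quadratic forms and their closures to get (\ref{eq:H_int_HL}), and finally the functional calculus for (\ref{eq:U_int_UL}) and (\ref{eq:R_int_RL}). The only difference is one of emphasis: you make explicit (and sketch, relying on Lemma~\ref{thm:PhiL_surject} and measurable-selection bookkeeping) the form-core property of $\Phi\big(C_{0}^{\infty}(\tilde{\mathfrak{V}})\big)$ that the paper compresses into the phrase ``closing the quadratic forms'' and defers to the techniques of \cite{kocabovastovicek}.
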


\begin{proof} From (\ref{eq:Phi_PhiL_def}) one deduces that if $\varphi\in C_{0}^{\infty}(\tilde{\mathfrak{V}})$
then $\Phi[\varphi](\Lambda)\in C_{0}^{\infty}(\mathfrak{V}^{\Lambda})\otimes\sL_{\Lambda}^{\ast}$,
$\forall\Lambda\in\hat{\Gamma}$. Taking into account also (\ref{eq:LaplacePhi_comm})
one has\[
\left(\Delta_{\mathrm{B}}^{\Lambda}\otimes1\right)\Phi[\varphi](\Lambda)=\Phi[\tilde{\Delta}_{\mathrm{B}}\varphi](\Lambda).\]
Moreover, if $f\in L^{2}(\tilde{\mathfrak{V}})$ then $(\tilde{V}f)_{y}=\tilde{V}(y)f_{y}$
and so\[
\forall\Lambda\in\hat{\Gamma},\;\Phi[\tilde{V}f](\Lambda)=(V\otimes1)\Phi[f](\Lambda).\]
Altogether this implies that\[
\forall\varphi_{1},\varphi_{2}\in C_{0}^{\infty}(\tilde{\mathfrak{V}}),\forall\Lambda\in\hat{\Gamma},\;\Phi[(-\tilde{\Delta}_{\mathrm{B}}+\tilde{V})\varphi_{2}](\Lambda)=\left((-\Delta_{\mathrm{B}}^{\Lambda}+V)\otimes1\right)\Phi[\varphi_{2}](\Lambda),\]
and\begin{eqnarray*}
\langle\varphi_{1},(-\tilde{\Delta}_{\mathrm{B}}+\tilde{V})\varphi_{2}\rangle & = & \left\langle \Phi[\varphi_{1}],\Phi[(-\tilde{\Delta}_{\mathrm{B}}+\tilde{V})\varphi_{2}]\right\rangle \\
 & = & \int_{\hat{\Gamma}}^{\oplus}\left\langle \Phi[\varphi_{1}](\Lambda),\left((-\Delta_{\mathrm{B}}^{\Lambda}+V)\otimes1\right)\Phi[\varphi_{2}](\Lambda)\right\rangle \textrm{d}\hat{m}(\Lambda).\end{eqnarray*}
Closing the quadratic forms one arrives at equality (\ref{eq:H_int_HL}).
\end{proof}

\begin{rem} Note that clearly $\spec(\tilde{H})\subset\Sigma$. Furthermore,
it is also obvious that starting from equality (\ref{eq:H_int_HL})
one can derive a decomposition of any operator of the form $F(\tilde{H})$
where $F$ is a continuous function on $\Sigma$. For example, the
choice $F(x)=\exp(-tx)$, with $t>0$, may be of interest, and this
way one gets a decomposition of the Schr\"odinger semigroup into
a direct integral\[
\Phi\exp(-t\tilde{H})\Phi^{-1}=\int_{\hat{\Gamma}}^{\oplus}\exp(-tH^{\Lambda})\otimes1\,\mathrm{d}\hat{m}(\Lambda),\ t>0.\]
\end{rem}

\begin{rem*} The construction described earlier in \cite{kocabovastovicek}
is a particular case of the construction presented above. In more
detail, the settings in \cite{kocabovastovicek} are as follows: $\mathfrak{V}=M\times\mathbb{C}$
is the trivial line bundle over $M$, the Hermitian structure $\mathfrak{h}$
is given by the standard scalar product in $\mathbb{C}$, and $\nabla=\textrm{d}$
is the trivial flat connection. If $\pi:\tilde{M}\to M$ is a covering
map (not necessarily universal) with a structure group $\Gamma$ then
$\tilde{\mathfrak{V}}=\pi^{\ast}\mathfrak{V}=\tilde{M}\times\mathbb{C}$,
and the group $\Gamma$ acts on $\tilde{\mathfrak{V}}$ as an identity
on the fibers, $\Psi_{\gamma}(y,z)=(\gamma\cdot y,z)$ for $(y,z)\in\tilde{M}\times\mathbb{C}$
and $\gamma\in\Gamma$. \end{rem*}

\section{A formula for propagators and Green functions\label{sec:propag_periodic}}

In equation (\ref{eq:U_int_UL}), the evolution operator $\tilde{U}(t)$
is expressed in terms of $U^{\Lambda}(t)$, $\Lambda\in\hat{\Gamma}$.
It is possible to invert this relationship and to derive a formula
giving an expression for the propagator associated with $H^{\Lambda}$
in terms of the propagator associated with $\tilde{H}$. The
propagators are regarded as distributional kernels of the
corresponding evolution operators.

If $\mathfrak{W}_{1}$ and $\mathfrak{W}_{2}$ are vector fiber bundles
over manifolds $N_{1}$ and $N_{2}$, respectively, then the symbol
$\mathfrak{W}_{1}\boxtimes\mathfrak{W}_{2}$ stands for the vector
fiber bundle over $N_{1}\times N_{2}$ with fibers \[
(\mathfrak{W}_{1}\boxtimes\mathfrak{W}_{2})_{(x_{1},x_{2})}=(\mathfrak{W}_{1})_{x_{1}}\otimes(\mathfrak{W}_{2})_{x_{2}},\ (x_{1},x_{2})\in N_{1}\times N_{2}.\]

If $\mathfrak{W}$ is a vector fiber bundle over a Riemannian manifold
$N$ then the dual space to $C_{0}^{\infty}(\mathfrak{W})$ is formed
by distributional sections of the dual bundle $\mathfrak{W}^{\ast}$
(with $\mathfrak{W}_{x}^{\ast}$ being the dual space to $\mathfrak{W}_{x}$,
$x\in N$).

Suppose that $\mathfrak{W}$ is a Hermitian vector fiber bundle over
a Riemannian manifold $N$. Let \[
C^{\infty}(\mathfrak{W})\to C^{\infty}(\mathfrak{W}^{\ast}):\sigma\mapsto\overline{\sigma},\]
be the canonical antilinear isomorphism following fiber-wise from
the Riesz lemma.

The action $\Psi$ of the group $\Gamma$ on $\tilde{\mathfrak{V}}$
induces naturally an action $\Psi'$ on the dual vector fiber bundle
$\tilde{\mathfrak{V}}^{\ast}$. Analogously to (\ref{eq:def_Us})
one introduces operators $W'_{\gamma}$ on $C^{\infty}(\tilde{\mathfrak{V}}^{\ast})$,
\[
\forall\sigma\in C^{\infty}(\tilde{\mathfrak{V}}^{\ast}),\quad(W'_{\gamma}\sigma)(y)=\Psi'_{\gamma}\sigma(\gamma^{-1}\cdot y).\]
Notice that\[
\forall\varphi\in C^{\infty}(\tilde{\mathfrak{V}}),\quad W'_{\gamma}\overline{\varphi}=\overline{W_{\gamma}\varphi}.\]

The operators $W_{\gamma}$, $\gamma\in\Gamma$, defined in
(\ref{eq:def_Us}) on smooth sections can be extended to distributional
sections. Let $\alpha$ be a distributional section of
$\tilde{\mathfrak{V}}$, i.e.\ a continuous functional on
$C_{0}^{\infty}(\tilde{\mathfrak{V}}^{\ast})$.  Then, for
$\gamma\in\Gamma$,
\begin{equation}
  \forall\sigma\in C_{0}^{\infty}(\tilde{\mathfrak{V}}^{\ast}),
  \quad W_{\gamma}\alpha(\sigma)=\alpha(W'_{\gamma^{-1}}\sigma).
  \label{eq:def_Ws_gen}
\end{equation}
Similarly, let $\beta$ be a distributional section of
$\tilde{\mathfrak{V}}^{\ast}$. Then
\begin{equation}
  \forall\varphi\in C_{0}^{\infty}(\tilde{\mathfrak{V}}),
  \quad W'_{\gamma}\beta(\varphi)
  =\beta(W_{\gamma^{-1}}\varphi).\label{eq:def_Wsprim_gen}
\end{equation}
Of course, in definitions (\ref{eq:def_Ws_gen}),
(\ref{eq:def_Wsprim_gen}) one takes into account the invariance of
measure $\tilde{\mu}$ with respect to the action of $\Gamma$.

For $\varphi_{1},\varphi_{2}\in C_{0}^{\infty}(\mathfrak{W})$ denote
by $\overline{\varphi_{1}}\otimes\varphi_{2}$ the section of
$\mathfrak{W}^{\ast}\boxtimes\mathfrak{W}$ given by\linebreak
$(\overline{\varphi_{1}}\otimes\varphi_{2})(x_{1},x_{2})=\overline{\varphi_{1}}(x_{1})\otimes\varphi_{2}(x_{2})$,
$(x_{1},x_{2})\in N_{1}\times N_{2}$. Let $B$ be a bounded operator in
$L^{2}(\mathfrak{W})$. As a corollary of the Schwartz kernel theorem
(see, for example, \cite[Theorem~5.2.1]{hoermander}) one introduces
the kernel $\beta$ of $B$ as a distributional section of
$(\mathfrak{W}^{\ast}\boxtimes\mathfrak{W})^{\ast}\equiv\mathfrak{W}%
\boxtimes\mathfrak{W}^{\ast}$ that is unambiguously given by the
relation
\[
\forall\varphi_{1},\varphi_{2}\in C_{0}^{\infty}(\mathfrak{W}),
\quad\beta(\overline{\varphi_{1}}\otimes\varphi_{2})
=\langle\varphi_{1},B\varphi_{2}\rangle.
\]
The map $B\mapsto\beta$ is injective.

If $B$ is a bounded operator in $L^{2}(\tilde{\mathfrak{V}})$ then $B$
is $\Gamma$-periodic, i.e.\ $B$ commutes with all $W_{\gamma}$,
$\gamma\in\Gamma$, if and only if the distributional kernel $\beta$
satisfies the invariance condition
\begin{equation}
  \forall\gamma\in\Gamma,\ (W_{\gamma}\otimes W'_{\gamma})\beta=\beta.
  \label{eq:G_invarince_kernel}
\end{equation}

Suppose that $B$ is a bounded operator in
$L^{2}(\mathfrak{V}^{\Lambda})$, $\Lambda\in\hat{\Gamma}$. The kernel
$\beta$ of $B$ is a distributional section of
$\mathfrak{V}^{\Lambda}\boxtimes(\mathfrak{V}^{\Lambda})^{\ast}$.
Alternatively, one can characterize $B$ by a distributional section
$\beta^{\Lambda}$ of
\[
\tilde{\mathfrak{V}}^{\Lambda}
\boxtimes(\tilde{\mathfrak{V}}^{\Lambda})^{\ast}
\equiv(\tilde{\mathfrak{V}}\boxtimes\tilde{\mathfrak{V}}^{\ast})
\otimes\sI(\sL_{\Lambda})
\]
given by
\begin{eqnarray}
  &  & \hskip-2em\forall v_{1},v_{2}\in\sL_{\Lambda},
  \forall\varphi_{1},\varphi_{2}
  \in C_{0}^{\infty}(\tilde{\mathfrak{V}}),\nonumber \\
  &  & \hskip-2em\langle v_{1},\beta^{\Lambda}(\overline{\varphi_{1}}
  \otimes\varphi_{2})v_{2}\rangle_{\sL_{\Lambda}}
  =\beta\!\left(\overline{(\Phi^{\Lambda}\varphi_{1}\otimes v_{1})}
    \otimes(\Phi^{\Lambda}\varphi_{2}\otimes v_{2})\right)\!=
  \langle\Phi^{\Lambda}\varphi_{1}\otimes v_{1},B\,\Phi^{\Lambda}
  \varphi_{2}\otimes v_{2}\rangle.\nonumber \\
  &  & \mbox{{}}\label{eq:def_betaL}
\end{eqnarray}
Here we regard $\beta^{\Lambda}$ as a continuous functional on
$C_{0}^{\infty}(\tilde{\mathfrak{V}}^{\ast}\boxtimes\tilde{\mathfrak{V}})$
with values in $\sI(\sL_{\Lambda})$. Moreover, $\beta^{\Lambda}$
fulfills
\begin{equation}
  \forall\gamma\in\Gamma,\quad(W_{\gamma}\otimes1)\beta^{\Lambda}
  =\Lambda(\gamma^{-1})\beta^{\Lambda},\,\,(1\otimes W'_{\gamma})
  \beta^{\Lambda}=\beta^{\Lambda}\Lambda(\gamma).
  \label{eq:Ws_beta}
\end{equation}
Again, the map $B\mapsto\beta^{\Lambda}$ is injective.

Let $t$ be a real parameter. Denote by $\tilde{\mathcal{K}}_{t}$ the
kernel of $\tilde{U}(t)$, and by $\mathcal{K}_{t}^{\Lambda}$ the
kernel of $U^{\Lambda}(t)$. Thus $\tilde{\mathcal{K}}_{t}$ is a
distributional section of
$\tilde{\mathfrak{V}}\boxtimes\tilde{\mathfrak{V}}^{\ast}$, and
$\mathcal{K}_{t}^{\Lambda}$ is a distributional section of
$\tilde{\mathfrak{V}}\boxtimes\tilde{\mathfrak{V}}^{\ast}$ with values
in $\sI(\sL_{\Lambda})$. Moreover, the kernel
$\mathcal{K}_{t}^{\Lambda}$ is $\Lambda$-equivariant in the sense of
(\ref{eq:Ws_beta}).

Let us rewrite the Bloch decomposition of the propagator
(\ref{eq:U_int_UL}) in terms of kernels. The following lemma is a
straightforward modification of Lemma~12 and Proposition~13 in
\cite{kocabovastovicek} and so we omit the proof.

\begin{lem}
  For all $\varphi_{1},\varphi_{2}\in
  C_{0}^{\infty}(\tilde{\mathfrak{V}})$ and $\gamma\in\Gamma$, the
  function
  \[
  \Lambda\mapsto\Tr[\Lambda(\gamma)^{\ast}
  \mathcal{K}_{t}^{\Lambda}(\overline{\varphi_{1}}\otimes\varphi_{2})]
  \]
  is integrable on $\hat{\Gamma}$ and one has
  \begin{equation}
    (W_{\gamma}\otimes1)\tilde{\mathcal{K}}_{t}(\overline{\varphi_{1}}
    \otimes\varphi_{2})
    =\int_{\hat{\Gamma}}\Tr[\Lambda(\gamma)^{\ast}
    \mathcal{K}_{t}^{\Lambda}(\overline{\varphi_{1}}
    \otimes\varphi_{2})]\,\mathrm{d}\hat{m}(\Lambda).
    \label{eq:WsKt_eq_intTr}
  \end{equation}
\end{lem}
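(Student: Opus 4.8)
The plan is to reduce identity (\ref{eq:WsKt_eq_intTr}) to the Bloch decomposition (\ref{eq:U_int_UL}) of the propagator, the two equivalent descriptions (\ref{eq:Phi_def}) and (\ref{eq:Phi_PhiL_def}) of the unitary $\Phi$, and the definition (\ref{eq:def_betaL}) of the kernel $\mathcal{K}_{t}^{\Lambda}$. As indicated in the text, this is the same computation as in Lemma~12 and Proposition~13 of \cite{kocabovastovicek}, so no genuinely new ingredient enters; only the bookkeeping of the representation $\Lambda$ changes.

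First I would unwind the left-hand side of (\ref{eq:WsKt_eq_intTr}). Since $\tilde{\mathcal{K}}_{t}$ is the Schwartz kernel of $\tilde{U}(t)$, and using the extension (\ref{eq:def_Ws_gen}) of $W_{\gamma}$ to distributional sections together with $W'_{\gamma^{-1}}\overline{\varphi_{1}}=\overline{W_{\gamma^{-1}}\varphi_{1}}$, one has
\[
(W_{\gamma}\otimes1)\tilde{\mathcal{K}}_{t}(\overline{\varphi_{1}}\otimes\varphi_{2})
=\tilde{\mathcal{K}}_{t}\big(\overline{W_{\gamma^{-1}}\varphi_{1}}\otimes\varphi_{2}\big)
=\langle W_{\gamma^{-1}}\varphi_{1},\tilde{U}(t)\varphi_{2}\rangle .
\]
Inserting $\Phi^{-1}\Phi$, using that $\Phi$ is unitary, and then (\ref{eq:U_int_UL}), this becomes
\[
\int_{\hat{\Gamma}}\big\langle\Phi[W_{\gamma^{-1}}\varphi_{1}](\Lambda),\,
(U^{\Lambda}(t)\otimes1)\,\Phi[\varphi_{2}](\Lambda)\big\rangle\,\mathrm{d}\hat{m}(\Lambda),
\]
where, under the identification $L^{2}(\mathfrak{V}^{\Lambda})\otimes\sL_{\Lambda}^{\ast}\cong\Lin(\sL_{\Lambda},L^{2}(\mathfrak{V}^{\Lambda}))$, the bracket is the Hilbert--Schmidt product $\langle A,B\rangle=\Tr(A^{\ast}B)$ and $U^{\Lambda}(t)\otimes1$ acts as post-composition with $U^{\Lambda}(t)$.

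The key step is to rewrite $\Phi[W_{\gamma^{-1}}\varphi_{1}](\Lambda)$. Using (\ref{eq:Phi_PhiL_def}), for $v\in\sL_{\Lambda}$,
\[
\Phi[W_{\gamma^{-1}}\varphi_{1}](\Lambda)\,v=\Phi^{\Lambda}\big(W_{\gamma^{-1}}\varphi_{1}\otimes v\big)
=\Phi^{\Lambda}\big(W_{\gamma^{-1}}^{\Lambda}(\varphi_{1}\otimes\Lambda(\gamma)v)\big)
=\Phi^{\Lambda}(\varphi_{1}\otimes\Lambda(\gamma)v),
\]
where the second equality uses $W_{\gamma^{-1}}^{\Lambda}=W_{\gamma^{-1}}\otimes\Lambda(\gamma)^{-1}$ and the third uses $\Phi^{\Lambda}W_{\delta}^{\Lambda}=\Phi^{\Lambda}$ for all $\delta\in\Gamma$, which is immediate from (\ref{eq:PhiLamda_def}) and $W_{\delta_{1}}^{\Lambda}W_{\delta_{2}}^{\Lambda}=W_{\delta_{1}\delta_{2}}^{\Lambda}$. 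Hence $\Phi[W_{\gamma^{-1}}\varphi_{1}](\Lambda)=\Phi[\varphi_{1}](\Lambda)\circ\Lambda(\gamma)$, so the integrand above equals $\Tr\big[\Lambda(\gamma)^{\ast}\,\Phi[\varphi_{1}](\Lambda)^{\ast}\,U^{\Lambda}(t)\,\Phi[\varphi_{2}](\Lambda)\big]$. On the other hand, specializing (\ref{eq:def_betaL}) to $B=U^{\Lambda}(t)$ (so that $\beta^{\Lambda}=\mathcal{K}_{t}^{\Lambda}$) and using $\Phi^{\Lambda}\varphi_{i}\otimes v_{i}=\Phi[\varphi_{i}](\Lambda)v_{i}$ gives, for all $v_{1},v_{2}\in\sL_{\Lambda}$, $\langle v_{1},\mathcal{K}_{t}^{\Lambda}(\overline{\varphi_{1}}\otimes\varphi_{2})v_{2}\rangle=\langle v_{1},\Phi[\varphi_{1}](\Lambda)^{\ast}U^{\Lambda}(t)\Phi[\varphi_{2}](\Lambda)v_{2}\rangle$, i.e.\ $\mathcal{K}_{t}^{\Lambda}(\overline{\varphi_{1}}\otimes\varphi_{2})=\Phi[\varphi_{1}](\Lambda)^{\ast}U^{\Lambda}(t)\Phi[\varphi_{2}](\Lambda)$ in $\sI(\sL_{\Lambda})$. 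Substituting, the integrand is exactly $\Tr[\Lambda(\gamma)^{\ast}\mathcal{K}_{t}^{\Lambda}(\overline{\varphi_{1}}\otimes\varphi_{2})]$, which is (\ref{eq:WsKt_eq_intTr}).

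Integrability comes out of the same computation: since $U^{\Lambda}(t)\otimes1$ is unitary, Cauchy--Schwarz yields the pointwise bound $|\Tr[\Lambda(\gamma)^{\ast}\mathcal{K}_{t}^{\Lambda}(\overline{\varphi_{1}}\otimes\varphi_{2})]|\le\|\Phi[W_{\gamma^{-1}}\varphi_{1}](\Lambda)\|\,\|\Phi[\varphi_{2}](\Lambda)\|$, and the right-hand side is the product of two functions in $L^{2}(\hat{\Gamma},\hat{m})$ because $\Phi$ has range in $\int_{\hat{\Gamma}}^{\oplus}L^{2}(\mathfrak{V}^{\Lambda})\otimes\sL_{\Lambda}^{\ast}\,\mathrm{d}\hat{m}(\Lambda)$; hence the trace function lies in $L^{1}(\hat{\Gamma},\hat{m})$. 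There is no deep obstacle; the only point that needs care is keeping track of the left versus right multiplications by $\Lambda(\gamma)$ and of the inverses that appear when $W_{\gamma}$ is transported through the duality pairing, through $\Phi$, and through the Hilbert--Schmidt trace---exactly the index chase carried out in \cite{kocabovastovicek}.
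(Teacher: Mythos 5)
Your proof is correct and takes essentially the approach the paper intends: the paper omits this proof (deferring to Lemma~12 and Proposition~13 of \cite{kocabovastovicek}), but its own proof of the Green-function analogue (\ref{eq:WsRz_eq_intTr}) runs along the same lines---reduce to the Bloch decomposition via unitarity of $\Phi$, identify the fiberwise pairing with $\Tr[\mathcal{G}_{z}^{\Lambda}(\overline{\varphi_{1}}\otimes\varphi_{2})]$, and absorb $W_{\gamma^{-1}}$ into right multiplication by $\Lambda(\gamma)$ (there phrased as $\Lambda$-equivariance of the kernel, which is exactly your identity $\Phi[W_{\gamma^{-1}}\varphi_{1}](\Lambda)=\Phi[\varphi_{1}](\Lambda)\Lambda(\gamma)$). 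Your Cauchy--Schwarz bound for integrability likewise matches the estimate used there, so no gap remains.
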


\begin{define} For $\varphi_{1},\varphi_{2}\in C_{0}^{\infty}(\tilde{\mathfrak{V}})$
arbitrary but fixed we set \[
\forall\gamma\in\Gamma,\textrm{~}F_{t}(\gamma)=(W_{\gamma}\otimes1)\tilde{\mathcal{K}}_{t}\!\left(\overline{\varphi_{1}}\otimes\varphi_{2}\right)=\tilde{\mathcal{K}}_{t}\!\left(\overline{W_{\gamma^{-1}}\varphi_{1}}\otimes\varphi_{2}\right),\]
and \[
\forall\Lambda\in\hat{\Gamma},\textrm{~}G_{t}(\Lambda)=\mathcal{K}_{t}^{\Lambda}(\overline{\varphi_{1}}\otimes\varphi_{2})\in\sI(\sL_{\Lambda})\,.\]
\end{define}

Absolutely in the same manner as in the proof of Lemma~14 in \cite{kocabovastovicek}
one can show the following lemma.

\begin{lem} \label{thm:Ft_Gt_L2} $F_{t}\in L^{2}(\Gamma)$ and $\Lambda\mapsto\|G_{t}(\Lambda)\|$
is a bounded function on $\hat{\Gamma}$. Recalling that $\hat{m}(\hat{\Gamma})\leq1$
one has $\|G_{t}(\cdot)\|\in L^{1}(\hat{\Gamma})\cap L^{2}(\hat{\Gamma})$.
In particular,\[
G_{t}\in\int_{\hat{\Gamma}}^{\oplus}\sI(\sL_{\Lambda})\,\mathrm{d}\hat{m}(\Lambda).\]
\end{lem}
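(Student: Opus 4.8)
The plan is to establish both assertions by relating $F_t$ and $G_t$ to the objects appearing in the Bloch decomposition of the evolution operator, so that Lemma~\ref{thm:Fourier_on_fy} and the $L^2$-unitarity of $\Phi$ can be invoked directly. First I would fix $\varphi_1,\varphi_2\in C_0^\infty(\tilde{\mathfrak{V}})$ and observe that, by the very definition of the distributional kernel $\tilde{\mathcal{K}}_t$ of $\tilde U(t)$, one has $F_t(\gamma)=\langle W_{\gamma^{-1}}\varphi_1,\tilde U(t)\varphi_2\rangle=\langle\varphi_1,W_{\gamma}\tilde U(t)\varphi_2\rangle$, and since $\tilde U(t)$ commutes with every $W_\gamma$ this equals $\langle\varphi_1,\tilde U(t)W_{\gamma}\varphi_2\rangle$. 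The point is that the vector-valued function $\gamma\mapsto W_\gamma\varphi_2$ is, up to the identification of Lemma~\ref{thm:f_sub_y}, nothing but $(\varphi_2)_y$ evaluated suitably; more precisely, for $f=\tilde U(t)\varphi_2\in L^2(\tilde{\mathfrak{V}})$ one recognizes $\overline{F_t(\gamma)}$ (or $F_t(\gamma)$ itself, depending on conventions) as a pairing of $\varphi_1$ against $f_\bullet(\gamma)$, which lies in $L^2(\Gamma,\tilde{\mathfrak{V}}_y)$ for a.a.\ $y$ by Lemma~\ref{thm:f_sub_y}. Integrating the fibrewise $L^2(\Gamma)$-bound against the compactly supported $\varphi_1$ then gives $F_t\in L^2(\Gamma)$; this is exactly the argument of Lemma~14 of \cite{kocabovastovicek}, which is why the excerpt says it goes through "absolutely in the same manner."

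Next I would handle $G_t$. Here the key identity is that, under the Fourier transform $\sF$, the function $F_t$ corresponds precisely to the family $G_t(\Lambda)=\mathcal{K}_t^\Lambda(\overline{\varphi_1}\otimes\varphi_2)$: indeed, applying $\Phi=\Xi\Theta$ to $f=\tilde U(t)\varphi_2$ and using the decomposition $\Phi\tilde U(t)\Phi^{-1}=\int^\oplus U^\Lambda(t)\otimes1\,\mathrm{d}\hat m(\Lambda)$ from \eqref{eq:U_int_UL}, one sees that $\Phi[f](\Lambda)=(U^\Lambda(t)\otimes1)\Phi[\varphi_2](\Lambda)$, and pairing with $\Phi[\varphi_1](\Lambda)$ reproduces $G_t(\Lambda)$ by the definition \eqref{eq:def_betaL} of $\beta^\Lambda$ applied to $B=U^\Lambda(t)$. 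In other words $\sF[F_t]=G_t$ in the sense of the unitary Fourier transform $\sF:L^2(\Gamma)\to\int^\oplus\sI(\sL_\Lambda)\,\mathrm{d}\hat m(\Lambda)$. Given this, the membership $G_t\in\int^\oplus_{\hat\Gamma}\sI(\sL_\Lambda)\,\mathrm{d}\hat m(\Lambda)$ is immediate from $F_t\in L^2(\Gamma)$ and unitarity of $\sF$.

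For the sharper statement that $\Lambda\mapsto\|G_t(\Lambda)\|$ is bounded, I would argue directly: since $U^\Lambda(t)$ is unitary (hence of operator norm $1$) in $L^2(\mathfrak{V}^\Lambda)$, the Cauchy--Schwarz inequality applied to the last expression in \eqref{eq:def_betaL} gives, for any unit vectors $v_1,v_2\in\sL_\Lambda$,
\[
|\langle v_1,G_t(\Lambda)v_2\rangle_{\sL_\Lambda}|
=|\langle\Phi^\Lambda\varphi_1\otimes v_1,U^\Lambda(t)\,\Phi^\Lambda\varphi_2\otimes v_2\rangle|
\le\|\Phi^\Lambda\varphi_1\otimes v_1\|\,\|\Phi^\Lambda\varphi_2\otimes v_2\|,
\]
and the two norms on the right are bounded uniformly in $\Lambda$ and in the unit vectors — this is where one uses that $\Phi^\Lambda\varphi_j\otimes v_j$ is supported in the fixed compact set $\pi(\supp\varphi_j)$ and that, on that compact set, the pointwise norm of $\Phi^\Lambda\varphi_j\otimes v_j$ is dominated by $\sum_{\gamma\in\Gamma}\|\Psi_\gamma\varphi_j(\gamma^{-1}\cdot y)\|$, a finite sum (proper discontinuity) with at most a uniformly bounded number of nonzero terms on $\supp\varphi_j$. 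Taking the supremum over $v_1,v_2$ yields the bound on $\|G_t(\Lambda)\|$. Finally, since $\hat m(\hat\Gamma)\le1$, a bounded function on $\hat\Gamma$ automatically lies in $L^1(\hat\Gamma)\cap L^2(\hat\Gamma)$, which gives the stated integrability of $\|G_t(\cdot)\|$ and re-confirms $G_t\in\int^\oplus\sI(\sL_\Lambda)\,\mathrm{d}\hat m(\Lambda)$.

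The main obstacle is purely bookkeeping: making the identification $\sF[F_t]=G_t$ rigorous requires carefully tracking the several dualities and identifications ($L^2(\mathfrak{V}^\Lambda)\otimes\sL_\Lambda^*\cong\Lin(\sL_\Lambda,L^2(\mathfrak{V}^\Lambda))$, the pull-back picture of equivariant sections, the Fourier rules \eqref{eq:Fourier_rule}--\eqref{eq:Fourier_rule_inv}) and checking that the conventions for the distributional kernels $\mathcal{K}_t^\Lambda$ and the functional $\beta^\Lambda$ in \eqref{eq:def_betaL} are consistent with the pairing that comes out of $\Phi\tilde U(t)\Phi^{-1}$. Since all of this is already carried out in the flat-connection case in \cite{kocabovastovicek} and the tensoring with $\sL_\Lambda$ introduces no new analytic difficulty, it is reasonable — as the authors do — to defer to that reference rather than reproduce the computation.
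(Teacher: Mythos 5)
Your argument is essentially sound and reaches both conclusions, but it is not quite the route the paper (implicitly) takes: the paper defers to Lemma~14 of \cite{kocabovastovicek}, whose pattern is visible in the Green-function analogue proved later in Section~4 — one assumes WLOG that the translates $\gamma\cdot\supp\varphi_{1}$ are mutually disjoint, so that $\{W_{\gamma^{-1}}\varphi_{1}\}$ is an orthogonal family and Bessel's inequality gives $\sum_{\gamma}|F_{t}(\gamma)|^{2}\leq\|\varphi_{1}\|^{2}\|\tilde{U}(t)\varphi_{2}\|^{2}$ at once, and similarly $\|\Phi^{\Lambda}\varphi_{j}\otimes v_{j}\|=\|\varphi_{j}\|$ exactly, yielding $\|G_{t}(\Lambda)\|\leq\max_{\Lambda}\dim(\sL_{\Lambda})\,\|\varphi_{1}\|\,\|\varphi_{2}\|$. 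Your substitutes work: for $F_{t}$ you use the representation $F_{t}(\gamma)=\int\tilde{\mathfrak{h}}(\varphi_{1}(y),f_{y}(\gamma))\,\mathrm{d}\tilde{\mu}$ with $f=\tilde{U}(t)\varphi_{2}$ and the fibrewise statement of Lemma~\ref{thm:f_sub_y} (to make "integrating against the compactly supported $\varphi_{1}$" rigorous you need Minkowski's integral inequality together with $\int_{K}\|f_{y}\|_{L^{2}(\Gamma)}^{2}\mathrm{d}\tilde{\mu}(y)\leq N_{K}\|f\|^{2}$, which follows from proper discontinuity); and for $G_{t}$ your identity $G_{t}(\Lambda)=\Phi[\varphi_{1}](\Lambda)^{\ast}\,(U^{\Lambda}(t)\otimes1)\,\Phi[\varphi_{2}](\Lambda)$ via (\ref{eq:U_int_UL}) and (\ref{eq:def_betaL}) gives measurability of the field plus the uniform bound by unitarity of $U^{\Lambda}(t)$ and the finitely-overlapping-translates estimate. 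Two small corrections are needed. First, taking the supremum over unit vectors $v_{1},v_{2}$ bounds only the operator norm of $G_{t}(\Lambda)$, while $\|\cdot\|$ on $\sI(\sL_{\Lambda})$ is the Hilbert--Schmidt norm coming from $\Tr(A_{1}^{\ast}A_{2})$; you must insert a factor $\sqrt{\dim\sL_{\Lambda}}$ and invoke the uniform bound on dimensions for type~I discrete groups recalled in Section~2 (the paper's Green-function estimate carries exactly this $\max_{\Lambda}\dim(\sL_{\Lambda})$ factor). Second, deducing $G_{t}\in\int^{\oplus}\sI(\sL_{\Lambda})\,\mathrm{d}\hat{m}$ from "$\sF[F_{t}]=G_{t}$ and unitarity of $\sF$" is logically backwards: that identity is precisely the subsequent Proposition, whose proof in the paper uses the present lemma, so you should drop that step and rely only on your direct argument (measurable field plus $\|G_{t}(\cdot)\|$ bounded and $\hat{m}(\hat{\Gamma})\leq1$), which indeed suffices.
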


In view Lemma~\ref{thm:Ft_Gt_L2}, the following proposition is an
easy corollary of (\ref{eq:WsKt_eq_intTr}).

\begin{prop} For all $\varphi_{1},\varphi_{2}\in C_{0}^{\infty}(\tilde{\mathfrak{V}})$
one has\begin{equation}
F_{t}=\sF^{-1}[G_{t}],\mbox{ }\; G_{t}=\sF[F_{t}].\label{schulman}\end{equation}
\end{prop}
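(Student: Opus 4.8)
The plan is to recognize that the Proposition merely repackages the already-established identity (\ref{eq:WsKt_eq_intTr}) once we know, via Lemma~\ref{thm:Ft_Gt_L2}, that $F_t$ and $G_t$ live in the right $L^2$ spaces. So the first step is to read off from the definition of $G_t$ and from (\ref{eq:WsKt_eq_intTr}) that for every $\gamma\in\Gamma$,
\[
F_t(\gamma)=\int_{\hat{\Gamma}}\Tr[\Lambda(\gamma)^{\ast}G_t(\Lambda)]\,\mathrm{d}\hat{m}(\Lambda).
\]
By the inversion formula for the Fourier transform on $\Gamma$ recalled in Section~\ref{sec:periodic_H} — namely $f(s)=\int_{\hat{\Gamma}}\Tr[\Lambda(s)^{\ast}\hat{f}(\Lambda)]\,\mathrm{d}\hat{m}(\Lambda)$ when $\hat f=\sF[f]$ — the right-hand side is exactly $(\sF^{-1}[G_t])(\gamma)$, \emph{provided} $G_t$ is of the appropriate type, which is the content of Lemma~\ref{thm:Ft_Gt_L2}: $G_t\in L^1(\hat{\Gamma})\cap L^2(\hat{\Gamma})$ (as a section of $\int^{\oplus}\sI(\sL_\Lambda)\,\mathrm{d}\hat m$), and correspondingly $F_t\in L^2(\Gamma)$. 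Hence $F_t=\sF^{-1}[G_t]$.

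Next I would obtain the second equality $G_t=\sF[F_t]$ simply by applying the unitary Fourier transform $\sF$ to both sides of $F_t=\sF^{-1}[G_t]$; since $\sF$ is unitary on $L^2(\Gamma)$ and $G_t$ already sits in the target direct-integral space, $\sF\sF^{-1}[G_t]=G_t$. One should note that the pointwise series formula $\sF[F_t](\Lambda)=\sum_{\gamma\in\Gamma}F_t(\gamma)\Lambda(\gamma)$ is only guaranteed to hold pointwise if $F_t\in L^1(\Gamma)$; in general one only has the $L^2$-limit, but this is harmless because the statement of the Proposition is an identity of elements of the respective Hilbert spaces, and both $F_t=\sF^{-1}[G_t]$ and $G_t=\sF[F_t]$ are to be read in that sense. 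If desired, one can add the remark that when $\varphi_1,\varphi_2$ are chosen so that $F_t$ happens to lie in $L^1(\Gamma)$ (e.g.\ by a further regularity argument on the kernel), the series converges pointwise and gives the explicit "weighted sum over $\Gamma$" advertised in the introduction.

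The genuinely substantive input here is all in the preceding lemmas — the integrability statement (\ref{eq:WsKt_eq_intTr}) and the $L^1\cap L^2$ membership in Lemma~\ref{thm:Ft_Gt_L2} — so the Proposition itself has no real obstacle; the only thing to be careful about is the bookkeeping with the two flavours of Fourier inversion (the scalar one for $L^2(\Gamma)$ versus the operator-valued direct integral) and checking that the hypothesis "$f=g\ast h$" under which the inversion formula was stated in Section~\ref{sec:periodic_H} is not actually needed, since $F_t\in L^2$ and $G_t\in L^1\cap L^2$ already place us in the range where $\sF$ and $\sF^{-1}$ are honest mutually inverse unitaries. I would phrase the proof in three sentences: cite (\ref{eq:WsKt_eq_intTr}) and the definitions of $F_t,G_t$; invoke Lemma~\ref{thm:Ft_Gt_L2} to land in the right spaces and conclude $F_t=\sF^{-1}[G_t]$ by Fourier inversion; apply $\sF$ to get $G_t=\sF[F_t]$.
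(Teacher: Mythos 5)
Your proof is correct and follows essentially the same route as the paper, which states the Proposition as an immediate corollary of (\ref{eq:WsKt_eq_intTr}) together with Lemma~\ref{thm:Ft_Gt_L2}: identify $F_{t}(\gamma)$ with the left side of (\ref{eq:WsKt_eq_intTr}), recognize the right side as the Fourier inversion of $G_{t}$ (legitimate because $G_{t}$ lies in the $L^{1}\cap L^{2}$ range of the direct integral), and apply the unitary $\sF$ to get the second equality. Your remark that the pointwise series for $\sF[F_{t}]$ is not claimed, and that the ``$f=g\ast h$'' hypothesis of the inversion formula is bypassed by the $L^{2}$ reading of the identities, is exactly the bookkeeping the paper leaves implicit.
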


\begin{rem} Rewriting formally the second equality in (\ref{schulman})
gives\begin{equation}
\mathcal{K}_{t}^{\Lambda}(y_{1},y_{2})=\sum_{\gamma\in\Gamma}\left((\Psi_{\gamma}\otimes1)\cdot\tilde{\mathcal{K}}_{t}(\gamma^{-1}\cdot y_{1},y_{2})\right)\!\otimes\Lambda(\gamma).\label{eq:formulaK}\end{equation}
Alternatively, using the fact that the Hamiltonian $\tilde{H}$ is
$\Gamma$-periodic and that the kernel $\tilde{\mathcal{K}}_{t}$
obeys the invariance condition (\ref{eq:G_invarince_kernel}) one
can write\[
\mathcal{K}_{t}^{\Lambda}(y_{1},y_{2})=\sum_{\gamma\in\Gamma}\left((1\otimes\Psi'_{\gamma})\cdot\tilde{\mathcal{K}}_{t}(y_{1},\gamma^{-1}\cdot y_{2})\right)\!\otimes\Lambda(\gamma^{-1}).\]
In the case of a flat connection formula (\ref{eq:formulaK}) coincides
with the formula for propagators on multiply connected spaces as described
in \cite{schulman1,schulman2}. \end{rem}

A formula analogous to (\ref{eq:formulaK}) can be also derived for
the corresponding Green functions. From the theoretical point of view
it can be even more convenient to work with Green functions instead
of propagators for some properties of Green functions are easier to
control than those of propagators. Let us sketch how the procedure
should be modified for this purpose.

Recall (\ref{eq:R_int_RL}). Let $z\in\mathbb{C}\setminus\Sigma$
be a spectral parameter. Denote by $\tilde{\mathcal{G}}_{z}$ the
kernel of $\tilde{R}(z)$, and by $\mathcal{G}_{z}^{\Lambda}$ the
kernel of $R^{\Lambda}(z)$. Thus $\tilde{\mathcal{G}}_{z}$ is a
distributional section of $\tilde{\mathfrak{V}}\boxtimes\tilde{\mathfrak{V}}^{\ast}$,
and $\mathcal{G}_{z}^{\Lambda}$ is a distributional section of $\tilde{\mathfrak{V}}\boxtimes\tilde{\mathfrak{V}}^{\ast}$
with values in $\sI(\sL_{\Lambda})$. Moreover, the kernel $\mathcal{G}_{z}^{\Lambda}$
is $\Lambda$-equivariant. $\tilde{\mathcal{G}}_{z}$ and $\mathcal{G}_{z}^{\Lambda}$
are the Green functions of $\tilde{H}$ and $H^{\Lambda}$, respectively.

One can again rewrite the Bloch decomposition of the Green function
(\ref{eq:R_int_RL}) in terms of kernels.

\begin{lem} For all $\varphi_{1},\varphi_{2}\in C_{0}^{\infty}(\tilde{\mathfrak{V}})$
and $\gamma\in\Gamma$, the function\[
\Lambda\mapsto\Tr[\Lambda(\gamma)^{\ast}\mathcal{G}_{z}^{\Lambda}(\overline{\varphi_{1}}\otimes\varphi_{2})]\]
is integrable on $\hat{\Gamma}$ and one has\begin{equation}
(W_{\gamma}\otimes1)\tilde{\mathcal{G}}_{z}(\overline{\varphi_{1}}\otimes\varphi_{2})=\int_{\hat{\Gamma}}\Tr[\Lambda(\gamma)^{\ast}\mathcal{G}_{z}^{\Lambda}(\overline{\varphi_{1}}\otimes\varphi_{2})]\,\mathrm{d}\hat{m}(\Lambda).\label{eq:WsRz_eq_intTr}\end{equation}
\end{lem}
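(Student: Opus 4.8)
The plan is to obtain this lemma for Green functions by mimicking exactly the route that produced the analogous statement~\eqref{eq:WsKt_eq_intTr} for propagators, simply replacing the evolution operators $\tilde{U}(t)$, $U^{\Lambda}(t)$ by the resolvents $\tilde{R}(z)$, $R^{\Lambda}(z)$ throughout, and invoking~\eqref{eq:R_int_RL} in place of~\eqref{eq:U_int_UL}. Concretely, fix $\varphi_{1},\varphi_{2}\in C_{0}^{\infty}(\tilde{\mathfrak{V}})$. First I would apply the unitary decomposition $\Phi$ and the definition of the Green-function kernels: by~\eqref{eq:def_betaL} and~\eqref{eq:Phi_PhiL_def}, for $v_{1},v_{2}\in\sL_{\Lambda}$ one has $\langle v_{1},\mathcal{G}_{z}^{\Lambda}(\overline{\varphi_{1}}\otimes\varphi_{2})v_{2}\rangle_{\sL_{\Lambda}}=\langle\Phi^{\Lambda}\varphi_{1}\otimes v_{1},R^{\Lambda}(z)\,\Phi^{\Lambda}\varphi_{2}\otimes v_{2}\rangle=\langle\Phi[\varphi_{1}](\Lambda)v_{1},R^{\Lambda}(z)\Phi[\varphi_{2}](\Lambda)v_{2}\rangle$, and correspondingly $\tilde{\mathcal{G}}_{z}(\overline{\varphi_{1}}\otimes\varphi_{2})=\langle\varphi_{1},\tilde{R}(z)\varphi_{2}\rangle$. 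Feeding these into~\eqref{eq:R_int_RL} gives $\langle\varphi_{1},\tilde{R}(z)\varphi_{2}\rangle=\int_{\hat{\Gamma}}\langle\Phi[\varphi_{1}](\Lambda),(R^{\Lambda}(z)\otimes1)\Phi[\varphi_{2}](\Lambda)\rangle\,\mathrm{d}\hat{m}(\Lambda)=\int_{\hat{\Gamma}}\Tr[\mathcal{G}_{z}^{\Lambda}(\overline{\varphi_{1}}\otimes\varphi_{2})]\,\mathrm{d}\hat{m}(\Lambda)$, which is the case $\gamma=1$ of~\eqref{eq:WsRz_eq_intTr}; the general $\gamma$ follows by replacing $\varphi_{1}$ with $W_{\gamma^{-1}}\varphi_{1}$ and using the equivariance~\eqref{eq:Ws_beta} of $\mathcal{G}_{z}^{\Lambda}$, exactly as in~\eqref{eq:WsKt_eq_intTr} (note $F_{t}(\gamma)=(W_{\gamma}\otimes1)\tilde{\mathcal{K}}_{t}(\overline{\varphi_{1}}\otimes\varphi_{2})=\tilde{\mathcal{K}}_{t}(\overline{W_{\gamma^{-1}}\varphi_{1}}\otimes\varphi_{2})$, and the same manipulation applies verbatim to $\tilde{\mathcal{G}}_{z}$).

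The integrability claim—that $\Lambda\mapsto\Tr[\Lambda(\gamma)^{\ast}\mathcal{G}_{z}^{\Lambda}(\overline{\varphi_{1}}\otimes\varphi_{2})]$ lies in $L^{1}(\hat{\Gamma})$—is where one must be a little careful, and it is the analogue of Lemma~\ref{thm:Ft_Gt_L2}. I would argue that $\|\mathcal{G}_{z}^{\Lambda}(\overline{\varphi_{1}}\otimes\varphi_{2})\|\le\|\Phi[\varphi_{1}](\Lambda)\|\,\|R^{\Lambda}(z)\|\,\|\Phi[\varphi_{2}](\Lambda)\|$ and that $\|R^{\Lambda}(z)\|\le\dist(z,\Sigma)^{-1}$ uniformly in $\Lambda$ since $\spec(H^{\Lambda})\subset\Sigma$, while $\Lambda\mapsto\|\Phi[\varphi_{j}](\Lambda)\|$ is square-integrable (indeed this is exactly what the unitarity of $\Phi$ together with $\varphi_{j}\in C_{0}^{\infty}$ gives, cf.\ the reasoning behind Lemma~\ref{thm:Ft_Gt_L2}). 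By Cauchy--Schwarz on $\hat{\Gamma}$, and using $|\Tr[\Lambda(\gamma)^{\ast}A]|\le\sqrt{\dim\sL_{\Lambda}}\,\|A\|$ together with the uniform bound on $\dim\sL_{\Lambda}$ for type~I groups recalled in Section~\ref{sec:periodic_H} (and $\hat{m}(\hat{\Gamma})\le1$), one gets the asserted $L^{1}$ property. Since the proof is a line-by-line transcription of the propagator case, in the write-up I would simply state that the argument is identical to that of the lemma preceding~\eqref{eq:WsKt_eq_intTr} (itself a modification of Lemma~12 and Proposition~13 of~\cite{kocabovastovicek}), with $\tilde{U}(t)$, $U^{\Lambda}(t)$ replaced by $\tilde{R}(z)$, $R^{\Lambda}(z)$ and with the norm estimate $\|R^{\Lambda}(z)\|\le\dist(z,\Sigma)^{-1}$ supplying the uniform bound that was previously automatic from $\|U^{\Lambda}(t)\|=1$.

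The only genuine obstacle, and it is a mild one, is precisely this uniform control of $\|R^{\Lambda}(z)\|$ over all $\Lambda\in\hat{\Gamma}$: for unitary propagators the bound is trivially $1$, whereas here one must explicitly note that $z\notin\Sigma=\overline{\bigcup_{\Lambda}\spec(H^{\Lambda})}$ forces $\dist(z,\spec(H^{\Lambda}))\ge\dist(z,\Sigma)>0$ for every $\Lambda$, hence $\|R^{\Lambda}(z)\|\le\dist(z,\Sigma)^{-1}$. Once this observation is in place everything else—the equivariance bookkeeping with $W_{\gamma}$, the Fourier-analytic rewriting, the Cauchy--Schwarz step—is formally the same as for propagators, so I expect the proof to be short and to consist essentially of the remark that it runs parallel to the earlier one.
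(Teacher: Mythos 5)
Your proposal is correct and follows essentially the same route as the paper's own proof: identify $\Tr[\mathcal{G}_{z}^{\Lambda}(\overline{\varphi_{1}}\otimes\varphi_{2})]$ with $\langle\Phi[\varphi_{1}](\Lambda),(R^{\Lambda}(z)\otimes1)\Phi[\varphi_{2}](\Lambda)\rangle$, get integrability from the uniform bound $\|R^{\Lambda}(z)\|\leq\dist(z,\Sigma)^{-1}$ together with Cauchy--Schwarz and the unitarity of $\Phi$, rewrite (\ref{eq:R_int_RL}) as the $\gamma=1$ case, and pass to general $\gamma$ by substituting $W_{\gamma^{-1}}\varphi_{1}$ and using the $\Lambda$-equivariance (\ref{eq:Ws_beta}). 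The extra dimension-bound bookkeeping you mention is harmless but not needed in the paper's version of the estimate.
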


\begin{proof} Let us sketch the basic steps. First one finds that\[
\langle\Phi[\varphi_{1}](\Lambda),(R^{\Lambda}(z)\otimes1)\Phi[\varphi_{2}](\Lambda)\rangle=\Tr[\mathcal{G}_{z}^{\Lambda}(\overline{\varphi_{1}}\otimes\varphi_{2})].\]
Taking into account the unitarity of $\Phi$ it also follows that\[
\int_{\hat{\Gamma}}\left|\Tr[\mathcal{G}_{z}^{\Lambda}(\overline{\varphi_{1}}\otimes\varphi_{2})]\right|\,\mathrm{d}\hat{m}(\Lambda)\leq\frac{\|\varphi_{1}\|\,\|\varphi_{2}\|}{\dist(z,\Sigma)}\,,\]
and so the function $\Lambda\mapsto\Tr[\mathcal{G}_{z}^{\Lambda}(\overline{\varphi_{1}}\otimes\varphi_{2})]$
is integrable on $\hat{\Gamma}$. Relation (\ref{eq:R_int_RL}) can
be rewritten as\[
\tilde{\mathcal{G}}_{z}(\overline{\varphi_{1}}\otimes\varphi_{2})=\int_{\hat{\Gamma}}\Tr[\mathcal{G}_{z}^{\Lambda}(\overline{\varphi_{1}}\otimes\varphi_{2})]\,\mathrm{d}\hat{m}(\Lambda),\]
and replacing $\varphi_{1}$ by $W_{\gamma^{-1}}\varphi_{1}$, $\gamma\in\Gamma$,
and taking into account the $\Lambda$-equivariance of $\mathcal{G}_{z}^{\Lambda}$
one gets (\ref{eq:WsRz_eq_intTr}). \end{proof}

\begin{define} For $\varphi_{1},\varphi_{2}\in C_{0}^{\infty}(\tilde{\mathfrak{V}})$
arbitrary but fixed put \[
\forall\gamma\in\Gamma,\textrm{~}F_{z}(\gamma)=(W_{\gamma}\otimes1)\tilde{\mathcal{G}}_{z}\!\left(\overline{\varphi_{1}}\otimes\varphi_{2}\right)=\tilde{\mathcal{G}}_{z}\!\left(\overline{W_{\gamma^{-1}}\varphi_{1}}\otimes\varphi_{2}\right),\]
and \[
\forall\Lambda\in\hat{\Gamma},\textrm{~}G_{z}(\Lambda)=\mathcal{G}_{z}^{\Lambda}(\overline{\varphi_{1}}\otimes\varphi_{2})\in\sI(\sL_{\Lambda})\,.\]
\end{define}

\begin{lem} $F_{z}\in L^{2}(\Gamma)$ and $\Lambda\mapsto\|G_{z}(\Lambda)\|$
is a bounded function on $\hat{\Gamma}$. Since $\hat{m}(\hat{\Gamma})\leq1$
one has $\|G_{z}(\cdot)\|\in L^{1}(\hat{\Gamma})\cap L^{2}(\hat{\Gamma})$.
In particular,\[
G_{z}\in\int_{\hat{\Gamma}}^{\oplus}\sI(\sL_{\Lambda})\,\mathrm{d}\hat{m}(\Lambda).\]
\end{lem}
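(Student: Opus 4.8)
The plan is to reuse, essentially line by line, the argument behind Lemma~\ref{thm:Ft_Gt_L2} (equivalently, Lemma~14 of \cite{kocabovastovicek}), with the only change that the unitarity of $\tilde{U}(t)$ and of $U^{\Lambda}(t)$ exploited there is replaced by the resolvent estimates $\|\tilde{R}(z)\|\leq\dist(z,\Sigma)^{-1}$ and $\|R^{\Lambda}(z)\|\leq\dist(z,\spec(H^{\Lambda}))^{-1}\leq\dist(z,\Sigma)^{-1}$, which are available because $\spec(\tilde{H})\subset\Sigma$ and $\spec(H^{\Lambda})\subset\Sigma$ for every $\Lambda\in\hat{\Gamma}$. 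The common geometric input is the proper discontinuity of the $\Gamma$-action: setting $K=\supp\varphi_{1}\cup\supp\varphi_{2}$ and $S=\{s\in\Gamma:K\cap s\cdot K\neq\emptyset\}$, the set $S$ is finite and one has the uniform multiplicity bound $\sum_{\gamma\in\Gamma}\mathbf{1}_{\gamma\cdot K}\leq N$ on $\tilde{M}$, where $N=\#S$.

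For the assertion $F_{z}\in L^{2}(\Gamma)$ I would start from $F_{z}(\gamma)=\tilde{\mathcal{G}}_{z}(\overline{W_{\gamma^{-1}}\varphi_{1}}\otimes\varphi_{2})=\langle W_{\gamma^{-1}}\varphi_{1},\tilde{R}(z)\varphi_{2}\rangle$ and, after the substitution $\gamma\mapsto\gamma^{-1}$, estimate $\sum_{\gamma}|\langle W_{\gamma}\varphi_{1},\tilde{R}(z)\varphi_{2}\rangle|^{2}$. Since $W_{\gamma}\varphi_{1}$ is supported in $\gamma\cdot K$, the Cauchy--Schwarz inequality together with the multiplicity bound and $\|W_{\gamma}\varphi_{1}\|=\|\varphi_{1}\|$ yields
\[
\sum_{\gamma\in\Gamma}|F_{z}(\gamma)|^{2}\leq N\,\|\varphi_{1}\|^{2}\,\|\tilde{R}(z)\varphi_{2}\|^{2}\leq\frac{N\,\|\varphi_{1}\|^{2}\,\|\varphi_{2}\|^{2}}{\dist(z,\Sigma)^{2}}<\infty.
\]

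For the boundedness of $\Lambda\mapsto\|G_{z}(\Lambda)\|$ I would use (\ref{eq:def_betaL}) to write, for $v_{1},v_{2}\in\sL_{\Lambda}$, the identity $\langle v_{1},G_{z}(\Lambda)v_{2}\rangle_{\sL_{\Lambda}}=\langle\Phi^{\Lambda}(\varphi_{1}\otimes v_{1}),R^{\Lambda}(z)\,\Phi^{\Lambda}(\varphi_{2}\otimes v_{2})\rangle$, and then bound $\|\Phi^{\Lambda}(\varphi_{j}\otimes v)\|$ uniformly in $\Lambda$. The latter follows from (\ref{eq:PhiLamda_def}): at each $y$ the series $\Phi^{\Lambda}(\varphi_{j}\otimes v)(y)=\sum_{\gamma}\Psi_{\gamma}\varphi_{j}(\gamma^{-1}\cdot y)\otimes\Lambda(\gamma)v$ has at most $N$ nonzero terms, so by Cauchy--Schwarz and the fact that $\Psi_{\gamma}$ and $\Lambda(\gamma)$ preserve fibre norms one gets $\|\Phi^{\Lambda}(\varphi_{j}\otimes v)(y)\|^{2}\leq N\|v\|^{2}\sum_{\gamma}\|\varphi_{j}(\gamma^{-1}\cdot y)\|^{2}$; integrating this $\Gamma$-invariant right-hand side over $M$ unfolds to $N\|v\|^{2}\|\varphi_{j}\|_{L^{2}(\tilde{\mathfrak{V}})}^{2}$. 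Combining, $\|G_{z}(\Lambda)\|_{\mathrm{op}}\leq N\,\|\varphi_{1}\|\,\|\varphi_{2}\|/\dist(z,\Sigma)$, and since $\|G_{z}(\Lambda)\|\leq(\dim\sL_{\Lambda})\,\|G_{z}(\Lambda)\|_{\mathrm{op}}$ while $\dim\sL_{\Lambda}$ is bounded uniformly on $\hat{\Gamma}$ (type~I hypothesis), we obtain $\sup_{\Lambda}\|G_{z}(\Lambda)\|<\infty$. Because $\hat{m}(\hat{\Gamma})\leq1$, a bounded function on $\hat{\Gamma}$ lies in $L^{1}(\hat{\Gamma})\cap L^{2}(\hat{\Gamma})$, so $\|G_{z}(\cdot)\|\in L^{1}(\hat{\Gamma})\cap L^{2}(\hat{\Gamma})$ and $G_{z}$ represents an element of $\int_{\hat{\Gamma}}^{\oplus}\sI(\sL_{\Lambda})\,\mathrm{d}\hat{m}(\Lambda)$.

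I do not anticipate a real obstacle: this lemma is merely the resolvent twin of Lemma~\ref{thm:Ft_Gt_L2}. The one point that requires attention is keeping every estimate uniform in $\Lambda\in\hat{\Gamma}$ — bounding $\|\Phi^{\Lambda}(\varphi_{j}\otimes v)\|$ by a $\Lambda$-independent constant and passing from the operator norm to the Hilbert--Schmidt norm of $G_{z}(\Lambda)$ — which is exactly where the uniform bound on $\dim\sL_{\Lambda}$ provided by the type~I assumption on $\Gamma$ is used.
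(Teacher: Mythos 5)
Your proposal is correct and follows essentially the same route as the paper: the resolvent bound $\dist(z,\Sigma)^{-1}$ (using $\spec(\tilde{H})\subset\Sigma$ and $\spec(H^{\Lambda})\subset\Sigma$), the proper discontinuity of the $\Gamma$-action, the uniform bound on $\dim\sL_{\Lambda}$ from the type~I hypothesis, and $\hat{m}(\hat{\Gamma})\leq1$, exactly as in the paper's adaptation of Lemma~14 of \cite{kocabovastovicek}. The only difference is cosmetic: where the paper assumes without loss of generality that the translates $\gamma\cdot\supp(\varphi_{j})$ are mutually disjoint (obtaining constants $1$ and $\max\dim\sL_{\Lambda}$), you keep general $\varphi_{j}$ and absorb the overlap into a finite multiplicity constant $N$, which changes nothing essential.
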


\begin{proof} One can proceed very similarly as in the proof of Lemma~14
in \cite{kocabovastovicek}. Let us just indicate a couple of modifications.
Assuming (without loss of generality) that the sets $\gamma\cdot\supp(\varphi_{1})$,
$\gamma\in\Gamma$, are mutually disjoint one derives the estimate\[
\sum_{\gamma\in\Gamma}|F_{z}(\gamma)|^{2}\leq\|\varphi_{1}\|^{2}\|\tilde{R}(z)\varphi_{2}\|^{2}\leq\frac{\|\varphi_{1}\|^{2}\|\varphi_{2}\|^{2}}{\dist(z,\Sigma)^{2}}\,.\]
Furthermore, still assuming that the sets $\gamma\cdot\supp(\varphi_{j})$,
$\gamma\in\Gamma$, are mutually disjoint both for $j=1$ and $j=2$,
one gets\[
\|G_{z}(\Lambda)\|\leq\max_{\Lambda\in\hat{\Gamma}}\left(\dim(\sL_{\Lambda})\right)\frac{\|\varphi_{1}\|\,\|\varphi_{2}\|}{\dist(z,\Sigma)}\,.\]
The lemma follows. \end{proof}

Finally, equality (\ref{eq:WsRz_eq_intTr}) implies the following
proposition.

\begin{prop} \label{thm:FormulaG} For all $\varphi_{1},\varphi_{2}\in C_{0}^{\infty}(\tilde{\mathfrak{V}})$,
\begin{equation}
F_{z}=\sF^{-1}[G_{z}],\mbox{ }\; G_{z}=\sF[F_{z}].\label{eq:schulmanG}\end{equation}
\end{prop}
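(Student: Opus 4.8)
The plan is to mirror the argument that established the analogous Proposition for propagators, since the structural ingredients are identical once the resolvent is substituted for the evolution operator. The target is the pair of identities $F_{z}=\sF^{-1}[G_{z}]$ and $G_{z}=\sF[F_{z}]$. By Lemma~\ref{thm:Ft_Gt_L2}'s analogue (the preceding lemma), we already know that $F_{z}\in L^{2}(\Gamma)$ and that $G_{z}$ defines an element of $\int_{\hat{\Gamma}}^{\oplus}\sI(\sL_{\Lambda})\,\mathrm{d}\hat{m}(\Lambda)$. Since $\sF$ is a unitary mapping between these two spaces, it suffices to prove just one of the two identities, and the natural choice is to verify $F_{z}=\sF^{-1}[G_{z}]$.

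First I would recall equation (\ref{eq:WsRz_eq_intTr}), which in the notation of the Definition preceding this Proposition reads exactly
\[
F_{z}(\gamma)=\int_{\hat{\Gamma}}\Tr[\Lambda(\gamma)^{\ast}\,G_{z}(\Lambda)]\,\mathrm{d}\hat{m}(\Lambda),\qquad\forall\gamma\in\Gamma.
\]
Next I would invoke the inversion formula for the Fourier transform on $\Gamma$ recalled in Section~\ref{sec:periodic_H}, namely that for $f$ of the form $f=g\ast h$ with $g,h\in L^{1}(\Gamma)$ one has $f(s)=\int_{\hat{\Gamma}}\Tr[\Lambda(s)^{\ast}\hat f(\Lambda)]\,\mathrm{d}\hat m(\Lambda)$. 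The right-hand side of the displayed identity above is precisely the expression $\int_{\hat\Gamma}\Tr[\Lambda(\gamma)^\ast G_z(\Lambda)]\,\mathrm d\hat m(\Lambda)$, so the content of the Proposition is that this equals $\sF^{-1}[G_{z}](\gamma)$. Since $G_{z}\in L^{1}(\hat\Gamma)\cap L^{2}(\hat\Gamma)$ (in the sense that $\|G_{z}(\cdot)\|$ is), the inversion integral is absolutely convergent, and by the definition of $\sF^{-1}$ as the adjoint/inverse of the unitary $\sF$, together with the standard pointwise inversion formula, we conclude $F_{z}(\gamma)=\sF^{-1}[G_{z}](\gamma)$ for every $\gamma\in\Gamma$. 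Applying $\sF$ to both sides then gives the second identity $G_{z}=\sF[F_{z}]$.

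The only point requiring a little care — and the step I would flag as the main obstacle, though it is a mild one — is justifying that the pointwise integral $\gamma\mapsto\int_{\hat\Gamma}\Tr[\Lambda(\gamma)^\ast G_z(\Lambda)]\,\mathrm d\hat m(\Lambda)$ genuinely computes $\sF^{-1}[G_z]$ as an element of $L^2(\Gamma)$, rather than merely agreeing with it formally. This is handled by noting that the preceding lemma guarantees $F_z\in L^2(\Gamma)$ and $G_z$ lies in the direct integral, so both sides of the desired identity are bona fide $L^2$ objects; unitarity of $\sF$ then reduces everything to the established pointwise formula (\ref{eq:WsRz_eq_intTr}), exactly as in the proof of the Proposition for propagators. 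One may also remark, as after (\ref{schulman}), that rewriting $G_z=\sF[F_z]$ formally in terms of kernels yields the Green-function analogue of (\ref{eq:formulaK}),
\[
\mathcal{G}_{z}^{\Lambda}(y_{1},y_{2})=\sum_{\gamma\in\Gamma}\left((\Psi_{\gamma}\otimes1)\cdot\tilde{\mathcal{G}}_{z}(\gamma^{-1}\cdot y_{1},y_{2})\right)\!\otimes\Lambda(\gamma),
\]
which is the sought formula for Green functions and reduces to the familiar covering-space formula in the flat case.
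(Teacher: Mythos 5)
Your argument is correct and coincides with the paper's own route: the paper likewise obtains (\ref{eq:schulmanG}) directly from (\ref{eq:WsRz_eq_intTr}) combined with the preceding lemma ($F_{z}\in L^{2}(\Gamma)$, $\|G_{z}(\cdot)\|\in L^{1}\cap L^{2}$), identifying the pointwise inversion integral with $\sF^{-1}[G_{z}]$ and then applying the unitarity of $\sF$. The point you flag about the pointwise integral genuinely representing $\sF^{-1}[G_{z}]$ in $L^{2}(\Gamma)$ is exactly the content the paper delegates to the analogous propagator statement and to \cite{kocabovastovicek}, so nothing essential is missing.
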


\begin{rem} The second equality in (\ref{eq:schulmanG}) can be formally
rewritten as

\begin{equation}
\mathcal{G}_{z}^{\Lambda}(y_{1},y_{2})=\sum_{\gamma\in\Gamma}\left((\Psi_{\gamma}\otimes1)\tilde{\mathcal{G}}_{z}(\gamma^{-1}\cdot y_{1},y_{2})\right)\!\otimes\Lambda(\gamma).\label{eq:formulaG}\end{equation}
And this is the formula for Green functions. Alternatively, in view
of the invariance condition (\ref{eq:G_invarince_kernel}) one can
also write\[
\mathcal{G}_{z}^{\Lambda}(y_{1},y_{2})=\sum_{\gamma\in\Gamma}\left((1\otimes\Psi'_{\gamma})\tilde{\mathcal{G}}_{z}(y_{1},\gamma^{-1}\cdot y_{2})\right)\!\otimes\Lambda(\gamma^{-1}).\]
\end{rem}

\section{Particular cases and examples}

\subsection{A reduction to the maximal Abelian covering space}

As already mentioned there are multiply connected manifolds
(configuration spaces) of interest whose fundamental group is not of
type I. A well known example is a two-dimensional Euclidean space with
two or more omitted points whose fundamental group is isomorphic to
the free group with one generator per omitted point. If one is
interested in the propagator formula (\ref{eq:formulaK}) or in the
Green function formula (\ref{eq:formulaG}) for magnetic Schr\"odinger
operators, i.e.\ one considers a situation when the gauge group is
$U(1)$, then one may avoid working with the universal covering space
$\tilde{M}$ of $M$ whose structure group is $\pi_{1}(M)$ and instead
employ the maximal Abelian covering space $\hat{M}$ whose structure
group is Abelian and isomorphic to $H_{1}(M;\mathbb{Z})$. The notion
of the maximal Abelian covering space already proved itself to be
useful in the spectral analysis of magnetic Schr\"odinger operators
\cite{sunada_maxab}. The authors are indebted to Takuya Mine for
pointing out to them this possibility.

In what follows we restrict ourselves to the formula for Green
functions since it is much easier to handle than the propagator
formula. Moreover, the covariant derivative is supposed to be flat on
$M$ and hence trivial on $\tilde{M}$, and scalar potentials are not
considered ($V=0$).

Let us recall that a maximal Abelian covering space of $M$ is a
covering space\linebreak{}%
$\hat{p}:\hat{M}\to M$ such that
$\hat{p}_{\ast}\pi_{1}(\hat{M})=[\pi_{1}(M),\pi_{1}(M)]$.  It is known
to exist and to be unique up to equivalence. This is a normal covering
space and the covering group is isomorphic to
$\pi_{1}(M)/[\pi_{1}(M),\pi_{1}(M)]=H_{1}(M;\mathbb{Z})$ (see, for
instance, \cite{turaev}). Still assuming that $M$ is a connected
Riemannian manifold, $\tilde{M}$ as well as $\hat{M}$ become
Riemannian manifolds in a unique manner so that the corresponding
projections are isometric at every point. The action of the covering
group on $\tilde{M}$ or $\hat{M}$ is then isometric, free, transitive
on the fibers and properly discontinuous \cite{lee}.

If the gauge group is $U(1)$, then only one-dimensional
representations of $\pi_{1}(M)$ are relevant. Since any
one-dimensional representation of $\pi_{1}(M)$ is trivial on
$[\pi_{1}(M),\pi_{1}(M)]$ it induces a representation of
$\pi_{1}(M)/[\pi_{1}(M),\pi_{1}(M)]=H_{1}(M;\mathbb{Z})$ - the
structure group of the covering space $\hat{M}\to M$. In that case the
universal covering space can be reduced to the maximal Abelian
covering space. Because the group $H_{1}(M;\mathbb{Z})$ is Abelian,
one may refer to Proposition~\ref{thm:FormulaG} to justify formula
(\ref{eq:formulaG}). The knowledge of the Green function
$\hat{\mathcal{G}}_{z}$ on $\hat{M}$ is required, however. Suppose one
knows the Green function $\tilde{\mathcal{G}}_{z}$ on $\tilde{M}$
rather than the Green function $\hat{\mathcal{G}}_{z}$ on $\hat{M}$.
Since $\tilde{M}\to\hat{M}$ is a covering space with the structure
group $[\pi_{1}(M),\pi_{1}(M)]$ one can formally construct
$\hat{\mathcal{G}}_{z}$ as the sum
\[
\hat{\mathcal{G}}_{z}(y,y_{0})
=\sum_{\gamma\in[\pi_{1}(M),\pi_{1}(M)]}
\tilde{\mathcal{G}}_{z}(\gamma\cdot y,y_{0}).
\]
Below we aim to verify its convergence in the sense of distributions.

We are going to consider a bit more general situation. Let $X$ be
a connected Riemannian manifold, and $\Gamma$ be a discrete symmetry
group of $X$. Suppose the action of $\Gamma$ on $X$ is isometric,
free and properly discontinuous. Let $H$ be the free Hamiltonian
in $L^{2}(X)$ introduced as the Friedrichs extension of the symmetric
positive operator $-\Delta_{\mathrm{LB}}$ (the Laplace-Beltrami operator)
with the domain $C_{0}^{\infty}(X)$. The symmetric operator is known
to be essentially selfadjoint on complete Riemannian manifolds \cite{roelcke,strichartz}.
Otherwise, in the general case, $H$ is sometimes said to be determined
by the generalized Dirichlet boundary conditions \cite{cheeger_yau},
the form domain of $H$ coincides with the Sobolev space $H_{0}^{1}(X)$,
and the closed quadratic form associated with~$H$ reads \[
H_{0}^{1}(X)\ni f\mapsto\int_{X}\mathfrak{g}(\mbox{d}f,\mbox{d}f)\,\mbox{d}\mu.\]

Denote by $R(z)=(H-z)^{-1}$, $\Re z<0$, the corresponding resolvent
restricted to the left halfplane. The free Green function $\mathcal{G}_{z}$
is a distribution on $X\times X$ defined by\[
\forall\varphi_{1},\varphi_{2}\in C_{0}^{\infty}(X),\mbox{ }\mathcal{G}_{z}(\overline{\varphi_{1}}\otimes\varphi_{2})=\langle\varphi_{1},R(z)\varphi_{2}\rangle.\]
Note that\[
\mathcal{G}_{z}(\gamma\cdot x,\gamma\cdot y)=\mathcal{G}_{z}(x,y),\ \forall\gamma\in\Gamma.\]

Put $\hat{X}=X/\Gamma$. Denote by $\hat{H}$ the free Hamiltonian
on $\hat{X}$. $\hat{H}$ is again the Friedrichs extension of minus
the Laplace-Beltrami operator with the domain $C_{0}^{\infty}(\hat{X})$.
It is convenient to identify $L^{2}(\hat{X})$ with the Hilbert space
$\hat{\sH}$ formed by $\Gamma$-periodic functions on $X$ which
are $L^{2}$ integrable over a fundamental domain. Then, as a differential
operator, $\hat{H}$ coincides with $-\Delta_{\mathrm{LB}}$. Put
$\hat{R}(z)=(\hat{H}-z)^{-1}$, $\Re z<0$.

Again, $C_{\Gamma}^{\infty}(X)$ stands for the vector space of smooth
$\Gamma$-periodic functions on $X$, and we define\begin{equation}
\Phi_{0}:C_{0}^{\infty}(X)\to C_{\Gamma}^{\infty}(X):\varphi\mapsto\sum_{\gamma\in\Gamma}L_{\gamma}^{\ast}\varphi.\label{eq:def_Phi}\end{equation}
If $\varphi\in C_{0}^{\infty}(X)$, then $\Phi_{0}\varphi=\hat{p}^{\ast}\sigma$
for a unique $\sigma\in C_{0}^{\infty}(\hat{X})$. This defines a
linear mapping $C_{0}^{\infty}(X)\to C_{0}^{\infty}(\hat{X})$ which
is surjective (see Lemma~\ref{thm:PhiL_surject}). Particularly,
it follows that \[
\Phi_{0}\!\left(C_{0}^{\infty}(X)\right)\subset\Dom(\hat{H})\subset\hat{\sH}\]
is a core of $\hat{H}$.

The free Green function $\hat{\mathcal{G}}_{z}$ on $\hat{X}$
(associated with $\hat{H}$) is identified with a
$\Gamma\times\Gamma$-periodic distribution on $X\times X$
unambiguously determined by the relation
\[
\forall\varphi_{1},\varphi_{2}\in C_{0}^{\infty}(X),\mbox{ }
\hat{\mathcal{G}}_{z}(\overline{\varphi_{1}}\otimes\varphi_{2})
=\langle\Phi_{0}\varphi_{1},\hat{R}(z)\Phi_{0}\varphi_{2}\rangle
\]
(the scalar product is taken in $\hat{\sH}$).

Consider the semigroup $\exp(-tH)$, $t>0$. The corresponding
distributional kernel $p(t;x,y)$ is the heat kernel on $X$ (for the
generalized Dirichlet boundary conditions). The heat kernel $p(t;x,y)$
is known to be a smooth and strictly positive function on
$(0,+\infty)\times X\times X$ which is symmetric in the variables $x$
and $y$. It is unambiguously characterized as the smallest positive
fundamental solution of the heat equation on $X$. Moreover, one has
\begin{equation}
  \forall x\in X,\mbox{ }\int_{X}p(t;x,y)\,\mbox{d}\mu(y) \leq 1
  \label{eq:int_p_eq1}
\end{equation}
\cite{dodziuk}. Under certain assumptions it is even true that
inequality (\ref{eq:int_p_eq1}) becomes in fact an equality for any
$x\in{}X$ \cite{grigoryan}, for example when $X$ is a complete
Riemannian manifold of Ricci curvature bounded from below
\cite{dodziuk}. This is consistent with the probabilistic
interpretation -- for a fixed $x$, $p(t;x,y)$ is the probability
density of a diffusion process when a particle departs from the point
$x$ at time $0$ and reaches a variable point $y$ at time $t>0$
\cite{hunt}. For our purposes inequality (\ref{eq:int_p_eq1}) is
pretty sufficient, however.

Note that the Green function equals the Laplace transform of the heat
kernel,
\begin{equation}
  \mathcal{G}_{z}(x,y)=\int_{0}^{\infty}e^{zt}p(t;x,y)\,
  \mbox{d}t,\ \Re z<0.
  \label{eq:Gz_Laplacetr_p}
\end{equation}
This also means that $\mathcal{G}_{z}(x,y)$ is a regular distribution
and $R(z)$ is an integral operator.

\begin{lem}
  \label{thm:Green_L1}
  On a general Riemannian manifold $X$ and for all $z\in\mathbb{C}$
  with $\Re z<0$, the integral operator $R(z)$ is bounded on
  $L^{1}(X)$ with the upper bound $1/|\Re z|$. In particular,
  $\forall\varphi\in C_{0}^{\infty}(X)$, $R(z)\varphi\in L^{1}(X)$.
\end{lem}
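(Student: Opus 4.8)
The plan is to use the heat kernel representation (\ref{eq:Gz_Laplacetr_p}) to reduce the boundedness of $R(z)$ on $L^{1}(X)$ to the well-known contraction property of the heat semigroup on $L^{1}$. First I would observe that, since the heat kernel $p(t;x,y)$ is a nonnegative smooth function which is symmetric in $x$ and $y$, the integral in (\ref{eq:int_p_eq1}) yields $\int_{X}p(t;x,y)\,\mathrm{d}\mu(x)\le1$ for every $y\in X$ as well. Hence, for $\varphi\in C_{0}^{\infty}(X)$ (or more generally $\varphi\in L^{1}(X)$), one estimates by Tonelli's theorem
\[
\int_{X}\Big|\int_{X}p(t;x,y)\varphi(y)\,\mathrm{d}\mu(y)\Big|\,\mathrm{d}\mu(x)
\le\int_{X}|\varphi(y)|\Big(\int_{X}p(t;x,y)\,\mathrm{d}\mu(x)\Big)\mathrm{d}\mu(y)
\le\|\varphi\|_{L^{1}},
\]
so that $\exp(-tH)$ extends to a contraction on $L^{1}(X)$ for every $t>0$.

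Next I would integrate against $e^{zt}$ in $t$. From (\ref{eq:Gz_Laplacetr_p}) and the fact that $p(t;x,y)\ge0$, for $\Re z<0$ and $\varphi\in C_{0}^{\infty}(X)$ one has $|R(z)\varphi(x)|\le\int_{0}^{\infty}e^{(\Re z)t}\big(\exp(-tH)|\varphi|\big)(x)\,\mathrm{d}t$ pointwise, where the right-hand side is measurable and nonnegative. Applying Tonelli once more to interchange the $t$-integral with the $x$-integral and using the contraction bound just established gives
\[
\|R(z)\varphi\|_{L^{1}}\le\int_{0}^{\infty}e^{(\Re z)t}\,\big\|\exp(-tH)|\varphi|\big\|_{L^{1}}\,\mathrm{d}t
\le\int_{0}^{\infty}e^{(\Re z)t}\,\mathrm{d}t\;\|\varphi\|_{L^{1}}=\frac{1}{|\Re z|}\,\|\varphi\|_{L^{1}}.
\]
Since $C_{0}^{\infty}(X)$ is dense in $L^{1}(X)$ and $R(z)$ is given by the kernel $\mathcal{G}_{z}(x,y)=\int_{0}^{\infty}e^{zt}p(t;x,y)\,\mathrm{d}t$, which is a fixed locally integrable function, this estimate extends to all of $L^{1}(X)$, proving both assertions.

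The only point requiring a little care — the part I would flag as the main technical obstacle — is the justification of the Fubini/Tonelli interchanges, i.e. that the iterated integrals over $t\in(0,\infty)$, $x\in X$, and $y\in X$ may be freely rearranged. This is harmless here precisely because $p(t;x,y)$ is nonnegative, so Tonelli's theorem applies without any a priori integrability hypothesis; the finiteness of the resulting triple integral (bounded by $\|\varphi\|_{L^{1}}/|\Re z|$) then retroactively legitimizes the use of Fubini for the signed function $\varphi$. No completeness or curvature assumption on $X$ is needed, since only the bound (\ref{eq:int_p_eq1}) — valid on an arbitrary Riemannian manifold — is used, not its sharp form.
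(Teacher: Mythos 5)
Your proof is correct and follows essentially the same route as the paper: both rest on the representation (\ref{eq:Gz_Laplacetr_p}) of $\mathcal{G}_{z}$ as the Laplace transform of the nonnegative heat kernel, the bound (\ref{eq:int_p_eq1}) together with the symmetry of $p(t;x,y)$, and a Tonelli interchange yielding the factor $\int_{0}^{\infty}e^{(\Re z)t}\,\mathrm{d}t=1/|\Re z|$. The paper merely compresses this into a single two-line kernel estimate, whereas you phrase the intermediate step as the $L^{1}$-contractivity of $\exp(-tH)$; the content is identical.
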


\begin{proof}
  From (\ref{eq:int_p_eq1}) and (\ref{eq:Gz_Laplacetr_p}) it follows
  that, for all $\varphi\in L^{1}(X)$,
  \[
  \|R(z)\varphi\|_{1}
  \leq\int_{X}\int_{X}|\mathcal{G}_{z}(x,y)||\varphi(y)|\,
  \mbox{d}\mu(x)\mbox{d}\mu(y)\leq\frac{\|\varphi\|_{1}}{|\Re z|}\,.
  \]
  This proves the lemma.
\end{proof}

\begin{lem}
  Suppose $D\subset X$ is a fundamental domain of the action of
  $\Gamma$, $\varphi\in C_{0}^{\infty}(X)$. Then
  \begin{equation}
    \|\Phi_{0}R(z)\varphi\Big|_{D}\|_{1}
    \leq\frac{\|\varphi\|_{1}}{|\Re z|}\,,\ \|\Phi_{0}R(z)\varphi
    \Big|_{D}\|_{\infty}\leq C_{\supp(\varphi)}
    \frac{\|\varphi\|_{\infty}}{|\Re z|}\,,
    \label{eq:PhiRphi_ineqs}
  \end{equation}
  where $C_{\supp(\varphi)}\geq0$ depends only on $\supp(\varphi)$.
  Consequently,
  \begin{equation}
    \|\Phi_{0}R(z)\varphi\Big|_{D}\|_{2}
    \leq\frac{1}{|\Re z|}\,\sqrt{C_{\supp(\varphi)}\|\varphi\|_{1}
      \|\varphi\|_{\infty}}\label{eq:PhiRphi_ineq2}
  \end{equation}
  (the norms of the restrictions to the domain $D$ occurring on the
  left hand sides are taken in $L^{p}(D)$, and otherwise the norms are
  taken in $L^{p}(X)$ for an appropriate $p$).
\end{lem}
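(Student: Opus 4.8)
The plan is to reduce all three estimates to a single fact: $R(z)$ is an integral operator which, by Lemma~\ref{thm:Green_L1} and inequality~(\ref{eq:int_p_eq1}), acts as a contraction on $L^{1}(X)$ up to the factor $1/|\Re z|$, and whose kernel $\mathcal{G}_{z}$ is $\Gamma$-invariant, while $\mu$ is preserved under the isometric action of $\Gamma$. Writing $g=R(z)\varphi$ (a genuine function, since $\mathcal{G}_{z}(x,\cdot)\in L^{1}(X)$ for every $x$ by~(\ref{eq:int_p_eq1}) and~(\ref{eq:Gz_Laplacetr_p})), the section $\Phi_{0}R(z)\varphi$ is represented on $X$ by the $\Gamma$-periodic function $x\mapsto\sum_{\gamma\in\Gamma}g(\gamma\cdot x)$, cf.~(\ref{eq:def_Phi}).

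For the $L^{1}$ bound I would use that the translates $\gamma\cdot D$, $\gamma\in\Gamma$, tile $X$ up to a $\mu$-null set; then the triangle inequality, Tonelli's theorem and the substitution $y=\gamma\cdot x$ give
\[
\Big\|\Phi_{0}R(z)\varphi\big|_{D}\Big\|_{1}\le\int_{D}\sum_{\gamma\in\Gamma}|g(\gamma\cdot x)|\,\mathrm{d}\mu(x)=\sum_{\gamma\in\Gamma}\int_{\gamma\cdot D}|g(y)|\,\mathrm{d}\mu(y)=\|g\|_{L^{1}(X)}\le\frac{\|\varphi\|_{1}}{|\Re z|},
\]
the last step being Lemma~\ref{thm:Green_L1}; this is the first inequality of~(\ref{eq:PhiRphi_ineqs}).

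For the $L^{\infty}$ bound I would invoke the pointwise domination $|\mathcal{G}_{z}(x,y)|\le\mathcal{G}_{\Re z}(x,y):=\int_{0}^{\infty}e^{t\Re z}p(t;x,y)\,\mathrm{d}t$, which follows from~(\ref{eq:Gz_Laplacetr_p}) and positivity of the heat kernel, together with the $\Gamma$-invariance of $\mathcal{G}_{\Re z}$ (the same invariance already noted for $\mathcal{G}_{z}$). For $x\in D$, writing $g(\gamma\cdot x)=\int_{X}\mathcal{G}_{z}(\gamma\cdot x,w)\varphi(w)\,\mathrm{d}\mu(w)$, bounding $|\mathcal{G}_{z}|\le\mathcal{G}_{\Re z}$, using $\mathcal{G}_{\Re z}(\gamma\cdot x,w)=\mathcal{G}_{\Re z}(x,\gamma^{-1}\cdot w)$ and substituting $w\mapsto\gamma^{-1}\cdot w$, Tonelli yields
\[
\sum_{\gamma\in\Gamma}|g(\gamma\cdot x)|\le\|\varphi\|_{\infty}\int_{X}N(w)\,\mathcal{G}_{\Re z}(x,w)\,\mathrm{d}\mu(w),\qquad N(w)=\#\{\gamma\in\Gamma:\gamma\cdot w\in\supp\varphi\}.
\]
The crucial point is that $N$ is bounded on $X$ by $C_{\supp\varphi}:=\#\{\gamma\in\Gamma:\gamma\cdot\supp\varphi\cap\supp\varphi\neq\emptyset\}$, which is finite by proper discontinuity and compactness of $\supp\varphi$ and depends only on $\supp\varphi$: if $\gamma_{0}\cdot w\in\supp\varphi$, then for each $\gamma$ with $\gamma\cdot w\in\supp\varphi$ the point $\gamma\cdot w=(\gamma\gamma_{0}^{-1})\cdot(\gamma_{0}\cdot w)$ lies in $(\gamma\gamma_{0}^{-1})\cdot\supp\varphi\cap\supp\varphi$, so $\gamma\mapsto\gamma\gamma_{0}^{-1}$ injects $\{\gamma:\gamma\cdot w\in\supp\varphi\}$ into the finite set defining $C_{\supp\varphi}$. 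Since, by~(\ref{eq:int_p_eq1}) and Tonelli, $\int_{X}\mathcal{G}_{\Re z}(x,w)\,\mathrm{d}\mu(w)=\int_{0}^{\infty}e^{t\Re z}\Big(\int_{X}p(t;x,w)\,\mathrm{d}\mu(w)\Big)\mathrm{d}t\le1/|\Re z|$ uniformly in $x$, this gives $|\Phi_{0}R(z)\varphi(x)|\le C_{\supp\varphi}\|\varphi\|_{\infty}/|\Re z|$ for every $x$, i.e.\ the second inequality of~(\ref{eq:PhiRphi_ineqs}).

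Finally, (\ref{eq:PhiRphi_ineq2}) follows immediately from $\|h\|_{L^{2}(D)}^{2}\le\|h\|_{L^{\infty}(D)}\,\|h\|_{L^{1}(D)}$ applied to $h=\Phi_{0}R(z)\varphi\big|_{D}$ together with the two bounds above. The only step that is more than bookkeeping is the uniform bound on the multiplicity function $N$; that is precisely where proper discontinuity of the action (together with compactness of $\supp\varphi$) enters, and everything else is a routine use of Tonelli's theorem and of the invariances of $\mu$ and $\mathcal{G}_{z}$.
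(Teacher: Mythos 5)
Your proof is correct and follows essentially the same route as the paper: the tiling of $X$ by $\gamma\cdot D$ plus Lemma~\ref{thm:Green_L1} for the $L^{1}$ bound, the pointwise domination $|\mathcal{G}_{z}|\le\mathcal{G}_{\Re z}$ together with (\ref{eq:int_p_eq1}) for the $L^{\infty}$ bound, and $\|f\|_{2}^{2}\le\|f\|_{1}\|f\|_{\infty}$ for (\ref{eq:PhiRphi_ineq2}). The only difference is cosmetic: where the paper assumes ``without loss of generality'' that the translates $\gamma\cdot\supp(\varphi)$ are disjoint (so $C_{\supp(\varphi)}=1$), you instead bound the multiplicity function $N(w)$ explicitly by $\#\{\gamma:\gamma\cdot\supp\varphi\cap\supp\varphi\neq\emptyset\}$ via proper discontinuity, which makes the constant $C_{\supp(\varphi)}$ explicit and avoids the reduction step.
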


\begin{proof}
  First observe that $f\in L^{1}(X)$ implies $\Phi_{0}f\in L^{1}(D)$
  and $\|\Phi_{0}f\big|_{D}\|_{1}\leq\|f\|_{1}$ where, similarly to
  (\ref{eq:def_Phi}),
  $\Phi_{0}f(x)=\sum_{\gamma\in\Gamma}f(\gamma\cdot x)$.  In fact, one
  has
  \[
  \|\Phi_{0}f\Big|_{D}\|_{1}
  =\int_{D}\Big|\sum_{\gamma\in\Gamma}
  f(\gamma\cdot x)\Big|\,\mbox{d}\mu(x)
  \leq\,\sum_{\gamma\in\Gamma}\int_{\gamma\cdot D}
  |f(x)|\,\mbox{d}\mu(x)=\|f\|_{1}.
  \]
  Now, to get the first inequality in (\ref{eq:PhiRphi_ineqs}) it
  suffices to apply Lemma \ref{thm:Green_L1}. Further, one can assume,
  without loss of generality, that the sets
  $\gamma\cdot\supp(\varphi)$, $\gamma\in\Gamma$, are mutually
  disjoint (in that case $C_{\supp(\varphi)}=1$). Then
  \[
  |\Phi_{0}R(z)\varphi(x)|\leq\|\varphi\|_{\infty}
  \sum_{\gamma\in\Gamma}\int_{\gamma^{-1}\cdot\supp(\varphi)}
  \mathcal{G}_{\Re(z)}(x,y)\,\mbox{d}\mu(y)
  \leq\frac{\|\varphi\|_{\infty}}{|\Re z|}\,.
  \]
  Finally, one has $\|f\|_{2}^{\,2}\leq\|f\|_{1}\|f\|_{\infty}$.
\end{proof}

\begin{prop}
  The equality
  \begin{equation}
    \hat{\mathcal{G}}_{z}(x_{1},x_{2})
    =\sum_{\gamma\in\Gamma}\mathcal{G}_{z}(\gamma\cdot x_{1},x_{2})
    \label{eq:sum_Gzg}
  \end{equation}
  holds on $X\times X$ in the sense of distributions.
\end{prop}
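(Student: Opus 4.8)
The plan is to test the claimed identity against a pair of test functions $\varphi_1,\varphi_2\in C_0^\infty(X)$ and reduce it to a statement about the operators $R(z)$, $\hat R(z)$ and $\Phi_0$. Unwinding the definitions, the right-hand side of (\ref{eq:sum_Gzg}), paired with $\overline{\varphi_1}\otimes\varphi_2$, formally equals $\sum_{\gamma\in\Gamma}\langle L_{\gamma^{-1}}^\ast\varphi_1,R(z)\varphi_2\rangle_{L^2(X)}=\sum_{\gamma\in\Gamma}\langle\varphi_1,L_\gamma^\ast R(z)\varphi_2\rangle$, while the left-hand side equals $\langle\Phi_0\varphi_1,\hat R(z)\Phi_0\varphi_2\rangle_{\hat\sH}$. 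So the heart of the matter is the identity
\[
\langle\Phi_0\varphi_1,\hat R(z)\Phi_0\varphi_2\rangle_{\hat\sH}
=\sum_{\gamma\in\Gamma}\langle\varphi_1,L_\gamma^\ast R(z)\varphi_2\rangle_{L^2(X)},
\]
together with a justification that the series on the right converges absolutely (so that the formal rearrangement is legitimate and defines a genuine distribution on $X\times X$).

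First I would establish the operator identity $\hat R(z)\Phi_0\varphi=\Phi_0 R(z)\varphi$ for $\varphi\in C_0^\infty(X)$, or rather the weak form of it. The point is that $\Phi_0$ intertwines $H$ and $\hat H$ on $C_0^\infty$ in the sense that $\hat H\Phi_0\varphi=\Phi_0 H\varphi$ (this is the analogue of (\ref{eq:LaplacePhi_comm}), using that $-\Delta_{\mathrm{LB}}$ is $\Gamma$-invariant and the action is properly discontinuous so the sum defining $\Phi_0$ is locally finite). Hence for $\varphi\in C_0^\infty(X)$ the section $u:=\Phi_0 R(z)\varphi$ — which lies in $\hat\sH$ by Lemma~\ref{thm:Green_L1} and the preceding lemma, since $R(z)\varphi\in L^1(X)\cap L^2(X)$ and $\Phi_0$ maps such functions into $L^2(D)$ — satisfies $(\hat H-z)u=\Phi_0(H-z)R(z)\varphi=\Phi_0\varphi$ in the distributional sense on $\hat X$. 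I would check that $u$ is in fact in the form domain $H^1_0(\hat X)$ (again via the $L^2(D)$ estimate applied also to $\Phi_0\,\mathrm dR(z)\varphi$, using that $R(z)\varphi$ has an $L^2$ gradient), which pins down $u=\hat R(z)\Phi_0\varphi$ by uniqueness of the self-adjoint inverse. Pairing with $\Phi_0\varphi_1$ then gives $\langle\Phi_0\varphi_1,\hat R(z)\Phi_0\varphi_2\rangle_{\hat\sH}=\langle\Phi_0\varphi_1,\Phi_0 R(z)\varphi_2\rangle_{\hat\sH}$.

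Next I would expand the inner product on the right. Unfolding $\hat\sH$ as $\Gamma$-periodic functions integrated over a fundamental domain $D$, and writing $\Phi_0\varphi_1=\sum_{\gamma}L_\gamma^\ast\varphi_1$, one gets
\[
\langle\Phi_0\varphi_1,\Phi_0 R(z)\varphi_2\rangle_{\hat\sH}
=\int_D\overline{\Big(\sum_{\gamma}\varphi_1(\gamma\cdot x)\Big)}\,\Phi_0R(z)\varphi_2(x)\,\mathrm d\mu(x)
=\sum_{\gamma\in\Gamma}\int_{\gamma^{-1}\cdot D}\overline{\varphi_1(x)}\,\Phi_0R(z)\varphi_2(x)\,\mathrm d\mu(x),
\]
which reassembles to $\int_X\overline{\varphi_1(x)}\,\Phi_0R(z)\varphi_2(x)\,\mathrm d\mu(x)=\langle\varphi_1,\Phi_0R(z)\varphi_2\rangle_{L^2(X)}=\sum_{\gamma\in\Gamma}\langle\varphi_1,L_\gamma^\ast R(z)\varphi_2\rangle_{L^2(X)}$. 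The interchange of summation and integration here is the place where the estimates of the preceding lemma do the work: since $\varphi_1$ has compact support, only finitely many terms $L_{\gamma^{-1}}^\ast\varphi_1$ are nonzero on $\mathrm{supp}(\varphi_1)$-translates meeting $D$ is the wrong way to say it — rather, because $\Phi_0 R(z)\varphi_2\big|_D\in L^1(D)$ with the stated bound, and $\varphi_1$ is bounded with compact support, the double series/integral is absolutely convergent, which simultaneously proves the convergence claim in the proposition (the series $\sum_\gamma\mathcal G_z(\gamma\cdot x_1,x_2)$, integrated against $\overline{\varphi_1}\otimes\varphi_2$, converges absolutely to a well-defined distribution).

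I expect the main obstacle to be the first step — verifying rigorously that $\Phi_0 R(z)\varphi$ actually equals $\hat R(z)\Phi_0\varphi$ rather than merely solving the equation weakly. On a noncompact $\hat X$ with generalized Dirichlet boundary conditions there is a genuine issue of which self-adjoint realization one lands in, so I would need the form-domain membership $u\in H^1_0(\hat X)$, and for that the key input is that $R(z)$ maps $C_0^\infty(X)$ into the form domain $H^1_0(X)$ with $\|\mathrm dR(z)\varphi\|_{L^2}$ controlled, together with the fact that $\Phi_0$ (acting by the periodization $f\mapsto\sum_\gamma L_\gamma^\ast f$) carries $H^1$-functions with controlled $L^1$ and $L^\infty$ norms of $f$ and $\mathrm df$ into $H^1_0(D)=H^1_0(\hat X)$; the compact-support estimate $C_{\mathrm{supp}(\varphi)}$ from the preceding lemma is exactly tailored for this. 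Everything else is bookkeeping with fundamental domains and the already-proven $L^1$/$L^2$ bounds.
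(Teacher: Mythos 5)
Your overall skeleton agrees with the paper's: reduce \eqref{eq:sum_Gzg} to the operator identity $\hat R(z)\Phi_{0}\varphi=\Phi_{0}R(z)\varphi$ for $\varphi\in C_{0}^{\infty}(X)$, using the periodization estimates to control $\Phi_{0}R(z)\varphi$ on a fundamental domain. But at the step you yourself flag as the main obstacle there is a genuine gap, and the remedy you sketch does not work. You propose to identify $u=\Phi_{0}R(z)\varphi$ with $\hat R(z)\Phi_{0}\varphi$ by showing $u$ lies in the form domain $H_{0}^{1}(\hat X)$, ``via the $L^{2}(D)$ estimate applied also to $\Phi_{0}\,\mathrm{d}R(z)\varphi$.'' The estimates of the preceding lemma, however, rest entirely on the positivity of the heat kernel and the sub-Markov bound \eqref{eq:int_p_eq1}: they give $\|R(z)\varphi\|_{1}$ and a pointwise bound on $\Phi_{0}R(z)\varphi$, and from these an $L^{2}(D)$ bound by interpolation. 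None of this is available for $\mathrm{d}R(z)\varphi$ -- there is no positivity, no $L^{1}(X)$ bound, no $L^{\infty}$ bound, and knowing only $\mathrm{d}R(z)\varphi\in L^{2}(X)$ does not make its periodization locally square integrable. Moreover, even if you did prove $\Phi_{0}\,\mathrm{d}R(z)\varphi\in L^{2}(D)$, that yields $u\in H^{1}$ near $\hat X$, not $u\in H_{0}^{1}(\hat X)$; on a noncomplete manifold with generalized Dirichlet conditions the distinction is exactly the point at issue, so ``uniqueness'' cannot be invoked.

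The paper closes this step differently, and you should adopt that route: from the locally-$L^{1}$ convergence of the periodized series one gets $(-\Delta_{\mathrm{LB}}-z)\Phi_{0}R(z)\varphi=\Phi_{0}\varphi$ in the sense of distributions; elliptic regularity then makes $\Phi_{0}R(z)\varphi$ smooth; the Green formula gives $\langle(\hat H-\bar z)\Phi_{0}\psi,\Phi_{0}R(z)\varphi\rangle=\langle\Phi_{0}\psi,\Phi_{0}\varphi\rangle$ for all $\psi\in C_{0}^{\infty}(X)$; and since $\Phi_{0}\bigl(C_{0}^{\infty}(X)\bigr)$ is a core of $\hat H$ and $\hat H$ is selfadjoint, this weak identity already forces $\Phi_{0}R(z)\varphi\in\Dom(\hat H)$ with $(\hat H-z)\Phi_{0}R(z)\varphi=\Phi_{0}\varphi$, i.e.\ \eqref{eq:HhatPhiRzphi}, with no form-domain membership or gradient control ever needed. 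A further, more minor point: the proposition asserts convergence of the series as a distribution on $X\times X$, so you must verify convergence against arbitrary test functions $\varphi\in C_{0}^{\infty}(X\times X)$ (the paper does this with the covering of $K_{1}$ by finitely many sets with disjoint $\Gamma$-translates and then invokes the completeness of the distribution space); testing only product functions $\overline{\varphi_{1}}\otimes\varphi_{2}$, as in your sketch, does not by itself establish this.
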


\begin{proof}
  To show that series (\ref{eq:sum_Gzg}) converges in the sense of
  distributions it suffices to verify its convergence on every test
  function $\varphi\in C_{0}^{\infty}(X\times X)$. Such a
  simplification is possible in view of the definition of the topology
  in the distribution space and, in particular, owing to the
  completeness of this space \cite[\S5.3]{vladimirov}. Choose
  $\varphi\in{}C_{0}^{\infty}(X\times X)$. Then
  $\supp\varphi\subset{}K_{1}\times K_{2}$ for some compact subsets
  $K_{1},K_{2}\subset X$. There exist open sets $U_{j}\subset X$,
  $1\leq j\leq N$, so that $K_{1}\subset\cup U_{j}$ and the sets
  $\gamma\cdot U_{j}$, $\gamma\in\Gamma$, are mutually disjoint for
  every $j$. One has, for $\Re z<0,$
\begin{eqnarray*}
  \sum_{\gamma\in\Gamma}\left|\mathcal{G}_{z}(\gamma\cdot x_{1},x_{2})
    \big(\varphi(x_{1},x_{2})\big)\right|
  & \leq & \|\varphi\|_{\infty}\sum_{\gamma\in\Gamma}
  \int_{K_{1}\times K_{2}}
  \mathcal{G}_{\Re(z)}(\gamma\cdot x_{1},x_{2})\,
  \mbox{d}\mu(x_{1})\mbox{d}\mu(x_{2})\\
  & \leq & \|\varphi\|_{\infty}\sum_{j=1}^{N}
  \sum_{\gamma\in\Gamma}\int_{\gamma\cdot U_{j}
    \times K_{2}}\mathcal{G}_{\Re(z)}(x_{1},x_{2})\,
  \mbox{d}\mu(x_{1})\mbox{d}\mu(x_{2})\\
  & \leq & \frac{N\|\varphi\|_{\infty}}{|\Re z|}\,\mu(K_{2}).
\end{eqnarray*}

Let us denote $\mathcal{S}_{z}(x_{1},x_{2})=\sum_{\gamma\in\Gamma}%
\mathcal{G}_{z}(\gamma\cdot{}x_{1},x_{2})\in\sD'(X\times X)$. In
particular, one has
\[
\forall\varphi_{1},\varphi_{2}\in C_{0}^{\infty}(X),
\mbox{ }\mathcal{S}_{z}(\varphi_{1}\otimes\varphi_{2})
=\sum_{\gamma\in\Gamma}\mathcal{G}_{z}(L_{\gamma}^{\ast}\varphi_{1}
\otimes\varphi_{2}).
\]
Showing that $\hat{\mathcal{G}}_{z}=\mathcal{S}_{z}$ means proving the
equality
\[
\forall\varphi_{1},\varphi_{2}\in C_{0}^{\infty}(X),
\mbox{
}\langle\Phi_{0}\varphi_{1},\hat{R}(z)\Phi_{0}\varphi_{2}\rangle
=\int_{X}\overline{\Phi_0\varphi_{1}}\, R(z)\varphi_{2}\,\mbox{d}\mu
\]
(the integral makes sense since
$\|\Phi_{0}\varphi_{1}\|_{\infty}<\infty$,
$\|R(z)\varphi_{2}\|_{1}<\infty$). This is further equivalent to
\[
\forall\varphi\in C_{0}^{\infty}(X),\mbox{ }\hat{R}(z)\Phi_{0}\varphi
=\Phi_{0}R(z)\varphi.
\]
To verify this relation one has to show that
\begin{equation}
  \forall\varphi\in C_{0}^{\infty}(X),\mbox{ }\Phi_{0}R(z)\varphi
  \in\Dom(\hat{H})\mbox{ and }(\hat{H}-z)\Phi_{0}R(z)\varphi
  =\Phi_{0}\varphi.\label{eq:HhatPhiRzphi}
\end{equation}

Let us prove (\ref{eq:HhatPhiRzphi}). From (\ref{eq:PhiRphi_ineq2}) it
follows that $\Phi_{0}R(z)\varphi$ is a $\Gamma$-periodic measurable
function on $X$ which is $L^{2}$ integrable over a fundamental domain;
so it belongs to the Hilbert space $\hat{\sH}$. Since the series
$\sum_{\gamma\in\Gamma}R(z)\varphi(\gamma\cdot x)$ converges in the
$L^{1}$ norm over any compact subset of $X$ it converges in the sense
of distributions. Hence, in the sense of distributions,
\[
(-\Delta_{\mathrm{LB}}-z)\sum_{\gamma\in\Gamma}R(z)
\varphi(\gamma\cdot x)=\sum_{\gamma\in\Gamma}
\left((-\Delta_{\mathrm{LB}}-z)R(z)\varphi\right)(\gamma\cdot x)
=\sum_{\gamma\in\Gamma}\varphi(\gamma\cdot x)=\Phi_{0}\varphi(x),
\]
i.e.\ $(-\Delta_{\mathrm{LB}}-z)\Phi_{0}R(z)\varphi=\Phi_{0}\varphi$.
Since $-\Delta_{\mathrm{LB}}-z$ is an elliptic operator with smooth
coefficients and $\Phi_{0}\varphi$ is a smooth function, both of them
on $X$, by the elliptic regularity theorem, $\Phi_{0}R(z)\varphi$ is
smooth as well (see, for example, \cite[Appendix~4~\S5]{lang}). By the
Green formula,
\[
\forall\varphi,\psi\in C_{0}^{\infty}(X),\mbox{ }
\big\langle(\hat{H}-\bar{z})\Phi_{0}\psi,\Phi_{0}R(z)\varphi\big\rangle
=\langle\Phi_{0}\psi,\Phi_{0}\varphi\rangle.
\]
This equality implies (\ref{eq:HhatPhiRzphi}) since
$\Phi_{0}\big(C_{0}^{\infty}(X)\big)$ is a core of $\hat{H}$.
\end{proof}

\subsection{The case of a trivial line bundle over $M$ }

The Green function formula (\ref{eq:formulaG}) (or the propagator
formula (\ref{eq:formulaK})) essentially simplifies in the particular
case when the gauge group is $U(1)$ and the line bundle over $M$ is
trivial. Then the line bundle over $\tilde{M}$ is trivial as well. On
the other hand, the connection need not be flat. Let us shortly
indicate the basic modifications.

Suppose $\mathfrak{V}=M\times\mathbb{C}$ with a Hermitian structure
$\mathfrak{h}$ given by the standard scalar product in $\mathbb{C}$,
and with a connection $\nabla=\textrm{d}+\alpha$ where $\alpha$ is a
one-form on $M$ with values in $i\mathbb{R}$. If $\pi:\tilde{M}\to M$
is a covering space with a structure group $\Gamma$ then
$\tilde{\mathfrak{V}}=\pi^{\ast}\mathfrak{V}=\tilde{M}\times\mathbb{C}$.
The group $\Gamma$ acts on $\tilde{\mathfrak{V}}$ as an identity on
the fibers, $\Psi_{\gamma}(y,z)=(\gamma\cdot y,z)$ for
$(y,z)\in\tilde{M}\times\mathbb{C}$ and $\gamma\in\Gamma$.
Furthermore, $\tilde{\nabla}=\mbox{d}+\tilde{\alpha}$,
$\tilde{\alpha}=\pi^{\ast}\alpha$. Let us denote by
$L_{\gamma}^{\ast}$ the pullback mapping for the left action of
$\gamma\in\Gamma$ on $\tilde{M}$. Note that the invariance condition
(\ref{eq:inv_nabla_tild}) for the connection $\tilde{\nabla}$ means
that $L_{\gamma}^{\ast}\tilde{\alpha}=\tilde{\alpha}$,
$\forall\gamma\in\Gamma$, and this is clearly satisfied. If $\Lambda$
is a one-dimensional unitary representation of $\Gamma$ then
$\tilde{\mathfrak{V}}^{\Lambda}=\tilde{M}\times\mathbb{C}$,
$\tilde{\mathfrak{h}}^{\Lambda}=\tilde{\mathfrak{h}}$,
$\tilde{\nabla}^{\Lambda}=\tilde{\nabla}$. Note, however, that
$\Psi_{\gamma}^{\Lambda}(y,z)=(\gamma\cdot y,\Lambda(\gamma)z)$.

Suppose $\eta$ is a nowhere-vanishing complex function on $\tilde{M}$
such that $\forall\gamma\in\Gamma$,
$L_{\gamma}^{\ast}\eta=\Lambda(\gamma)\eta$. Clearly, $|\eta|$ is
$\Gamma$-periodic. Replacing $\eta$ by $\eta/|\eta|$ one can assume
that $|\eta|\equiv1$ on $\tilde{M}$. Then $\eta$ defines a
trivialization of $\mathfrak{V}^{\Lambda}$. In that case
$\mathfrak{V}^{\Lambda}=M\times\mathbb{C}$,
$\mathfrak{h}^{\Lambda}=\mathfrak{h}$ and
$\alpha^{\Lambda}=\alpha+(\mbox{d}\eta)\eta^{-1}$ (more precisely,
$(\mbox{d}\eta)\eta^{-1}$ is a $\Gamma$-invariant one-form on
$\tilde{M}$ which projects to a one-form on $M$).

One also identifies $L^{2}(\tilde{\mathfrak{V}})=L^{2}(\tilde{M})$,
$L^{2}(\mathfrak{V}^{\Lambda})=L^{2}(M)$. Let
$\Delta_{\mathrm{B}}^{\Lambda}$ be the Bochner Laplacian corresponding
to the connection $\nabla^{\Lambda}=\mbox{d}+\alpha^{\Lambda}$. Thus
we consider $H^{\Lambda}=-\Delta_{\mathrm{B}}^{\Lambda}+V$ as an
operator in $L^{2}(M)$ rather than in the Hilbert space of
$\Lambda$-equivariant functions on $\tilde{M}$. Let us denote the
latter space by $\sH^{\Lambda}$.  The two spaces are related by the
unitary mapping
\begin{equation}
  L^{2}(M)\to\sH^{\Lambda}:\psi\mapsto(\pi^{\ast}\psi)\eta.
  \label{eq:unitary_L2M_HL}
\end{equation}
With this trivialization, it is natural to define the Green function
$\mathcal{G}_{z}^{\Lambda}$ of $H^{\Lambda}$ in the standard manner as
a distribution on $M\times M$,
\[
\forall\psi_{1},\psi_{2}\in C_{0}^{\infty}(M),\mbox{ }
\mathcal{G}_{z}^{\Lambda}(\overline{\psi_{1}}\otimes\psi_{2})
=\langle\psi_{1},R^{\Lambda}(z)\psi_{2}\rangle.
\]
One has to keep in mind that $\mathcal{G}_{z}^{\Lambda}$ depends on
the choice of $\eta$. This definition differs from that given in
(\ref{eq:def_betaL}) and so one has to somewhat modify the Green
function formula (\ref{eq:formulaG}). To this end, it is useful to
identify distributions on $M$ with $\Gamma$-periodic distributions on
$\tilde{M}$ as explained in the following remark.

\begin{rem} Denote by $C_{\Gamma}^{\infty}(\tilde{M})$ the vector
space of $\Gamma$-periodic smooth functions on $\tilde{M.}$ Let
us define \[
\Phi_{0}:C_{0}^{\infty}(\tilde{M})\to C_{\Gamma}^{\infty}(\tilde{M}):\varphi\mapsto\sum_{\gamma\in\Gamma}L_{\gamma}^{\ast}\varphi.\]
For $f\in\sD'(M)$ one introduces $\pi^{\ast}f\in\sD'(\tilde{M})$
by the prescription\[
\forall\varphi\in C_{0}^{\infty}(\tilde{M}),\mbox{ }\pi^{\ast}f(\varphi)=f(\psi)\mbox{ where }\pi^{\ast}\psi=\Phi_{0}\varphi.\]
The functional $\pi^{\ast}f$ is well defined on $C_{0}^{\infty}(\tilde{M})$,
and it is not difficult to verify that it is continuous. Hence $\pi^{\ast}f\in\sD'(\tilde{M})$.
Moreover, $\pi^{\ast}f$ is $\Gamma$-periodic.

Conversely, suppose $\tilde{g}\in\sD'(\tilde{M})$ is $\Gamma$-periodic.
Note that for every $\psi\in C_{0}^{\infty}(M)$ there exists $\varphi\in C_{0}^{\infty}(\tilde{M})$
such that $\Phi_{0}\varphi=\pi^{\ast}\psi$; $\varphi$ is, however,
ambiguous. On the other hand, if for some $\varphi\in C_{0}^{\infty}(\tilde{M})$,
$\Phi_{0}\varphi=0$ then $\tilde{g}(\varphi)=0$. In fact, there
exists $\chi\in C_{0}^{\infty}(\tilde{M})$ such that $\Phi_{0}\chi\equiv1$
on a neighborhood of $\supp\varphi$. Since $\tilde{g}$ is $\Gamma$-invariant
one has \[
\tilde{g}(\varphi)=\tilde{g}(\varphi\Phi_{0}\chi)=\tilde{g}(\chi\Phi_{0}\varphi)=0.\]
Let us define a functional $g$ on $C_{0}^{\infty}(M)$,\[
\forall\psi\in C_{0}^{\infty}(M),\mbox{ }g(\psi)=\tilde{g}(\varphi)\mbox{ where }\varphi\in C_{0}^{\infty}(\tilde{M})\mbox{ is s.t. }\Phi_{0}\varphi=\pi^{\ast}\psi.\]
Then $g$ is well defined and continuous. One observes that $g$ is
the only distribution on $M$ satisfying $\pi^{\ast}g=\tilde{g}$.

One concludes that the mapping $\sD'(M)\to\sD'(\tilde{M}):f\mapsto\pi^{\ast}f$
induces an isomorphism of $\sD'(M)$ onto the space of $\Gamma$-periodic
distributions on $\tilde{M}$. \end{rem}

Let us regard $\mathcal{G}_{z}^{\Lambda}$ as a
$\Gamma\times\Gamma$-periodic distribution on
$\tilde{M}\times\tilde{M}$. Using the unitary map
(\ref{eq:unitary_L2M_HL}) it is a matter of straightforward
manipulations to show that the Green function formula now reads
\begin{equation}
  \mathcal{G}_{z}^{\Lambda}(y_{1},y_{2})
  =\overline{\eta(y_{1})}\eta(y_{2})
  \sum_{\gamma\in\Gamma}\Lambda(\gamma)
  \tilde{\mathcal{G}}_{z}(\gamma^{-1}\cdot y_{1},y_{2}).
  \label{eq:formulaGpart}
\end{equation}

\subsection{The case of a trivializable line bundle over $\tilde{M}$ \label{sec:triv_line_Mtilde}}

Next we are going to discuss a still rather particular case but more
general than in the preceding subsection. The gauge group is again
supposed to be $U(1)$. The line bundle over $M$ need not be trivial
neither is the connection required to be flat. We assume, however,
that after a pull-back one gets a line bundle over the covering space
$\tilde{M}$ which is trivializable. As is well known, this surely
happens if $H^{2}(\tilde{M},\mathrm{\mathbb{Z}})=0$.

The assumption that the line bundle $\tilde{\mathfrak{V}}=\pi^{\ast}\mathfrak{V}$
be trivializable means that there exists a nowhere vanishing smooth
section $\eta\in C^{\infty}(\tilde{\mathfrak{V}})$. Without loss
of generality we assume that $\tilde{\mathfrak{h}}(\eta,\eta)=1$.
Using $\eta$ one passes from $\tilde{\mathfrak{V}}$ to the trivial
line bundle $\tilde{M}\times\mathbb{C}$. In particular, one has the
isomorphism\begin{equation}
C^{\infty}(\tilde{M})\to C^{\infty}(\tilde{\mathfrak{V}}):\varphi\mapsto\varphi\eta.\label{eq:Cinf_Vtilde_triv}\end{equation}
The Hermitian structure in $\tilde{M}\times\mathbb{C}$ is given by
the standard scalar product in $\mathbb{C}$, the covariant derivative
$\tilde{\nabla}$ becomes $\mbox{d}+\tilde{\alpha}$ where $\tilde{\alpha}$
is a one-form on $\tilde{M}$ with values in $i\mathbb{R}$.

Given $y\in\tilde{M}$ one can compare the values $\eta(y)$ and $\eta(\gamma\cdot y)$
for any $\gamma\in\Gamma$ since $\tilde{\mathfrak{V}}_{y}=\tilde{\mathfrak{V}}_{\gamma\cdot y}=\mathfrak{V}_{\pi(y)}$.
Abusing somewhat the previously used notation let us write\[
\forall\gamma\in\Gamma,\forall y\in\tilde{M},\ \eta(\gamma\cdot y)=\Psi_{\gamma}(y)^{-1}\eta(y),\]
where $\Psi_{\gamma}(y)\in\mathbb{C}$ is unambiguously determined
by this equality. Then for any $\gamma\in\Gamma$, $\Psi_{\gamma}$
is a smooth complex function on $\tilde{M}$ with values in the unit
circle. The fiber-wise linear action of $\Gamma$ on $\tilde{M}\times\mathbb{C}$
takes the form\[
\tilde{M}\times\mathbb{C}\to\tilde{M}\times\mathbb{C}:\,(y,z)\mapsto\big(\gamma\cdot y,\Psi_{\gamma}(y)z\big),\ \mbox{with\ }\gamma\in\Gamma.\]
The composition rule for this action means that\[
\forall\gamma_{1},\gamma_{2}\in\Gamma,\ \big(L_{\gamma_{2}}^{\,\ast}\Psi_{\gamma_{1}}\big)\Psi_{\gamma_{2}}=\Psi_{\gamma_{1}\gamma_{2}}.\]
Here again, $L_{\gamma}^{\ast}$ stands for the pull-back mapping
of the left action of $\gamma\in\Gamma$ on $\tilde{M}$.

Sections of the line bundle $\mathfrak{V}$ over $M$ can be naturally
identified with $\Gamma$-invariant sections of the line bundle
$\tilde{\mathbb{\mathbb{\mathfrak{V}}}}$ over $\tilde{M}$. Employing
the isomorphism (\ref{eq:Cinf_Vtilde_triv}) this means that smooth
(measurable) sections of $\mathfrak{V}$ are identified with those
smooth (measurable) functions $\varphi$ on $\tilde{M}$ which satisfy
everywhere (almost everywhere) the condition
\[
\forall\gamma\in\Gamma,\ \varphi(\gamma\cdot y)
=\Psi_{\gamma}(y)\varphi(y),
\]
i.e.\ $L_{\gamma}^{\,\ast}\varphi=\Psi_{\gamma}\varphi$. The operators
$W_{\gamma}$ defined in (\ref{eq:def_Us}) for the general case now act
either in $C^{\infty}(\tilde{M})$ or as unitary operators in
$L^{2}(\tilde{M})$ according to the rule
\begin{equation}
  \forall\gamma\in\Gamma,\ W_{\gamma}\varphi(y)
  =\Psi_{\gamma}(\gamma^{-1}\cdot y)\varphi(\gamma^{-1}\cdot y).
  \label{eq:Wgamma_magnet_transl}
\end{equation}
Thus sections of $\mathfrak{V}$ correspond exactly to those functions
on $\tilde{M}$ which satisfy $W_{\gamma}\varphi=\varphi$,
$\forall\gamma\in\Gamma$.

From the physical point of view it is of interest to observe that now
the invariance condition (\ref{eq:inv_nabla_tild}) for the connection
$\tilde{\nabla}$ in general does not mean that
$L_{\gamma}^{\ast}\tilde{\alpha}=\tilde{\alpha}$ for
$\gamma\in\Gamma$. A straightforward computation gives the correct
invariance condition for this case, namely
\[
\forall\gamma\in\Gamma,\ L_{\gamma^{-1}}^{\,\ast}\tilde{\alpha}
=\tilde{\alpha}+\Psi_{\gamma}^{\,-1}\mbox{d}\Psi_{\gamma}.
\]
On the other hand, the curvature
$\tilde{\Omega}=\mbox{d}\tilde{\alpha}$ or, in other words, the two
form of the magnetic field (up to a constant multiplier) fulfills
$L_{\gamma}^{\ast}\tilde{\Omega}=\tilde{\Omega}$,
$\forall\gamma\in\Gamma$. In the analysis of physical systems with an
invariant nonzero magnetic flux the so called magnetic translations
turn out to be a useful tool \cite{zak1,zak2}. In our formalism the
magnetic translations coincide with the operators $W_{\gamma}$ defined
in (\ref{eq:Wgamma_magnet_transl}). Particularly, the symmetry of the
Hamiltonian $\tilde{H}=-\Delta_{B}+\tilde{V}$ is reflected by the fact
that it commutes with all $W_{\gamma}$, $\gamma\in\Gamma$.

Let $\Lambda$ be a one-dimensional unitary representation of $\Gamma$.
The line bundle $\tilde{\mathfrak{V}}^{\Lambda}$ is again identified
with $\tilde{M}\times\mathbb{C}$, and the Hermitian structure and the
covariant derivative remain unchanged. What is modified, however, is
the action of $\Gamma$ on $\tilde{M}\times\mathbb{C}$. The modified
action was called $\Psi_{\gamma}^{\Lambda}$ in
Subsection~\ref{sec:assoc_vect_bundles} and now it takes the form
\[
\tilde{M}\times\mathbb{C}\to\tilde{M}\times\mathbb{C}:
(y,z)\mapsto\big(\gamma\cdot y,\Lambda(\gamma)
\Psi_{\gamma}(y)z\big),\ \mbox{with\ }\gamma\in\Gamma.
\]
Sections of $\mathfrak{V}^{\Lambda}$ are again identified with functions
$\varphi$ on $\tilde{M}$ fulfilling
\begin{equation}
  \forall\gamma\in\Gamma,\ \varphi(\gamma\cdot y)
  =\Lambda(\gamma)\Psi_{\gamma}(y)\varphi(y).
  \label{eq:L_equivar_Psi}
\end{equation}
This identification implies an isomorphism of
$L^{2}(\mathfrak{V}^{\Lambda})$ with the Hilbert space of
$\Lambda$-equivariant functions on $\tilde{M}$, i.e.\ with the Hilbert
space $\sH^{\Lambda}$ formed by measurable functions on $\tilde{M}$
satisfying (\ref{eq:L_equivar_Psi}) almost everywhere and being square
integrable over a fundamental domain of the action of $\Gamma$. Then
the formal differential expression corresponding to the Hamiltonian
$\tilde{H}$ acting in $L^{2}(\tilde{M})$ is the same as that for
$H^{\Lambda}$ acting in $\sH^{\Lambda}$.

The Green function of $\tilde{H}$ is a distribution $\tilde{\mathcal{G}}_{z}$
on $\tilde{M}\times\tilde{M}$. According to (\ref{eq:G_invarince_kernel}),
it has the symmetry property\[
\forall\gamma\in\Gamma,\ \tilde{\mathcal{G}}_{z}(\gamma\cdot y_{1},\gamma\cdot y_{2})=\Psi_{\gamma}(y_{1})\overline{\Psi_{\gamma}(y_{2})}\,\tilde{\mathcal{G}}_{z}(y_{1},y_{2}).\]
The Green function $\mathcal{G}_{z}^{\Lambda}$ of $H^{\Lambda}$
is a distribution on $\tilde{M}\times\tilde{M}$ as well. According
to (\ref{eq:Ws_beta}) it is $\Lambda$-equivariant, and this property
now reads\[
\forall\gamma\in\Gamma,\ \mathcal{G}_{z}^{\Lambda}(\gamma\cdot y_{1},y_{2})=\Lambda(\gamma)\Psi_{\gamma}(y_{1})\mathcal{G}_{z}^{\Lambda}(y_{1},y_{2}),\,\mathcal{G}_{z}^{\Lambda}(y_{1},\gamma\cdot y_{2})=\Lambda(\gamma^{-1})\overline{\Psi_{\gamma}(y_{2})}\,\mathcal{G}_{z}^{\Lambda}(y_{1},y_{2}).\]
Finally, in this example the Green function formula (\ref{eq:formulaG})
takes the form\begin{eqnarray}
\mathcal{G}_{z}^{\Lambda}(y_{1},y_{2}) & = & \sum_{\gamma\in\Gamma}\,\Lambda(\gamma)\Psi_{\gamma}(\gamma^{-1}\cdot y_{1})\tilde{\mathcal{G}}_{z}(\gamma^{-1}\cdot y_{1},y_{2})\nonumber \\
 & = & \sum_{\gamma\in\Gamma}\,\Lambda(\gamma^{-1})\overline{\Psi_{\gamma}(\gamma^{-1}\cdot y_{2})}\,\tilde{\mathcal{G}}_{z}(y_{1},\gamma^{-1}\cdot y_{2}).\label{eq:formG_trivializable}\end{eqnarray}

\subsection{A constant magnetic field on a torus}

As an illustration of the formalism described in
Subsection~\ref{sec:triv_line_Mtilde} let us consider an example in
which $M$ is the two-dimensional torus
$\mathbb{T}^{2}=(\mathbb{R}/2\pi\mathbb{Z})^{2}$. This example shares
the basic nontrivial features of the approach and, at the same time,
it is sufficiently simple to allow for explicit computations. The
Riemannian structure on $\mathbb{T}^{2}$ is induced by the standard
scalar product in $\mathbb{R}^{2}$. Let $\tilde{M}$ be the universal
covering space of the torus, i.e.\ $\tilde{M}=\mathbb{R}^{2}$. Hence
the covering group is $\Gamma=(2\pi\mathbb{Z})^{2}$. The gauge group
is supposed to be $U(1)$ and thus only line bundles are considered.
We do not require that line bundles over $M$ be trivial. For
convenience we choose a connection in such a way that the magnetic
field (curvature) is constant in the natural coordinate system on the
torus. Since any line bundle over $\mathbb{R}^{2}$ is trivializable
one can actually employ the notation and the formalism introduced in
Subsection~\ref{sec:triv_line_Mtilde}.

The magnetic Bloch analysis on a torus has already been discussed in
detail in \cite{aschetal} for the case when there is no scalar
potential. Here we focus on the inverse procedure, i.e.\ on the
formula for Green functions (\ref{eq:formulaG}). As a slight
modification if compared to \cite{aschetal}, we prefer to use the
Landau gauge rather than the symmetric one. The former gauge is also
frequently used in the physical literature dedicated to quantum
systems describing a particle on a torus in a constant magnetic field
\cite{zainuddin,alhashimiwiese}.

The standard coordinates on $\mathbb{R}^{2}$ are denoted $(x,y)$ while
the angle coordinates on $\mathbb{T}^{2}$ are denoted $(\phi,\theta)$.
The projection $\pi:\mathbb{R}^{2}\to\mathbb{T}^{2}$ is given by
\[
\phi=x\,(\mbox{mod}\,2\pi),\ \theta=y\,(\mbox{mod}\,2\pi).
\]
Since $H^{2}(\mathbb{T}^{2},\mathbb{Z})=\mathbb{Z}$, equivalence
classes of line bundles over $\mathbb{T}^{2}$ are labeled by integers.
To describe a line bundle over $\mathbb{T}^{2}$ in terms of transition
functions we cover the torus by two cylinders:
 \[
U_{1}=(0,2\pi)\times\mathbb{T}^{1},\ U_{2}
=(-\pi,\pi)\times\mathbb{T}^{1}.
\]
The coordinates on $U_{1}$ and $U_{2}$ are $(\phi_{1},\theta)$ and
$(\phi_{2},\theta)$, respectively, and so the coordinate $\theta$ is
the same for both cylinders. The intersection $U_{1}\cap U_{2}$ equals
the disjoint union $U_{12A}\cup U_{12B}$ where $U_{12A}$ is
$(0,\pi)\times\mathbb{T}^{1}$ both in the coordinates
$(\phi_{1},\theta)$ and $(\phi_{2},\theta)$ while $U_{12B}$ is
identified either with $(\pi,2\pi)\times\mathbb{T}^{1}$ in the
coordinates $(\phi_{1},\theta)$ or with $(-\pi,0)\times\mathbb{T}^{1}$
in the coordinates $(\phi_{2},\theta)$.  The transformation of
coordinates on $U_{1}\cap U_{2}$ is given in an obvious way, namely
\begin{eqnarray*}
  & \phi_{2}=\phi_{1}\phantom{\phantom{.-2\pi}} &
  \mbox{if\ }\,0<\phi_{1}<\pi,\ 0<\phi_{2}<\pi,\\
  & \phi_{2}=\phi_{1}-2\pi &
  \mbox{if\ }\,\pi<\phi_{1}<2\pi,\ -\pi<\phi_{2}<0.
\end{eqnarray*}

Any line bundle over a cylinder is equivalent to a trivial one. We
glue together $U_{1}\times\mathbb{C}$ and $U_{2}\times\mathbb{C}$ by a
transition function $\tau$ defined on $U_{1}\cap U_{2}$, with values
in $U(1)$. We put
\[
\tau=1\ \mbox{on\ }U_{12A},\ \tau
=e^{-iN\theta}\,\ \mbox{on\ }U_{12B},\ \mbox{where\ }
N\in\mathbb{Z}\ \mbox{is fixed}.
\]
Let $\mathfrak{V}_{N}$ denote the resulting line bundle over
$\mathbb{T}^{2}$.  Then a section in $\mathfrak{V}_{N}$ is determined
by a couple of complex functions $(\psi_{1},\psi_{2})$ defined on
$U_{1}$ and $U_{2}$, respectively, so that $\psi_{1}=\tau\psi_{2}$ on
$U_{1}\cap U_{2}$.  A connection in $\mathfrak{V}_{N}$ is determined
by a couple of one-forms $(\alpha_{1},\alpha_{2})$ defined
respectively on $U_{1}$ and $U_{2}$, with values in $i\mathbb{R}$, and
such that $\alpha_{2}=\alpha_{1}+\tau^{-1}\mbox{d}\tau$ on
$U_{1}\cap{}U_{2}$. Our choice is
\[
\alpha_{1}=\frac{iN}{2\pi}\,\phi_{1}\mbox{d}\theta,\ 
\alpha_{2}=\frac{iN}{2\pi}\,\phi_{2}\mbox{d}\theta.
\]
For the curvature (the two-form of the magnetic field) we get%
\hspace{1.2em}%
$\Omega=\mbox{d}\alpha=\linebreak%
iN/(2\pi)\,\mbox{d}\phi\wedge\mbox{d}\theta$,
and one has
\[
\frac{1}{2\pi i}\,\int_{\mathbb{T}^{2}}\Omega=N.
\]
Since $\tau$ takes its values in $U(1)$, the Hermitian structure in
$\mathfrak{V}_{N}$ is induced by the standard scalar product in
$\mathbb{C}$.

A crucial role in the formalism of Subsection~\ref{sec:triv_line_Mtilde}
is played by the family of functions $\Psi_{\gamma}$, $\gamma\in\Gamma$.
To find these functions we first need to describe the line bundle
$\tilde{\mathfrak{V}}_{N}=\pi^{\ast}\mathfrak{V}_{N}$. To this end,
let us cover $\mathbb{R}^{2}$ by two countable families of open strips,
namely $\tilde{U}_{1,k}=(2\pi k,2\pi(k+1))\times\mathbb{R}$ and $\tilde{U}_{2,k}=(\pi(2k-1),\pi(2k+1))\times\mathbb{R}$,
$k\in\mathbb{Z}$. Then $\pi(\tilde{U}_{1,k})=U_{1}$ and $\pi(\tilde{U}_{2,k})=U_{2}$,
$\forall k\in\mathbb{Z}$. A section of $\tilde{\mathfrak{V}}_{N}$
is determined by two countable families of functions, namely $\varphi_{1,k}$
defined on $\tilde{U}_{1,k}$ and $\varphi_{2,k}$ defined on $\tilde{U}_{2,k}$,
$k\in\mathbb{Z}$, such that\[
\varphi_{1,k}(x,y)=\varphi_{2,k}(x,y)\,\ \mbox{on\ }\big(2\pi k,\pi(2k+1)\big)\times\mathbb{R},\]
and\[
\varphi_{1,k}(x,y)=e^{-iNy}\varphi_{2,k+1}(x,y)\,\ \mbox{on\ }\big(\pi(2k+1),2\pi(k+1)\big)\times\mathbb{R}.\]
For a nowhere vanishing smooth section $\eta$ of $\tilde{\mathfrak{V}}_{N}$
we choose the families\[
\eta_{1,k}(x,y)=e^{ikNy},\ \eta_{2,k}(x,y)=e^{ikNy},\ k\in\mathbb{Z}.\]
Then $\eta$ determines a trivialization of $\tilde{\mathfrak{V}}_{N}$,
and one finds that\[
\Psi_{\gamma}(x,y)=e^{-imNy}\,\ \mbox{for\ }\gamma=(2\pi m,2\pi n)\in(2\pi\mathbb{Z})^{2}.\]

Below we consider the case with vanishing scalar potential so that
explicit computations are possible. Furthermore, we leave aside the
case $N=0$ which has particular properties but whose discussion is
very elementary. Thus the Hamiltonian $\tilde{H}$ describes a charged
particle in a homogeneous magnetic field on the plane,
\begin{equation}
  \tilde{H}=-\partial_{x}^{\,2}
  -\left(\partial_{y}+\frac{iN}{2\pi}\,x\right)^{\!2}\ \mbox{\ in\ }
  L^{2}(\mathbb{R}^{2},\mbox{d}x\mbox{d}y).
  \label{eq:Htilde_torus}
\end{equation}
The symmetry of $\tilde{H}$ is demonstrated by the fact that it
commutes with all magnetic translations $W_{\gamma}$,
$\gamma\in\Gamma$. One can establish the symmetry even under somewhat
more general circumstances, for any real $N$ (not necessarily an
integer). Defining the operators $W_{a,b}$, with $a,b\in\mathbb{R}$,
by the relation
\[
W_{a,b}\varphi(x,y)=e^{-iNay/(2\pi)}\varphi(x-a,y-b),
\]
one observes that $\tilde{H}W_{a,b}=W_{a,b}\tilde{H}$. Note that
$W_{a_{1},b_{1}}W_{a_{2},b_{2}}=e^{iNa_{2}b_{1}/(2\pi)}%
W_{a_{1}+a_{2},b_{1}+b_{2}}$ but if
$\gamma_{1},\gamma_{2}\in(2\pi\mathbb{Z})^{2}$ and $N\in\mathbb{Z}$,
then $W_{\gamma_{1}}W_{\gamma_{2}}=W_{\gamma_{1}+\gamma_{2}}$.

As is well known, using the Fourier transformation in the $y$ variable
one can decompose $\tilde{H}$ into a direct integral whose components
are unitarily equivalent to the Hamiltonian of the harmonic oscillator
with parameter values: $\hbar=1$, the mass equals $1/2$ and the
frequency $\omega=2|N|/(2\pi)$. Consequently, one can express the
Green function of $\tilde{H}$ in terms of the Green function
$\mathcal{G}_{z}^{\mathrm{ho}}$ of the harmonic oscillator (with the
indicated values of parameters).  One has
\begin{equation}
  \tilde{\mathcal{G}}_{z}(x_{1},y_{1};x_{2},y_{2})
  =\frac{1}{2\pi}\,\int_{\mathbb{R}}\,\mathcal{G}_{z}^{\mathrm{ho}}
  \negmedspace\left(x_{1}+\frac{2\pi k}{N},x_{2}
    +\frac{2\pi k}{N}\right)
  e^{ik(y_{1}-y_{2})}\,\mbox{d}k,
  \label{eq:Gtilde_int_Gho}
\end{equation}
and
\[
\mathcal{G}_{z}^{\mathrm{ho}}(x_{1},x_{2})
=\frac{1}{\sqrt{2\pi\omega}}\,
\Gamma\!\left(\frac{1}{2}-\frac{z}{\omega}\right)
D_{-\frac{1}{2}+\frac{z}{\omega}}\!\left(\sqrt{\omega}\, x_{>}\right)
D_{-\frac{1}{2}+\frac{z}{\omega}}\!\left(-\sqrt{\omega}\, x_{<}\right)
\]
where $D_{\nu}(x)$ is the parabolic cylinder function,
$x_{>}=\max\{x_{1},x_{2}\}$, $x_{<}=\min\{x_{1},x_{2}\}$ (see
\cite{groschesteiner} and references therein).

Let us write a one-dimensional representation $\Lambda$ of $\Gamma$ in
the form
\[
\Lambda(\gamma)=e^{2\pi i(\mu m+\nu n)}\,\ \mbox{for\ }\gamma
=(2\pi m,2\pi n)\in(2\pi\mathbb{Z})^{2}
\]
where $\mu,\nu\in[\,0,1)$ (here we change the meaning of the symbol
$\mu$). The Hamiltonian $H^{\Lambda}$ as a differential operator has
the same form as that given on the RHS of (\ref{eq:Htilde_torus}) but
it acts in the Hilbert space $\sH^{\Lambda}$ formed by measurable
functions $\varphi$ on $\mathbb{R}^{2}$ which are square integrable
over the fundamental domain $D=(0,2\pi)^{2}$ and satisfy, almost
everywhere on $\mathbb{R}^{2}$,
\[
\varphi(x+2\pi,y)=e^{2\pi i\mu}e^{-iNy}
\varphi(x,y),\ \varphi(x,y+2\pi)=e^{2\pi i\nu}\varphi(x,y).
\]

The Green function of $H^{\Lambda}$ can be derived with the aid of
formula (\ref{eq:formG_trivializable}). Using
(\ref{eq:Gtilde_int_Gho}) and the rapid decay of the parabolic
cylinder functions,

\[
D_{\nu}(x)=e^{-x^{2}/4}\, x^{\nu}\left(1+O\left(x^{-1}\right)\right)\ \mbox{as\ }x\to+\infty,\]
one can apply the Poisson summation in (\ref{eq:formG_trivializable})
to arrive at the equality\begin{eqnarray}
\mathcal{G}_{z}^{\Lambda}(x_{1},y_{1};x_{2},y_{2}) & = & \frac{1}{2\pi}\,\sum_{s=0}^{|N|-1}\sum_{k\in\mathbb{Z}^{2}}\,\mathcal{G}_{z}^{\mathrm{ho}}\!\!\left(\! x_{1}\!+\!2\pi k_{1}\!+\!\frac{2\pi(s+\nu)}{N},x_{2}\!+\!2\pi k_{2}\!+\!\frac{2\pi(s+\nu)}{N}\right)\nonumber \\
\noalign{\medskip} &  & \times\, e^{ik_{1}(Ny_{1}-2\pi\mu)-ik_{2}(Ny_{2}-2\pi\mu)+i(s+\nu)(y_{1}-y_{2})}\,.\label{eq:formG_torus}\end{eqnarray}
Note that\begin{equation}
\mathcal{G}_{z}^{\Lambda}(x_{1},y_{1};x_{2},y_{2})=e^{i\nu y_{1}}\mathcal{G}_{z}^{\Lambda=1}\!\left(x_{1}+\frac{2\pi\nu}{N},y_{1}-\frac{2\pi\mu}{N};x_{2}+\frac{2\pi\nu}{N},y_{2}-\frac{2\pi\mu}{N}\right)e^{-i\nu y_{2}}.\label{eq:GL_G1_torus}\end{equation}

A formula analogous to (\ref{eq:formG_torus}) also holds for the
kernel $p^{\Lambda}(t;x_{1},y_{1};x_{2},y_{2})$ of the Schr\"odinger
semigroup $\exp(-tH^{\Lambda})$, $t>0$. Let $p^{\mathrm{ho}}(t;x_{1},x_{2})$
denote the kernel of the Schr\"odinger semigroup $\exp(-tH^{\mathrm{ho}})$,
$t>0$, where $H^{\mathrm{ho}}$ is the Hamiltonian of the harmonic
oscillator (for the proper choice of parameters). Using the formula
for the Schr\"odinger semigroups (replacing $\mathcal{G}_{z}(\ldots)$
by $p(t;\ldots)$ in (\ref{eq:formG_torus})) one can compute the
traces. After some simple manipulations one finds that\begin{eqnarray*}
\Tr\!\left(e^{-tH^{\Lambda}}\right) & = & \int_{0}^{2\pi}\int_{0}^{2\pi}p^{\Lambda}(t;x,y;x,y)\,\mbox{d}x\mbox{d}y\\
 & = & |N|\int_{\mathbb{R}}p^{\mathrm{ho}}(t;u,u)\,\mbox{d}u\\
 & = & |N|\Tr\!\left(e^{-tH^{\mathrm{ho}}}\right)\\
 & = & \frac{2|N|}{\sinh(t\omega/2)}\,.\end{eqnarray*}
This equality makes it possible to compare the spectra of $H^{\Lambda}$
and $H^{\mathrm{ho}}$. Thus one deduces that

\begin{equation}
  \spec H^{\Lambda}
  =\left\{ \frac{|N|}{\pi}\!\left(\ell+\frac{1}{2}\right)\!;\,
    \ell=0,1,2,\ldots\right\} \label{eq:spec_HL_torus}
\end{equation}
where the multiplicity of each eigenvalue of $H^{\Lambda}$ equals
$|N|$. Of course, the spectrum of $H^{\Lambda}$ can be also derived by
solving directly the corresponding eigenvalue equation
\cite{cdeverdiere,onofri,sakamototanimura,alhashimiwiese}.

Equality (\ref{eq:spec_HL_torus}) particularly implies that all operator
$H^{\Lambda}$, $\Lambda\in\hat{\Gamma}$, are mutually unitarily
equivalent (this is not true for $N=0$). The corresponding unitary
mapping can be immediately deduced from (\ref{eq:GL_G1_torus}). Consider
the linear mapping
\[
T^{\Lambda}:\sH^{\Lambda=1}
\to\sH^{\Lambda},\ T^{\Lambda}\varphi(x,y)
=e^{i\nu y}\,\varphi\!\left(x+\frac{2\pi\nu}{N},
  y-\frac{2\pi\mu}{N}\right).
\]
One readily verifies that $T^{\Lambda}$ is well defined, unitary and
$H^{\Lambda}=T^{\Lambda}H^{\Lambda=1}(T^{\Lambda})^{-1}$.

\section*{Acknowledgments}

The authors wish to acknowledge gratefully partial support from the
following grants: Grant No.\ 201/09/0811 of the Czech Science
Foundation (P.\v{S}.) and Grant No.\ LC06002 of the Ministry of
Education of the Czech Republic (P.K.).

\end{document}